\newcommand{\NN}{\mathbb{N}}
\newcommand{\RR}{\mathbb{R}}
\newcommand{\oo}[0]{\otimes}
\newcommand{\inv} {{-1}}
\newcommand{\g}{\mathfrak{g}}
\newcommand{\su}{\mathfrak{su}}
\newcommand{\sll}{\mathfrak{sl}}
\newcommand{\SU}{\mathrm{SU}}
\newcommand{\SO}{\mathrm{SO}}
\newcommand{\SL}{\mathrm{SL}}
\newcommand{\Spin}{\mathrm{Spin}}
\newcommand{\so}{\mathfrak{so}}
\newcommand{\spin}{\mathfrak{spin}}
\newcommand{\Hom}{\mathrm{Hom}}
\newcommand{\End}{\mathrm{End} \,}
\newcommand{\C}{\mathbb{C}}
\newcommand{\beq}{\begin{equation}}
\newcommand{\eeq}{\end{equation}}
\newcommand{\beqa}{\begin{eqnarray}}
\newcommand{\eeqa}{\end{eqnarray}}
\newcommand{\nn}{\nonumber}
\newcommand{\uqr}{U_q^{(res)}(\mathfrak{su}(2))}
\newcommand {\cres}{\mathcal C_{res}}
\newtheorem{theorem}{Theorem}[section]
\newtheorem{lemma}[theorem]{Lemma}
\newtheorem{proposition}[theorem]{Proposition}
\newtheorem{definition}[theorem]{Definition}
\newcommand{\ros}[0]{}
\begin{document}
\parskip 6pt
\parindent 0pt

\title{\Large\bf Quantum deformation of two four-dimensional spin foam models}

\author{Winston J. Fairbairn\footnote{winston.fairbairn@math.uni-erlangen.de}\ \,, \, Catherine Meusburger\footnote{catherine.meusburger@math.uni-erlangen.de}
\\ [4mm]
\itshape{\normalsize{Department Mathematik, Friedrich-Alexander-Universit\"at Erlangen-N\"urnberg,}} \\
\itshape{\normalsize{Cauerstra\ss e 11,  91058 Erlangen,}} \\
\itshape{\normalsize{Germany.}}
}

\date{February 5 2012}
\maketitle

\begin{abstract}
We construct the q-deformed version of two four-dimensional spin foam models, the Euclidean and Lorentzian versions of the EPRL model. The q-deformed  models 
are based on the representation theory of two copies of $U_q(\su(2))$ at a root of unity and on the quantum Lorentz group with a real deformation parameter. For both models we give a definition of the quantum EPRL intertwiners, study their convergence and braiding properties and construct an amplitude for the four-simplexes. We find that both of the resulting models are convergent.
\end{abstract}

\section{Introduction}

\subsubsection*{Background and motivation}
Manifold invariants of representation theoretic origin such as the Reshetikhin-Turaev invariant \cite{RTuraev} or the Turaev-Viro invariant \cite{Turaev}  play an important role in mathematical physics. In particular,  they  are of interest to  quantum gravity, where models of the Turaev-Viro type are known under the name of spin foam models. The basic principle in the construction of a $n$ dimensional spin foam model is to assign irreducible representations to the $(n-2)$-simplexes of a triangulated manifold and intertwining operators to the $(n-1)$-simplexes which intertwine the representations associated to their boundaries.
Higher-dimensional simplexes in the triangulation are then decorated with representation theoretical data constructed 
from these representations and intertwiners,  often referred to as ``amplitudes'' in the literature. After a summation over all such assignments of representations and intertwiners the product of the amplitudes assigned to each higher-dimensional simplex define a number, the partition function of the model. In some three- and higher-dimensional models, one can show that the partition function converges and is independent of the choice of the triangulation. It thus defines a manifold invariant.

Among the spin foam models in the literature,  one distinguishes models which are based on the representation theory of Lie groups (in the following referred to as ``classical models'') and models which are based on the representation theory of quantum groups. 
Compared to the classical models,  the latter  have several advantages. The most important one is that they  exhibit an improved convergence behaviour. When the relevant quantum group is a $q$-deformed universal enveloping algebra of a simple Lie algebra at a root of unity, its tilting modules form a non-degenerate finite semisimple spherical category \cite{tilt1,tilt2}. In the corresponding spin foam model, this has the effect of replacing infinite sums in the  classical models  by finite sums in the $q$-deformed models. 

 In the case where $q$ is not a root of unity, the representation theory of the $q$-deformed universal enveloping algebras is more complicated, but the associated spin foam models still often show an improved convergence behaviour compared to the classical ones. From the physics perspective, the $q$-deformed models can be viewed as regularisations of the corresponding classical models. They  remove divergences that occur when the representation labels grow large. Since these labels carry an interpretation in terms of lengths or areas of the simplicial complex, these divergences are often referred to as infrared.

Examples of this effect in three dimensions are the Turaev-Viro model \cite{Turaev}, which is based on the tilting modules for $U_q(\su(2))$ at a root of unity. It  can be viewed as a regularisation of the (divergent) Ponzano-Regge model \cite{ponzreg}, which is based on the representation theory of $\SU(2)$. A similar example in four dimensions is  the relation between the Crane-Yetter model \cite{yetter2,yetter3} and the Ooguri model \cite{ooguri}, which are based on the representation theory of,  respectively, $U_q(\su(2))$ at a root of unity and $\SU(2)$. The former defines a topological invariant of $4$-manifolds which can be expressed in terms of the signature and the Euler characteristic of the manifold \cite{Roberts}. 

Quantum groups also appear in constrained topological models of four-dimensional quantum gravity.  The idea underlying the construction of such models is the fact that gravity in dimensions  greater than three can be written as a constrained BF theory\footnote{The name BF theory stems from the letters $B$ and $F$ that were used for the field variables in the first formulation of the theory.} \cite{fd}.  Constrained topological models are an attempt of incorporating these constraints into a spin foam model by imposing appropriate restrictions on the 
representations and intertwiners that are assigned to the faces and tetrahedra of the triangulation.

The first such constrained topological model is the BC model due to Barrett and Crane \cite{BC,BC2}.  Similarly to the topological models, this model diverges for large values of the representation labels. Yetter \cite{yet4} and Noui and Roche \cite{PK} constructed  $q$-deformed analogues of the Euclidean and Lorentzian versions of the model based on, respectively,  the representation category of $U_q(\so(4))$ and $U_q(\so(3,1))$. Both $q$-deformed models exhibit enhanced convergence properties compared to their classical counterparts.

Recently, the implementation of the constraints reducing BF theory to gravity in 4d spin foam models was refined and improved. This led to the formulation of new constrained topological models by  Engle, Pereira, Rovelli and Livine (EPRL) \cite{eprlmod}  and Freidel and Krasnov (FK) \cite{Fkmodel}. These models incorporate a free parameter $\gamma$, called the Immirzi parameter. The limits $\gamma \rightarrow \infty$ and $\gamma \rightarrow 0$ yield respectively the BC and the flipped EPR model \cite{epr,epr2,epr3}.

These recent models exhibit interesting physical properties, such as an asymptotic behaviour related to the Regge action \cite{not1,not2, notts3, notts4, fc1,fc2}, but suffer from divergences for large area variables (see \cite{CS} for an analysis of these divergences). As originally suggested by Rovelli \cite{Carlo}, this indicates the need for $q$-deformed versions of these models, which still have the relevant physical properties but  have a better chance of convergence. Furthermore, there is evidence that such models could be related to gravity with a positive cosmological constant. The construction of these models is the aim of this article.

\subsubsection*{Main results}

In this article, we construct a $q$-deformed version of both the Euclidean  and Lorentzian EPRL model \cite{eprlmod}. 
 The formulation of the  Lorentzian model uses techniques from the representation theory of non-compact quantum groups  and applies them to the quantum Lorentz group with a real deformation parameter.  Using the results by Buffenoir and Roche \cite{PE,PE2}  on the harmonic analysis of the quantum Lorentz group, we construct a generalisation of the integral definition of the classical EPRL intertwiner to  the quantum Lorentz group by means of a Haar measure on the latter. This is followed by a detailed analysis  of the properties of the resulting intertwiner, including a proof of its convergence.  We also derive explicit expressions for its transformations under braiding, under which the quantum EPRL intertwiner turns out not to be invariant. We then define an amplitude for the $4$-simplexes of a triangulated four-manifold and construct the associated  quantum spin foam model. Remarkably, the model converges although it is based on representations of an infinite-dimensional Hopf algebra.

The Euclidean model  is formulated in terms of representation categories  rather than the language of  Hopf algebras. 
This structural difference with the Lorentzian model is explained by the fact that the Euclidean model is based on the representation theory of  (two copies of) $\uqr$ at a root of unity. While the representation theory simplifies considerably in this case -  the tilting modules \cite{tilt1,tilt2} of $\uqr$ at a root of unity define a non-degenerate, finite semisimple spherical category-  the corresponding Hopf algebra structures become complicated, and one is forced to enter the framework of 
 weak, quasi-Hopf algebras. The $q$-deformed generalisation of  the Euclidean EPRL intertwiner is therefore formulated in terms of the representation theory of $\uqr\otimes\uqr$ rather than in terms of Haar measures on Hopf algebras.
In this case, there are no convergence issues since there is only a finite number of representations with non-vanishing $q$-dimension. As in the Lorentzian case, the  quantum EPRL intertwiner does not appear to be invariant under braiding. After our discussion of the EPRL intertwiner, we then define the amplitude for a four simplex in the resulting model and show that it factorises into two quantum $15j$ symbols with fusion coefficients. This mirrors the behaviour of the  amplitude in the classical Euclidean EPRL model \cite{eprlmod}.
The resulting  $q$-deformed spin foam model is  again finite.

\subsection*{Outline of the paper}

The paper is organised as follows. Section $2$ is dedicated to the construction of the Lorentzian model. We start by reviewing essential notions on the quantum Lorentz group, its representation theory and  Harmonic analysis in Section \ref{qlorg}.  In Section  \ref{qeprl}  we show how the Lorentzian EPRL intertwiner can be generalised to the quantum Lorentz group  by means of a Haar measure on the quantum Lorentz group. We then prove an important convergence theorem that ensures that the quantum EPRL intertwiner is well-defined and investigate its behaviour of  under braiding. In Section \ref{qamp} we define an amplitude for the four-simplexes of a closed oriented triangulated four-manifold. This definition is given via a graphical calculus defined by means of an  invariant bilinear form on the representation spaces of the quantum Lorentz group. This naturally leads to the definition of the quantum spin foam model which is shown to converge for all triangulated manifolds. 

In Section $3$, we construct the $q$-deformed EPRL model for Euclidean signature. We start by summarising the relevant aspects of the quantum group $\uqr$ at a root of unity and of its representation theory in Section \ref{qrot}. In Section \ref{qso4} we give a brief description of the ``quantum rotation group'' $\uqr\otimes\uqr$ and its irreducible representations. 
This background is then used in the construction of the  Euclidean quantum  EPRL intertwiner in Section \ref{euceprl}. We analyse the properties of this intertwiner and then define the associated four-simplex amplitude in Section \ref{eucamp}. We conclude the discussion by proving  that this amplitude factorises into a quantum $15j$ symbol and that the associated  spin foam model  converges for all triangulated closed four-manifolds.
Section $4$ contains a discussion of the physical interpretation and properties of the two $q$-deformed EPRL models  as well as our  outlook and conclusions.

\section{The Lorentzian model}

\subsection{The quantum Lorentz group}
\label{qlorg}

In this subsection, we summarise the relevant definitions and results about the quantum Lorentz group following \cite{worpod, PE,PE2}. The quantum Lorentz group is an infinite-dimensional  ribbon Hopf algebra, which is obtained as the quantum double or Drinfel'd double of the $q$-deformed universal enveloping algebra $U_q(\su(2))$, where  
$q = e^{-\kappa} \in \, ] 0,1[$ is a {\em real} deformation parameter. 

\subsubsection{Hopf algebra structure}


\label{halgsec}

\paragraph{The Hopf algebra $U_q(\su(2))$.}

We start by introducing the Hopf algebra $U_q(\su(2))$, adopting the conventions from \cite{PE}.  The star Hopf algebra $U_q(\su(2))$ is the associative algebra generated multiplicatively by the elements
$q^{\pm J_z}$, $J_\pm$,  subject to the relations
\begin{align}\label{rels1}
q^{\pm J_z}q^{\mp J_z}=1, \qquad q^{J_z}J_\pm
q^{-J_z}=q^{\pm1} J_\pm, \qquad
[J_+,J_-]=\frac{q^{2J_z}-q^{-2J_z}}{q-q^\inv}.
\end{align}
The comultiplication, counit and antipode are given by
\begin{align}
\label{uqcomult2} &\Delta(q^{\pm J_z})=q^{\pm J_z}\oo q^{\pm
J_z}, &  &\Delta(J_\pm)=q^{-J_z}\oo J_\pm+J_\pm\oo q^{J_z},\\
\label{uqcounit2} &\epsilon(q^{\pm J_z})=1, &  &\epsilon(J_\pm)=0,\\
\label{uqantip2} &S(q^{\pm J_z})=q^{\mp J_z}, &   &S(J_\pm)=-q^{\pm
1} J_\pm,
\end{align}
and the star structure takes the form
\beq\label{lorstar}
\star q^{J_z} = q^{J_z}, \;\;\;\; \star J_{\pm} = q^{\mp 1} J_{\mp}.
\eeq
The star Hopf algebra $U_q(\su(2))$ is a ribbon algebra with universal $R$-matrix
\beq\label{lorr}
R = q^{2 J_z \otimes J_z} e_{q^{-1}}^{\left( (q - q^{-1}) q^{J_z} J_+ \otimes J_- q^{-J_z} \right)},
\eeq
where $e_{q^\inv}^{(z)}$ is the $q$-exponential:
\begin{align}\label{qexp}
e^{(z)}_{q^{-1}}=\sum_{k=0}^\infty  q^{\frac {k(k-1)} 2}\frac{z^k}{[k]!}, \qquad\text{where}\quad [k]=\frac{q^k-q^{-k}}{q-q^\inv}, \quad\text{and}\quad [k]!=[k][k-1]\cdots[2][1].
\end{align}
The ribbon element $v$ is defined by the identity
$v^2=uS(u)$, where $u$ is the invertible element $u = m((S \otimes 1)R')$ and 
 $m: U_q(\su(2)) \otimes U_q(\su(2))\rightarrow U_q(\su(2))$ denotes the multiplication map.
 The group-like element $\mu=u v^\inv$ is given by
\beq\label{mudef}
\mu = q^{2 J_z}.
\eeq

\paragraph{Representation theory of $U_q(\su(2))$} The representation theory of  $U_q(\su(2))$
with a  deformation parameter $q\in ]0,1[$ closely resembles the representation theory of the Lie group $\SU(2)$. 
Irreducible finite-dimensional $\star$-representations are labelled by ``spins'' $I\in\NN_0/2$ and by an additional  parameter  $\omega\in \{\pm 1, \pm i\}$. In the following we will restrict attention to  equivalence classes of unitarizable representations which are characterised by the condition $\omega=1$. 
As in the case of the Lie group $\SU(2)$,   the representation space $V_I$ of 
the  irreducible representation 
 $ \pi_{I}: U_q(\su(2)) \rightarrow\text{End}(V_I)$  is $(2I+1)$-dimensional. There exists an orthonormal basis $\{e^I_m\}_{m=-I,...,I}$  of the complex vector space $V_I$ in which the generators act according to
\begin{align}
\label{reps}
&\pi_{ I}(q^{J_z}) \, e^I_m=  q^{\pm m} \, e^I_m. \nonumber \\
&\pi_{I}(J_{\pm}) \, e^I_m= q^{\mp 1/2} \sqrt{[I \mp m] [I \pm m+1]} \, e^I_{m \pm 1}, 
\end{align}
where $[n]$ is defined as in \eqref{qexp}.  The fusion rules for the tensor products $V_I \otimes V_J$ resemble the ones for the representations of  $\SU(2)$. We have
\beq
\label{fusionunity}
V_{I} \otimes V_{J} \cong  \bigoplus_{K = | I - J |}^{I + J} V_K,
\eeq
where the isomorphism $\cong$ is given by the Clebsch-Gordan intertwining operators
\beq
C^{K}_{\; IJ}: V_I \otimes V_J \rightarrow V_K, \hspace{3mm} \mbox{and} \hspace{3mm} C^{IJ}_{\;\, K} : V_K \rightarrow V_I \otimes V_J.
\eeq
As all multiplicities in \eqref{fusionunity} are equal to one, 
these
 intertwiners are unique up to normalisation. 
They  are non-zero if and only if  $I + J - K$, $J+K-I$ and $K+I-J$ are non-negative integers.
Their coefficients with respect to the bases $\{e^I_m\}_{m=-I,...,I}$ are the  Clebsch-Gordan coefficients \ros
\beq
C^{IJ}_{\;\, K}(e^K_{m}) = \sum_{n,p} \left( \begin{array}{cc} n & p \\
                          I & J  \end{array} \right| \left. \begin{array}{c} K \\ m \end{array} \right)  e^I_{n} \otimes e^J_{p}, \hspace{3mm} \mbox{and} \hspace{3mm}
C^{K}_{\; IJ}( e^I_{n} \otimes e^J_{p}) = \sum_{m} \left( \begin{array}{c} m \\ K
 \end{array} \right| \left. \begin{array}{cc} I & J \\ n & p \end{array} \right)  e^K_{m}.
\eeq
To fix the phase of the Clebsch-Gordan coefficients, we impose reality conditions
\beq
\label{real}
\left( \begin{array}{cc} m & m \\
                          I & J  \end{array} \right| \left. \begin{array}{c} K \\ p \end{array} \right) = \left( \begin{array}{c} p \\ K
 \end{array} \right| \left. \begin{array}{cc} I & J \\ m & n \end{array} \right) \in \mathbb{R},
\eeq
and we use Wigner's convention to fix the remaining sign ambiguity.
The Clebsch-Gordan coefficients satisfy numerous relations. In the following we will frequently use  their first orthogonality property  and permutation symmetry \ros
\beq
\label{orthogonality}
\sum_{m,n} \left( \begin{array}{cc} m & n \\
                          I & J  \end{array} \right| \left. \begin{array}{c} K \\ p \end{array} \right) \,  \left( \begin{array}{c} q \\ L
 \end{array} \right| \left. \begin{array}{cc} I & J \\ m & n \end{array} \right) = \delta^K_L \, \delta^q_p, \qquad  \left( \begin{array}{c} p \\ K
 \end{array} \right| \left. \begin{array}{cc} I & J \\ m & n \end{array} \right) = \left( \begin{array}{c} -p \\ K
 \end{array} \right| \left. \begin{array}{cc} J & I \\ -n & -m \end{array} \right).
\eeq

In some parts of the paper, we will identify  the representation spaces $V_I$ with their duals  by means of  an  invariant bilinear form on $V_I$. 
We denote by   $V_I^*$ be the dual vector space of  the representation space $V_I$ in \eqref{reps} and by $\{e^{Im}\}_{m=-I,...,I}$ be the basis dual to the basis  $\{e^I_m\}_{m=-I,...,I}$:
$$e^{Im} (e^{I}_n) = \delta^m_n.$$ 
The antipode \eqref{uqantip2} associates to each representation $\pi_{I}$ on $V_I$ a representation
 $\pi_{I}^*$ on $V_I^*$ via 
$$
( \pi_{ I}^*(a) \, \alpha)( v )= \alpha \,(\pi_{I}(S(a)) \, v )\qquad \forall a \in U_q(\su(2)), \;\;  \forall \alpha \in V_I^*, v \in V_I.
$$
The representations $\pi_{ I}$ and $\pi_{ I}^*$  are equivalent. The equivalence is given by a bijective intertwiner $\epsilon_I : V_I \rightarrow V_I^*$.  Its matrix elements and those of the dual map $\epsilon^I : V_I^* \rightarrow V_I$ with respect to the bases $\{e^I_m\}_{m=-I,...,I}$, $\{e^{Im}\}_{m=-I,...,I}$ are defined by
\begin{align}
&\epsilon_I(e^I_m)=\epsilon_{I \, nm}e^{In}\qquad \epsilon^I(e^{Im})=\epsilon^{Inm}e^I_n.
\end{align}
Here and throughout the paper, we use Einstein summation convention: repeated upper and lower indices are summed over unless stated otherwise.
A short calculation using expression \eqref{uqantip2} and \eqref{reps} shows that  the matrix elements of $\epsilon_I$ and its dual are given  up to normalisation  by
\begin{align}
\label{epsexpl}
\epsilon_{I \, mn} = \epsilon^{I mn}  = c_I e^{i \pi(I-m)}q^m\delta_{m,-n},
\end{align}
where the constant is fixed to $c_I = e^{-i \pi I} v_I^{-1/2}$ and $v_I$ is given by the action of the ribbon element 
\eqref{ribbon} on $V_I$.
The  matrix elements satisfy the identities
\beq
\label{epsilon}
\epsilon_{I mn} \, \epsilon^{I n p} = v_I^{-1} \, \delta^p_n, \;\;\;\; \mbox{and} \;\;\;\;
\epsilon_{I mn} \, \epsilon^{I p n} = v_I^{-1} e^{2i \pi I} \, \pi_I(\mu)^p_{\;\; m}.
\eeq
The intertwiner $\epsilon_I: V_I\rightarrow V_I^*$ defines an invariant bilinear form on the vector space $V_I$  via
\begin{align}
\beta_I( e^I_m,e^I_n ) =\epsilon_{I mn}, 
\qquad \beta_I( v, \pi_{I}(a) w ) = \beta_I( \pi_{I}(S(a)) v,  w )  \quad\forall v,w\in V_I.
\end{align}

\paragraph{The Hopf algebra $F_q(\SU(2))$.}

The star  Hopf algebra $F_q(\SU(2))$ is the dual of the Hopf algebra $U_q(\su(2))$ and can be viewed as a quantum deformation of the algebra of polynomial functions on $\SU(2)$. 
A basis of $F_q(\SU(2))$ is given by the matrix elements $u^{\;\;m}_{I \;\; n}: U_q(\su(2))\rightarrow\C$ in the unitary irreducible representations  \eqref{reps} of $U_q(\su(2))$
\begin{align}
\label{matrixels}
u^{\;\;m}_{I \;\; n}(x)= e^{I m}\left(\pi_I(x)e^I_n\right), \qquad \forall x\in U_q(\su(2)).
\end{align} 
The pairing $\langle\,,\,\rangle: U_q(\su(2))\times F_q(\SU(2))\rightarrow \C$ between $U_q(\su(2))$ and $F_q(\su(2))$ takes the form
\begin{align}\label{pair}
\langle x , u^{\;\;m}_{I \;\; n} \rangle = \pi_I(x)^{m}_{\;\; n}.
\end{align}
The Hopf algebra structure of $F_q(\SU(2))$ induced by the one on $U_q(\su(2))$ via the pairing \eqref{pair}. In terms of the matrix elements $u^{\;\;m}_{I \;\; n}$, its algebra structure   is characterised by the relations \ros
\beq
\label{dualuqrels}
u^{\;\;m}_{I \;\; n} \,\cdot \, u^{\;\;p}_{J \;\; q} = \sum_{K,r,s} \left( \begin{array}{cc} m & p \\
                          I & J  \end{array} \right| \left. \begin{array}{c} K \\ r \end{array} \right)  \, u^{\;\;r}_{K \;\; s} \, 
                          \left( \begin{array}{c} s \\ K
 \end{array} \right| \left. \begin{array}{cc} I & J \\ n & q \end{array} \right), \qquad\qquad 1=u^{\;\;0}_{0\;\;0}.
 \eeq                                                  
Its comultiplication, counit  and antipode take the form
\begin{align}
\label{coFq}
&\Delta \left( u^{\;\;m}_{I \;\; n} \right) = \sum_p u^m_{I \;\; p} \otimes u^{\;\;p}_{I \;\; n},\\
&\epsilon(u^{\;\;m}_{I \;\; n}) = \delta^{\;\;m}_{I \;\; n}, \label{counFq}\\
&S(u^{\;\;m}_{I \;\; n}) = \epsilon_{I np} u^{\;\;p}_{I \;\; q} \epsilon_I^{-1qm},\label{antFq}
\end{align}
where $\delta^{\;\;m}_{I \;\; n}$ is the Kronecker symbol for the representation labelled by $I$ and the coefficients $\epsilon_{Inp}$ are given by \eqref{epsexpl}. The star structure is given by
\begin{align}\label{fqstar}\star u^{\;\;m}_{I \;\; n} = S(u^{\;\;\;m}_{I \; n}).\end{align}

As any irreducible representation of $U_q(\su(2))$  can be obtained by tensoring the fundamental  representation labelled by $I=1/2$, a set of multiplicative generators of $F_q(\SU(2))$ is given by the matrix elements $a,b,c,d$ in the fundamental representation
$$u_{\frac{1}{2}} = \left( \begin{array}{cc} a & b \\ c & d \end{array}\right).$$
 They generate $F_q(\SU(2))$ multiplicatively subject to the relations
\begin{align}
&q ab=ba,  &  &q ac = ca, & &q bd = db, & &q cd = dc, \\
&bc=cb, &  &ad - da = (q^{-1} - q) bc, & &ad - q^{-1} bc = 1.\nonumber
\end{align}
In terms of these generators, the comultiplication, counit and antipode are given by
\begin{align}
\label{fqcomult} &\Delta(a)=a\oo a+b\oo c,\; \Delta(b)=a\oo b+b\oo
d, \; \Delta(c)=c\oo a+ d\oo c, \; \Delta(d)=c\oo b + d\oo d, \\
\label{fqcounit}
&\epsilon(a)=\epsilon(d)=1, \qquad\epsilon(b)=\epsilon(c)=0,\\
\label{fqantip} &S(a)=d, \quad S(d)=a, \quad S(b)=-q b, \quad S(c)=-q^{-1}
c,
\end{align}
and the pairing takes the form
\begin{align}
\label{uqfqpair} \langle q^{\pm J_z}, a \rangle=q^{\pm 1/2}, \quad\langle
q^{\pm J_z}, d \rangle=q^{\mp 1/2}, \quad \langle J_+, b \rangle=1, \quad
\langle J_-, c \rangle=1.
\end{align}

\paragraph{The quantum Lorentz group $D(U_q(\su(2)))$.}

The quantum Lorentz group is the quantum double of $U_q(\su(2))$.
It is given as the star
Hopf algebra
$$
A := \mathcal{D} \hspace{0.3mm} (U_q(\su(2))) = U_q(\su(2)) \,  \hat{\otimes} \, F_q(\SU(2))^{op},
$$
where 
 $F_q(\SU(2))^{op}$ is the Hopf algebra $F_q(\SU(2))$ with opposite coproduct, and the symbol
  `$\hat{\otimes}$'  indicates that the Hopf subalgebras $U_q(\su(2)) \otimes 1$ and $1 \otimes F_q(\SU(2))^{op}$ do not commute inside  $\mathcal{D} \hspace{0.3mm} (U_q(\su(2)))$. 
The algebra structure is given by \eqref{rels1}, \eqref{dualuqrels} together with mixed relations,  that are most easily given in terms of the multiplicative generators $a,b,c,d$ of $F_q(\SU(2))$, see for instance the appendix of \cite{PE}. In terms of these variables and the standard generators of $U_q(\su(2))$ they take the form
 \begin{align}
&q^{J_z}c=q cq^{J_z} & &q^{J_z}b=q^{-1} bq^{J_z} & &[q^{J_z}, a]=0 & &[q^{J_z},d]=0\\
&[J_+,c]=0 & &[J_+, b]=q^\inv(q^{J_z} a-q^{-J_z} d) &  &[J_-, c]=q(q^{J_z} d- q^{-J_z} a)  & &[J_-,b]=0\nonumber\\
&a J_+\!=\!q J_+a\!+\!q^{-J_z} c & &J_+ d\!=\! q d J_+ \!+\! c q^{J_z} & &J_- a\!=\! q^\inv a J_-\!+\!bq^{J_z} & &dJ_-\!=\!q^\inv J_-d\!+\!q^{-J_z}b\nonumber.
 \end{align}
  The comultiplication, counit and antipode for $D(U_q(\su(2)))$ are given by equations  \eqref{uqcomult2}, \eqref{uqcounit2}, \eqref{uqantip2} for the Hopf subalgebra $U_q(\su(2))\subset D(U_q(\su(2)))$  and by the opposite of the comultiplication in \eqref{coFq}, the counit in \eqref{counFq} and the inverse\footnote{The antipode of a Hopf algebra $H^{op}$ with the opposite coproduct is the inverse of the antipode of the Hopf algebra $H$.}   of the antipode \eqref{antFq} for the Hopf subalgebra $F_q(\SU(2))^{op}\subset D(U_q(\su(2)))$.
The star structure is given by \eqref{lorstar} and by \eqref{fqstar},  where the antipode  in \eqref{fqstar} is replaced by its inverse.
As it is a quantum double, the quantum Lorentz group is a braided Hopf algebra. We will describe its universal $R$-matrix after introducing  its double dual  in Section \ref{irrepslor} below.

The quantum Lorentz group $D(U_q(\su(2)))$ is a quantum deformation of the universal enveloping algebra of the real Lie algebra $\mathfrak{sl}(2,\C)_{\mathbb{R}} \cong \mathfrak{so}(3,1)$. This is a direct consequence of the quantum duality principle, see for instance \cite{PE,kassel,CP, KS}.
Recall that the coalgebra structure on the deformed enveloping algebra $U_q(\mathfrak{g})$ of a Lie algebra $\mathfrak{g}$ induces a Lie algebra structure of the dual vector space $\mathfrak{g}^*$. The principle of quantum duality states that the $q$-deformed universal enveloping algebra $U_q(\mathfrak{g}^*)$ of $\mathfrak{g}^*$ is given by
$U_q(\mathfrak{g}^*) = (U_q(\mathfrak{g}))^*$. On the level of quantised function spaces, one obtains that the Hopf algebra of quantum deformations of polynomial functions on the group $G$ is given by
 $F_q(G) = (U_q(\mathfrak{g}))^*$, where $\mathfrak g=\text{Lie}(G)$. 

In the case at hand, this yields  $F_q(\SU(2)) = U_q(\mathfrak{su}(2))^* = U_q(\mathfrak{su}(2)^*) \cong U_q(\mathfrak{an}(2))$  \cite{PE}, where  $\mathfrak{an}(2)$ is the Lie algebra of the group  $\mathrm{AN}(2) = \mathrm{A}(2) \times \mathrm{N}(2)$. Here, $\mathrm{A}(2)$ denotes the group of  diagonal positive $2\times 2$-matrices of determinant one and $\mathrm{N}(2)$ is the nilpotent group of lower triangular two by two matrices with diagonal elements equal to one.
The quantum double construction is therefore the quantum analogue of the Iwasawa decomposition of the classical Lorentz algebra $\mathfrak{sl}(2,\C)_{\mathbb{R}} \cong \su(2) \oplus \mathfrak{an}(2)$, and we will use the notation $D(U_q(\su(2)))= U_q(\mathfrak{sl}(2,\C)_{\mathbb{R}})$.

\subsubsection{Irreducible representations, duals and $R$-matrix}
\label{irrepslor}

\paragraph{Irreducible representations of the quantum Lorentz group}
The irreducible unitary representations of $A = U_q(\mathfrak{sl}(2,\mathbb{C})_\mathbb{R})$  were first classified by 
Pusz \cite{Pusz}. In this paper, we will only consider  the representations of the principal series. These representations are labelled by a couple 
$\alpha=(n,p)$ with  $n\in \mathbb{Z}/2
$ and $p\in [0,\frac{4 \pi}{\kappa} [$ or with $n=0$ and $p \in [0,\frac{2\pi}{\kappa} ]$.
We denote by $(\pi_{\alpha},V_{\alpha})$ the representation of $U_q(\mathfrak{sl}(2,\mathbb{C})_\mathbb{R})$ labelled by $\alpha$. It is a Harish-Chandra representation which decomposes into representations of $U_q(\mathfrak{su}(2))$ as follows 
\beq
\label{qrepresentation}
V_{\alpha} = \bigoplus_{I= \mid n \mid}^{\infty} V_I,
\eeq
where $V_I$ is the left $U_q(\mathfrak{su}(2))$-module \eqref{reps}.  A basis of the infinite dimensional vector space $V_{\alpha}$ is given by  $\{e^I_m \mid I\in \mathbb{N}_0, I\geq \mid n \mid, m = -I, ..., I \}$ where, for fixed $I$, $\{e^I_m\}_{m=-I,...,I}$ is the basis of $V_I$ defined before equation \eqref{reps}. In terms of this basis, the action of $D(U_q(\su(2)))$ on the representation space $V_\alpha$ is given by equation \eqref{reps} for the action of $U_q(\su(2))$ and the following action of $F_q(\su(2))$ \ros
\begin{align}
\label{repQLGmod}
\pi_{\alpha}(u^{\;\;b}_{J \;\; b'}) \; e^L_c = \sum_{M,N= \mid n \mid}^\infty e^N_{c'} \; 
 \left( \begin{array}{cc} c' & b \\
                          N & J  \end{array} \right| \left. \begin{array}{c} M \\ d \end{array} \right)                            \left( \begin{array}{c} d \\ M
 \end{array} \right| \left. \begin{array}{cc} J & L \\ b' & c \end{array} \right) \; \Lambda^{JM}_{NL}(\alpha), 
\end{align}
where $\Lambda^{JM}_{NL}(\alpha) = \Lambda^{JM}_{LN}(\alpha)$ are complex numbers defined in terms of analytic 
continuations of $6j$ symbols for $U_q(\su(2))$. As their expressions are lengthy and complicated, we will not give 
them here but refer the reader to \cite{PE}, where they are derived explicitly, and to \cite{PE2} where their properties 
are studied in depth.

By introducing a hermitian form $( \, , \, )_{\alpha}$ for which the  basis $\{e^I_m\}_{I,m}$ is orthonormal
$$
(e^I_m, e^J_n)_\alpha=\delta^{IJ}\delta_{mn},
$$
the representation spaces $V_{\alpha}$ can be given a pre-Hilbert space structure. Its completion $\mathcal{H}_{\alpha}$ with respect to the associated norm is  separable Hilbert space with Hilbert basis $\{e^I_m\}_{I,m}$.
The representations of the principal series are unitary in the sense that for all $v,w$ in $V_{\alpha}$ and for all $a$ in $A$, $( \pi_{\alpha}(\star a) v, w )_{\alpha} = (  v, \pi_{\alpha}(a) w )_{\alpha}$, that is, $\pi_{\alpha}(\star a) = \pi_{\alpha}(a)^{\dagger}$.

Note that 
the finite-dimensional unitary representations of the quantum Lorentz group do not form a ribbon category. As the representation spaces are infinite-dimensional, there is no notion of (quantum) trace. As in the classical case, this poses a considerable obstacle for the definition of the amplitude for the four-simplexes of the EPRL model and the definition of a consistent graphical calculus for the quantum Lorentz group. We will come back to this point in the sequel.

\paragraph{Algebra of functions on the quantum Lorentz group.}
Matrix elements  of the principal representations of the quantum Lorentz group  are linear forms on $U_q(\mathfrak{sl}(2,\mathbb{C})_\mathbb{R})$ and hence elements of its dual $F_q(\SL(2,\C)_{\mathbb{R}})$. 
In the following, we will therefore  consider the Hopf algebra of functions  $F_q(\SL(2,\C)_{\mathbb{R}})$  on the quantum Lorentz group. 
The algebra $F_q(\SL(2,\C)_{\mathbb{R}})$ decomposes as
$F_q(\SL(2,\C)_{\mathbb{R}}) = F_q(\SU(2)) \hat \otimes F_q(\mathrm{AN}(2))$ \cite{PE}, where the Hopf algebra $F_q(AN(2))$ is interpreted as  the space of (compactly supported) functions on $\mathrm{AN}(2)$. It is the dual  of the Hopf algebra $F_q(\SU(2))^{op}$ and hence can be identified with  the Hopf algebra $(F_q(\SU(2))^*)_{op}$ with the opposite multiplication. 

Denoting by $u^{\;\;m}_{I \;\; n}$ the basis of $F_q(\SU(2))$ introduced in \eqref{matrixels} and by $E^{\;\;a}_{I \;\; b}$ the elements of its dual basis given by
$$
\langle E^{\;\;a}_{I \;\; b} \, , u^{\;\;c}_{J \;\; d} \rangle = \delta_{IJ} \delta^a_d \delta^c_b,
$$
one finds that a convenient basis 
of $F_q(\SL(2,\C)_{\mathbb{R}})=F_q(\SU(2)) \hat \otimes F_q(\mathrm{AN}(2))$ is provided by the elements 
 $\{x_A\}_A = \{u^{\;\;a'}_{I \;\; a} \otimes E^{\;\;b'}_{J \;\; b}\}_{I,J,a,a',b,b'}$.
The star Hopf algebra structure of $F_q(\SL(2,\C)_{\mathbb{R}})$ follows immediately from the one of $D(U_q(\su(2)))=U_q(\su(2))\hat{\otimes}F_q(SU(2))^{op}$ via the duality principle. 
In particular, one finds that the multiplication on $F_q(\mathrm{AN}(2))\subset F_q(SL(2,\C)_\RR)$
takes the form
\beq
\label{dualAN2rels}
E^{\;\;a}_{I \;\; b} \, E^{\;\;c}_{J \;\; d} = \delta_{IJ} E^{\;\;a}_{I \;\; d} \, \delta^c_b.
\eeq

\paragraph{Universal $R$-matrix and braiding.}
To obtain a simple expression for the universal-$R$ matrix of $D(U_q(\su(2)))$, it is convenient
to work with the Hopf algebra
 $F_q(\SL(2,\C)_{\mathbb{R}})^*$ introduced in \cite{PE}, which can be viewed as the double dual of $D(U_q(\su(2)))=U_q(\mathfrak{sl}(2,\mathbb{C})_\mathbb{R})$. As the Hopf algebra $U_q(\mathfrak{sl}(2,\mathbb{C})_\mathbb{R})$  is infinite-dimensional, this is not identical to $U_q(\mathfrak{sl}(2,\mathbb{C})_\mathbb{R})$ but 
 contains $U_q(\mathfrak{sl}(2,\mathbb{C})_\mathbb{R})$ as a Hopf subalgebra \cite{PE}. It factorises as $F_q(\SL(2,\C)_{\mathbb{R}})^* = F_q(\SU(2))^* \hat{\otimes} F_q(\SU(2))^{op}$. A convenient basis of $F_q(\SL(2,\C)_{\mathbb{R}})^*$ is given by the basis $\{x^A\}_A = \{X^{\;\;a}_{I \;\; a'} \otimes g^{\;\;b}_{J \;\; b'}\}_{I,J,a,a',b,b'}$ dual to $\{x_A\}_A$.
 In terms of this basis,  the universal  $R$-matrix of $D(U_q(\su(2))$ takes a particularly simple form, namely \ros
\beq
\label{Rmatrix}
R = \sum_{I,a,a'} X^{\;\;a}_{J \;\; a'} \otimes 1 \otimes 1 \otimes g^{\;\;a'}_{J \;\; a}.
\eeq
To derive an explicit expression for the braiding, we note that
if $\pi_{\alpha}$  is a principal representation of $U_q(\mathfrak{sl}(2,\mathbb{C})_\mathbb{R})$  on $V_{\alpha}$, there is a unique representation \cite{PE} of $F_q(\SL(2,\C)_{\mathbb{R}})^*$  on $V_{\alpha}$, which will also be denoted by $\pi_{\alpha}$. Its action on the  basis elements $e^I_m$ of the preHilbert space $V_{\alpha}$ is given by \ros
\beqa
\label{repQLG}
\pi_{\alpha}(X^{\;\;a}_{I \;\; a'}) e^L_c &=& \delta_{IL} \delta^a_c e^L_{a'} \\
\pi_{\alpha}(g^{\;\;b}_{J \;\; b'}) \; e^L_c &=& \sum_{M,N=\mid n \mid}^{\infty} e^N_{c'} \; 
 \left( \begin{array}{cc} c' & b \\
                          N & J  \end{array} \right| \left. \begin{array}{c} M \\ d \end{array} \right)                            \left( \begin{array}{c} d \\ M
 \end{array} \right| \left. \begin{array}{cc} J & L \\ b' & c \end{array} \right) \; \Lambda^{JM}_{NL}(\alpha), \nn
\eeqa
where $\Lambda^{JM}_{NL}(\alpha)$ are the coefficients from \eqref{repQLGmod}.
It  is then immediate to obtain explicit expressions for the action of the universal $R$-matrix in these representations: \ros
\beq
\label{repR}
(\pi_{\alpha} \otimes \pi_{\beta})(R) \, e^I_c \otimes e^J_d = \sum_{K=\mid n \mid}^{\infty} \sum_{L=\mid n' \mid}^{\infty} e^I_e \otimes e^L_f \; 
 \left( \begin{array}{cc} f & e \\
                          L & I  \end{array} \right| \left. \begin{array}{c} K \\ g \end{array} \right)                            \left( \begin{array}{c} g \\ K
 \end{array} \right| \left. \begin{array}{cc} I & J \\ c & d \end{array} \right) \; \Lambda^{IK}_{LJ}(\alpha),
 \eeq
 where $\alpha=(n,p)$ and $\beta=(n',p')$. Note that although these sums are infinite, there is only a finite number of non-zero terms \cite{PE2}. Consequently, there are no  issues with convergence.

\subsection{The quantum EPRL intertwiner}
\label{qeprl}

We are now ready to  construct the generalised EPRL model associated with the quantum Lorentz group.   The first step is to define the state space associated to the three-simplexes of a triangulated $4$-manifold $M$. This is the vector space of EPRL intertwiners 
between the four EPRL representations associated
 with its  boundary triangles  and the trivial EPRL representation on  $\C$.

 \subsubsection{Quantum EPRL representations}
 
A central ingredient in the construction  is the generalisation  of the notion of an EPRL representation to the quantum Lorentz group. 
 In the classical model, an EPRL representation assigns a representation of $\SL(2,\C)_{\mathbb{R}}$ to each finite-dimensional representation of $\SU(2)$. This assignment can be viewed as a lift from the representation category of $\SU(2)$ to the representation category of $\SL(2,\C)_{\mathbb{R}}$, and is given by the prescription
 $$
 K\mapsto(n(K),p(K))=(K, \gamma K),
 $$
 where $K\in\NN_0/2$ labels the irreducible representations of $\SU(2)$ and $\gamma\in\RR^+$ is the Immirzi parameter.  In the quantum Lorentz group, the $\SU(2)$-representations in the classical model 
 correspond to irreducible representations of the Hopf algebra $U_q(\su(2))$, which are labelled by a parameter $K\in\NN_0/2$. The inclusion
 $
 \SU(2)\rightarrow \SL(2,\C)_{\mathbb{R}} \cong \SU(2) \times \mathrm{AN}(2)$, $g\mapsto (g,a)$
 is replaced by the inclusion
 $
 U_q(\su(2))\rightarrow D(U_q(\su(2)))= U_q(\su(2))\hat{\otimes} F_q(\SU(2))$, $a\mapsto a\otimes 1$.
 It is thus natural to require that that quantum EPRL representations are given by a tensor functor between the 
 category  of representations of $U_q(\su(2))$ and the representation category of the quantum Lorentz group, which is compatible with this inclusion of $U_q(\su(2))$ into $D(U_q(\su(2)))$. The image of the functor is a subset of representations of $A$ called EPRL representations.  The analogy with the classical case suggests that this functor should act on the irreducible representations of $U_q(\su(2))$ according to
 $$
 K\in\NN_0/2 \mapsto \alpha(K)=(n(K), p(K))=(K,\gamma K),
 $$
where $\alpha(K)$ labels an irreducible unitary principal series representation of the quantum Lorentz group and $\gamma>0$ is the Immirzi parameter.  Note, however, that the parameter $p$
 in the representation labels $\alpha=(n,p)$ of the principal series is restricted to the interval $[0, 4\pi/\kappa[$.  To obtain a consistent definition, it is thus necessary to 
   restrict the preimage of this  functor, i.~e.~the representation label $K$. Remark also that there is no loss of generality in considering only the case where $n(K)$ is positive since the representations $(n,p)$ and $(-n,-p)$ are equivalent. 
  We are now led to the following definition. 
   
\begin{definition} {\bf (EPRL representations)}
Let $\gamma$ be a fixed real positive parameter (the Immirzi parameter) and consider the subset of irreducible representations of $U_q(\su(2))$ labelled by
$$
\mathcal{L} = \mathbb{N}/2 \, \cap \, [0, 4 \pi/ \gamma \kappa[.
$$
The Lorentzian EPRL representation  of  spin $K\in \mathcal{L}$ is the principal representation of $D(U_q(\su(2)))$ labelled by
$$
\alpha(K) = (n(K),p(K)) := (K,\gamma K) 
$$
\end{definition}

The restriction of the representation labels of $U_q(\su(2))$ to the label set  $\mathcal{L}$  ensures that the Lorentzian EPRL representation $\alpha(K)$ is a principal representation of the quantum Lorentz group, which would not be the case otherwise.
It decomposes into irreducible representations  of $U_q(\su(2))$  as follows
\beq
\label{decompL}
V_{\alpha(K)} = \bigoplus_{J=K}^{\infty} V_J.
\eeq

\subsubsection{Quantum EPRL intertwiners}

We are now ready to generalise the notion of quantum EPRL intertwiner to the quantum Lorentz group. This requires a generalisation of 
integral expressions of the type
\beq
\label{int}
T_{V[\rho]} = \int_G d X \left( \bigotimes_{i=1}^n \pi_{\rho_i} \right)(X),
\eeq
where  $G$ is a compact, unimodular  Lie group with Haar measure $dX$,  $\rho_i : G \rightarrow \End V_{\rho_i}$  are  unitary irreducible representations of $G$ and
and $V[\rho] := \bigotimes_{i=1}^n V_{\rho_i}$. 
The generalisation of such integrals to $q$-deformed universal enveloping algebras $U_q(\mathfrak g)$ and the associated $q$-deformed function spaces $F_q(G)=U_q(\mathfrak g)^*$  requires the notion of a Haar measure or biinvariant normalised integral on $F_q(G)$. 
This  is  a non-degenerate linear form  $h : F_q(G) \rightarrow \C$ that satisfies the identities
\begin{align}\label{invar}
(h \otimes id)\Delta(x) = h(x) 1 \;\;\;\; \mbox{and} \;\;\;\; (id \otimes h)\Delta(x) = h(x) 1\qquad \forall x \in F_q(G).
\end{align}
For a pedagogical introduction, see for instance \cite{majid}.
The identities \eqref{invar}  imply that a Haar measure on $F_q(G)$  is invariant under the left-  and right-action of $U_q(\g)$ on $F_q(G)$.  For given $a\in U_q(\mathfrak g)$, the left- and right- action
$L_a , R_a: F_q(G) \rightarrow F_q(G)$  are defined via the pairing $\langle\,,\,\rangle$  between $U_q(\g)$ and $F_q(G)$:
\begin{align}\label{actions}
 \langle L_a f, b \rangle =  \langle f, a b \rangle\qquad \langle R_a f, b \rangle =  \langle f, ba \rangle\qquad\forall b\in U_q(\g), f\in F_q(G).
\end{align}
 The invariance property \eqref{invar} then implies  
\begin{align}\label{haarinvar}
h(L_a f) = h (R_a f) = \epsilon(a) h(f)\qquad \forall f \in F_q(G), a\in U_q(\g).
\end{align}
The  Haar measure is thus invariant under the 
 left- and right-action of the Hopf algebra  $U_q(\g)$ on its dual $F_q(G)$, and this invariance can be viewed as a a generalisation  of  the left- and right-invariance of the Haar measure on a unimodular  Lie group.

Given a Haar measure on $F_q(G)$, it is straightforward to generalise expression \eqref{int} to the representations of $U_q(\g)$ \cite{PE2}, \cite{PK}. For this, one introduces a basis   $\{x_A\}_{A}$ of $F_q(G)$  and denotes by $\{x^A\}_{A}$ the associated dual basis of $F_q(G)^*$. 
Given irreducible representations $\rho_i: U_q(\g)\rightarrow \text{End}(V_{\rho_i})$, $i=1,...,n$, of $U_q(\g)$, one obtains a representation
 $\rho: U_q(\g)\rightarrow\text{End}(V_\rho)$ of $U_q(\g)$ on the tensor product $V[\rho]=\bigotimes_{i=1}^n V_{\rho_i}$ of the associated representation spaces
$$
\rho(a)=\left(\bigotimes_{i=1}^n \rho_i\right)\circ \Delta^{(n-1)}(a)\qquad \forall a\in U_q(\g),
$$
where $\Delta^{(n)}$ denotes the $n$-fold coproduct in $U_q(\g)$: $\Delta^{(n)}=(\Delta\otimes \text{id}^{\otimes (n-1)}) \circ \Delta^{(n-1)}$ for $n>1$, $\Delta^{(1)}=\Delta$.
The integral \eqref{int} then corresponds to  the expression
\begin{align}\label{classgenq}
T_{V[\rho]} = \sum_{A}  \left( \bigotimes_{i=1}^n {\rho_i} \right)\circ (\Delta^{(n-1)}(x^A)) \, h(x_A).
\end{align}
Note, however, that the existence of a Haar measure on $F_q(G)$ and the convergence of expression \eqref{classgenq} is not guaranteed a priori in the case where $U_q(\g)$ and $F_q(G)$ are infinite-dimensional.

In the case of the quantum Lorentz group, we consider the $q$-deformed universal enveloping algebra  $D(U_q(\su(2)))=U_q(\sll(2,\C)_\RR)$ and its dual $F_q(SL(2,\C)_\RR)$. It is shown in \cite{PE,PE2}  that a Haar measure on $F_q(SL(2,\C)_{\RR})$
is obtained from the decomposition $F_q(SL(2,\C)_\RR)=F_q(\SU(2))\hat \otimes F_q(AN(2))$ introduced in Section \ref{irrepslor}.  It is  unique up to normalisation and given as the tensor product of  the unique normalised biinvariant integral $h_1$ on $F_q(\SU(2))$ and a right-invariant integral $h_2$ on 
$F_q(\mathrm{AN}(2))$. On the basis elements $u^{\;\;a}_{I \;\; a'}$ of $F_q(SU(2))$ and the basis elements $E^{\;\;b}_{J \;\; b'}$ of $F_q(AN(2))$ the latter  take the form
\beq
h_1(u^{\;\;a'}_{I \;\; a}) = \delta_{I,0}, \;\;\;\; \mbox{and} \;\;\;\; h_2(E^{\;\;b'}_{J \;\; b}) = [2J+1] \pi_J(\mu^{-1})^{b'}_{\;\; b}.
\eeq
The  Haar measure on $F_q(SL(2,\C)_\RR)$ is therefore given by
\beq
\label{integral}
h(u^{\;\;a'}_{I \;\; a} \otimes E^{\;\;b'}_{J \;\; b}) = (h_{1} \otimes h_2) (u^{\;\;a'}_{I \;\; a} \otimes E^{\;\;b'}_{J \;\; b}) = \delta_{I0} \delta^a_0 \delta_{a'}^0 \; [2J+1] \pi_J(\mu^{-1})^{b'}_{\;\; b},
\eeq
where $\pi_J(\mu^\inv)^{b'}_{\;\;b}$ denotes  the matrix elements of the group-like element $\mu^\inv\in U_q(\su(2))$ in \eqref{mudef} in the $U_q(\su(2))$ representation labelled by  $J\in\NN_0/2$
$$
\pi_J(\mu^\inv)^{b'}_{\;\;b}=q^{-2b} \delta^{b'}_{\;\;b}
$$
This reduces the summation over the basis in \eqref{classgenq} to a summation over the parameters $J\in\NN_0/2$ and $b\in\{-J,...,J\}$. It is shown in \cite{PE2} that the associated expression for  $T_{V_\rho}$ in \eqref{classgenq} converges for $n=3$ and defines an intertwining map
$T_{V[\rho]}:  \bigotimes_{i=1}^n V_{\rho_i}\rightarrow \bigotimes_{i=1}^n V_{\rho_i}$.

This map allows us to generalise the notion of an EPRL intertwiner to the quantum Lorentz group. 
For this, we consider 
the projection $f_{\alpha}^K : V_{\alpha} \rightarrow V_K$, which maps the EPRL representation labelled by $
\alpha(K)$ to the lowest weight factor in  the decomposition \eqref{decompL} of  $V_{\alpha(K)}$. In terms of the basis $\{e^I_m(\alpha)\}$ of $V_{\alpha(K)}$ and the basis $\{e^K_m\}$ of $V_K$, this projector takes the form
\beq
\label{projection}
f_{\alpha}^K (e^I_m(\alpha)) = e^K_m \delta^{IK} 
\eeq
We will also make use of the inclusion maps
$f^{\alpha}_K : V_K \rightarrow V_{\alpha}$ associated to the  the decomposition \eqref{decompL} which satisfy the identity $f_{\alpha}^{K'}  \circ f^{\alpha}_K = id_K \delta_{K}^{K'}$.
The dual $(f^*)_K^{\alpha} : \mathcal{L}(V_K,\C) \rightarrow \mathcal{L}(V_{\alpha},\C)$ of the projector $f^K_\alpha$  defines an embedding of the vector space $\mathcal L(V_K,\C)$ of linear forms on $V_K$ into the vector space  $\mathcal L(V_\alpha,\C)$ of linear forms on $V_{\alpha}$. It defines one of the building blocks in the 
construction of an embedding
$$
\Hom_{U_q(\su(2))}(\otimes_{i=1}^n V_{K_i} , \C) \rightarrow \Hom_{D(U_q(\su(2)))}(\otimes_{i=1}^n V_{\alpha_i(K_i)} , \C), 
$$
of the vector space of $n$-valent $U_q(\su(2))$) intertwiners into the vector space of $n$-valent intertwiners for  the quantum Lorentz group. This embedding  map will be noted $f^*$. 
Given an element $\Lambda\in \Hom_{U_q(\su(2))}(\otimes_{i=1}^n V_{K_i} , \C)$, we define the quantum EPRL intertwiner associated with $\Lambda$ as  the image $\iota = f^*_{K}(\Lambda)$ of $\Lambda$ under  $f^*_K$.  This yields the following definition.

\begin{definition} {\bf (Quantum EPRL intertwiner)}
Let $K = (K_1, ..., K_n)$ be a $n$-tuple of elements of the label set $\mathcal{L}=\NN/2 \, \cap \, [0,4\pi/\gamma\kappa[$ and $V[K] = \bigotimes_{a=1}^n V_{K_a}$ be the corresponding representation space of $U_q(\su(2))$. Denote by  $\alpha = (\alpha_1(K_1),...,\alpha_n(K_n))$ the associated $n$-tuple $\alpha = (\alpha_1(K_1),...,\alpha_n(K_n))$ of EPRL representations and by $V[\alpha] = \bigotimes_{i=1}^n V_{\alpha_i}$ the tensor product of their  representation spaces. The quantum EPRL 
intertwiner $\iota_{\alpha} = f^*(\Lambda_K)$ associated to an intertwiner $\Lambda_{K}$  in  $\mathrm{Hom}_{U_q(\su(2))}(V[K], \C)$
 is  the linear  map $\iota_{\alpha} : V[\alpha] \rightarrow \C$   
\begin{align}
\label{quintertwiner}
\iota_{\alpha}= \sum_{A}  \Lambda_{K} \circ  f^{K}_{\alpha} \circ \left( \bigotimes_{i=1}^n \pi_{\alpha_i(K_i)} (\Delta^{(n-1)}(x^A)) \right) h(x_{A}),
\end{align}
where the map $f^{K}_{\alpha} : V[\alpha] \rightarrow V[K]$ is given as the tensor product of the map  \eqref{projection} 
\begin{align}
f_{\alpha}^K = \bigotimes_{i=1}^n f_{\alpha_i}^{K_i}.
\end{align}
\end{definition}

As the definition of the EPRL intertwiner involves an infinite sum over basis elements of  \linebreak$F_q(SL(2,\C)_{\RR})$ and their duals, it remains to show that the series in \eqref{quintertwiner} converges for all choices of $U_q(\su(2))$-intertwiners $\Lambda_K$.  For this, we introduce a basis of  the vector space  $\mathrm{Hom}_{U_q(\su(2))}(V[K], \C)$, which  
corresponds to the specification of a recoupling scheme.  
In the following we  will mainly be interested in the case $n=4$. In this case, an orthogonal basis of $\Hom_{H}(\otimes_{a=1}^4 V_{K_a} , \C)$ is given by the set of $U_q(\su(2))$-intertwiners $\{\lambda_{K,I}\}_{I\in\NN_0/2}$ with
\begin{align}
\lambda_{K,I} = d_I \circ \left(C_{K_1 K_2 I} \otimes C^I_{\; K_3 K_4}\right),
\end{align}
where $C_{K_1 K_2I} = \epsilon_I \circ C_{\; K_1 K_2}^I$ is an element of $\Hom_{H}(V_{K_1} \otimes V_{K_2} \otimes V_I, \C) \cong \Hom_{H}(V_{K_1} \otimes V_{K_2}, V_I^*)$, and $d_I : V_I^* \otimes V_I \rightarrow \C$, $e^{Ia} \otimes e^I_b \mapsto e^{Ia} (e^I_b) = \delta^a_b$, denotes the evaluation in the ribbon category of representations of $U_q(\su(2))$.
This implies that the evaluation of the intertwiner $\lambda_{K,I}$ is given in terms of Clebsch-Gordan coefficients as follows \ros
\begin{align}\label{lambdaikdef}
\lambda_{K,I} (e^{K_1}_{a_1} \otimes e^{K_2}_{a_2} \otimes e^{K_3}_{a_3} \otimes e^{K_4}_{a_4})\! =:\!\! \left( \begin{array}{cccc} K_1 & K_2 & K_3 & K_4 \\ 
 a_1 & a_2 & a_3 & a_4 \end{array} \right)_{I}\!\! \!=\! \left( \begin{array}{c} a \\ I
 \end{array} \right| \left. \begin{array}{cc} K_1 & K_2 \\ a_1 & a_2 \end{array} \right) \epsilon_{I ab} \left( \begin{array}{c} b \\ I
 \end{array} \right| \left. \begin{array}{cc} K_3 & K_4 \\ a_3 & a_4 \end{array} \right).
\end{align}
Using the orthogonality of the Clebsch-Gordan intertwiners \eqref{orthogonality} and the properties of the map $\epsilon_I : V_I \rightarrow V_I^*$ stated in \eqref{epsexpl}, one  obtains an  orthogonality relation for the coefficients of the intertwiners $\lambda_{I,K}$
\beq
\label{orthogonality4}
\left( \begin{array}{cccc} K_1 & K_2 & K_3 & K_4 \\ 
 a_1 & a_2 & a_3 & a_4 \end{array} \right)_{I} \left( \begin{array}{cccc} a_1 & a_2 & a_3 & a_4 \\ 
 K_1 & K_2 & K_3 & K_4 \end{array} \right)_{K} =  q^{2I(I+1)} \, [2I+1] \, \delta_{IK}.
\eeq 

We can now demonstrate the convergence of the four-valent
 quantum EPRL intertwiners  to elements of $\Hom_{D(U_q(\su(2)))}(V[\alpha], \C)$ and obtain the following theorem.

\begin{theorem}\label{qeprlth}
Consider the  $4$-tuple $K=(K_1,...,K_4)$ of irreducible representations of $U_q(\su(2))$ associated with parameters in the label set $K_i\in \mathcal L$ and denote by 
$\alpha = (\alpha_1(K_1), ..., \alpha_4(K_4))$ the corresponding  $4$-tuple of EPRL representations.
Let  $\lambda_{K,N}$ be an element of the basis $\{\lambda_{K,N}\}_{N\in\NN_0/2}$ of $\Hom_{U_q(\su(2))}(V[K], \C)$
and  $e^L_{c}[\alpha] = \bigotimes_{i=1}^4 e^{L_i}_{c_i}(\alpha_i)$ a basis of $V[\alpha]$. Then the evaluation of the quantum EPRL intertwiner $\iota_{\alpha,N} = f^*(\lambda_{K,N})$ is given by \ros

\begin{align}
\label{evalL}
\iota_{\alpha,N} \, (e^L_{c}[\alpha]) = \sum_{I} &\sum_{m_1,...,m_4} \!\!\![2I+1] \, \pi_I(\mu^{-1})^{b'_1}_{\;\; b_4} \Lambda^{I M_1}_{K_1L_1}(\alpha_1) \Lambda^{I M_2}_{K_2L_2}(\alpha_2)  \Lambda^{I M_3}_{K_3L_3}(\alpha_3)  \Lambda^{I M_4}_{K_4L_4}(\alpha_4) \nn \\
&
 \left( \begin{array}{cc} a_1 & b_1 \\
                          K_1 & I  \end{array} \right| \left. \begin{array}{c} M_1 \\ d_1 \end{array} \right)                            \left( \begin{array}{c} d_1 \\ M_1
 \end{array} \right| \left. \begin{array}{cc} I & L_1 \\ b'_1 & c_1 \end{array} \right) \left( \begin{array}{cc} a_2 & b_2 \\
                          K_2 & I  \end{array} \right| \left. \begin{array}{c} M_2 \\ d_2 \end{array} \right)\!                            \left( \begin{array}{c} d_2 \\ M_2
 \end{array} \right| \left. \begin{array}{cc} I & L_2 \\ b_1 & c_1 \end{array} \right)\nn \\
&
\left( \begin{array}{cc} a_3 & b_3 \\
                          K_3 & I  \end{array} \right| \left. \begin{array}{c} m_3 \\ d_3 \end{array} \right)                            \left( \begin{array}{c} d_3 \\ m_3
 \end{array} \right| \left. \begin{array}{cc} I & L_3 \\ b_2 & c_3 \end{array} \right) \left( \begin{array}{cc} a_4 & b_4 \\
                          K_4 & I  \end{array} \right| \left. \begin{array}{c} M_4 \\ d_4 \end{array} \right)                            \left( \begin{array}{c} d_4 \\ M_4
 \end{array} \right| \left. \begin{array}{cc} I & L_4 \\ b_3 & c_4 \end{array} \right) 
 \nn \\
&  
 \left( \begin{array}{c} a \\ n
 \end{array} \right| \left. \begin{array}{cc} K_1 & K_2 \\ a_1 & a_2 \end{array} \right) \epsilon_{N ab} \left( \begin{array}{c} b \\ N
 \end{array} \right| \left. \begin{array}{cc} K_3 & K_4 \\ a_3 & a_4 \end{array} \right).
\end{align}
This series converges absolutely and defines an element of $\Hom_{D(U_q(\su(2)))}(V[\alpha], \C)$. 
\end{theorem}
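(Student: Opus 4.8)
The plan is to first evaluate the defining series \eqref{quintertwiner} explicitly on a basis vector $e^L_c[\alpha]$, obtaining formula \eqref{evalL}, and then to establish absolute convergence of the resulting series in the remaining summation variable $I$. For the first step I would substitute the explicit Haar measure \eqref{integral} into \eqref{quintertwiner}. Since $h(u^{\;\;a'}_{I\;\;a}\otimes E^{\;\;b'}_{J\;\;b})=\delta_{I0}\delta^a_0\delta^0_{a'}[2J+1]q^{-2b}\delta^{b'}_b$, every term of the sum over the basis $\{x_A\}$ of $F_q(\SL(2,\C)_\RR)=F_q(\SU(2))\hat\otimes F_q(\mathrm{AN}(2))$ with a non-trivial $F_q(\SU(2))$-factor drops out, and the sum collapses to a sum over the dual basis elements of the form $1\otimes g^{\;\;b}_{I\;\;b}$, i.e.\ over $I\in\NN_0/2$ and $b\in\{-I,\dots,I\}$, weighted by $[2I+1]q^{-2b}$. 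Using $\Delta^{(n-1)}(g^{\;\;b}_{I\;\;b})=\sum g^{\;\;b}_{I\;\;b_1}\otimes g^{\;\;b_1}_{I\;\;b_2}\otimes\cdots\otimes g^{\;\;b_{n-1}}_{I\;\;b}$, the explicit action \eqref{repQLG} of $\pi_{\alpha_i}(g^{\;\;b'}_{I\;\;b''})$ on the $e^{L_i}_{c_i}(\alpha_i)$, the projectors $f^{K_i}_{\alpha_i}$ (which retain only the component of output spin $K_i$, i.e.\ set the $N$-label to $K_i$), and finally the Clebsch--Gordan expression \eqref{lambdaikdef} for $\lambda_{K,N}$, one arrives at \eqref{evalL} after relabelling the Haar variable $J\to I$. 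This step is a direct, if lengthy, bookkeeping computation with Clebsch--Gordan coefficients.

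For convergence, I would first observe that for fixed external data $K_1,\dots,K_4$, $L_1,\dots,L_4$ and $N$, all sums in \eqref{evalL} other than the sum over $I$ are finite and grow at most polynomially with $I$: the $U_q(\su(2))$ Clebsch--Gordan coefficients vanish unless the relevant triangle inequalities and magnetic-number conservations hold, which confines each internal spin $M_i$ to a window of width $2\min(K_i,L_i)$ around $I$ and reduces the remaining magnetic sums (around the loop coming from $\Delta^{(n-1)}(g^{\;\;b}_{I\;\;b})$) to at most polynomially many terms. Hence it suffices to bound a single term at level $I$. For $q\in\,]0,1[\,$ one has $[2I+1]=O(q^{-2I})$, and the $\mu^{-1}$-weight contributes $\sum_{b}q^{-2b}\,\pi_I(\mu^{-1})^{b'}_{\,b}=O(q^{-2I})$; together with the polynomially growing Clebsch--Gordan and $\epsilon_N$ factors (which for $q\in\,]0,1[\,$ are controlled via the normalisations \eqref{orthogonality}, \eqref{orthogonality4}), the non-$\Lambda$ part of a term at level $I$ is $O\big(I^{k}q^{-4I}\big)$ for some constant $k$.

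The decisive input is therefore the asymptotic decay of the coefficients $\Lambda^{IM}_{KL}(\alpha)$ as $I\to\infty$. These are the analytically continued $q$-$6j$ symbols of $U_q(\su(2))$ whose properties are studied in \cite{PE2}; the estimates proved there are precisely what makes the $n=3$ integral \eqref{classgenq} converge. Invoking those bounds, the product $\prod_{i=1}^{4}\Lambda^{IM_i}_{K_iL_i}(\alpha_i)$ decays fast enough in $I$ to dominate the factor $O(I^{k}q^{-4I})$ above; concretely, bounding $|\Lambda^{IM_4}_{K_4L_4}(\alpha_4)|$ by a constant (the $\Lambda$'s are bounded since they tend to zero), absolute convergence of the $n=4$ block sum follows from the same estimates on $\Lambda^{IM_1}\Lambda^{IM_2}\Lambda^{IM_3}$ that underlie the $n=3$ convergence result of \cite{PE2}, now applied with one extra harmless bounded factor. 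I expect this reduction to the harmonic analysis of \cite{PE2} to be the only genuinely hard point of the proof; everything else is Clebsch--Gordan bookkeeping.

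Finally, to see that the limit defines an element of $\Hom_{D(U_q(\su(2)))}(V[\alpha],\C)$, I would write $\iota_\alpha=\Lambda_K\circ f^K_\alpha\circ T_{V[\alpha]}$ with $T_{V[\alpha]}=\sum_A\big(\bigotimes_{i}\pi_{\alpha_i}\big)(\Delta^{(n-1)}(x^A))\,h(x_A)$ as in \eqref{classgenq}. The invariance property \eqref{invar}, equivalently \eqref{haarinvar}, of the Haar measure yields the standard identity $T_{V[\alpha]}\circ\rho(a)=\epsilon(a)\,T_{V[\alpha]}$ for all $a\in D(U_q(\su(2)))$ (pair both sides with an arbitrary $f\in F_q(\SL(2,\C)_\RR)$, use the resolution of the identity $\sum_A\langle x^A,g\rangle x_A=g$ and the biinvariance of $h$); hence $\iota_\alpha\circ\rho(a)=\epsilon(a)\,\iota_\alpha$. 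Since $V[\alpha]$ is taken as the algebraic tensor product spanned by the $e^L_c[\alpha]$, and the series \eqref{evalL} has just been shown to converge absolutely on each such basis vector --- with each $\rho(a)\,e^L_c[\alpha]$ a finite combination of basis vectors --- all rearrangements in this manipulation are justified, so $\iota_\alpha$ is a well-defined linear functional on $V[\alpha]$ lying in $\Hom_{D(U_q(\su(2)))}(V[\alpha],\C)$.
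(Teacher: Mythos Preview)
Your derivation of \eqref{evalL} has a gap. You correctly note that $h(x_A)$ vanishes unless the $F_q(\SU(2))$-component of $x_A=u^{\;\;a'}_{I\;\;a}\otimes E^{\;\;b'}_{J\;\;b}$ has $I=0$, but the surviving dual basis element is $X^{\;\;0}_{0\;\;0}\otimes g^{\;\;b}_{J\;\;b'}$, and $X^{\;\;0}_{0\;\;0}$ is \emph{not} the unit of $F_q(\SU(2))^*$: by \eqref{repQLG} it acts on each $V_{\alpha_i}$ as the rank-one projector onto the spin-$0$ summand, and its $(n{-}1)$-fold coproduct is highly nontrivial (it is dual to the $n$-fold product \eqref{dualuqrels}). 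The paper handles this by rewriting the sum as $\sum_{A_1,\dots,A_4}(\cdots)\,h(x_{A_1}\cdots x_{A_4})$ and computing $h_1(u^{\;\;a'_1}_{K_1\;a_1}\cdots u^{\;\;a'_4}_{K_4\;a_4})$ explicitly from \eqref{dualuqrels}; this produces an auxiliary sum over an intermediate spin $P$ which then collapses against $\lambda_{K,N}$ via the orthogonality \eqref{orthogonality4}. Your shortcut does in fact yield the correct formula, but not for the reason you state: the correct reason is that $X^{\;\;0}_{0\;\;0}$ coincides, as a linear functional on $F_q(\SU(2))$, with the Haar integral $h_1$, so $(\otimes_i\pi_{\alpha_i})(\Delta^{(3)}(X^{\;\;0}_{0\;\;0}))$ is precisely the projection of $V[\alpha]$ onto its $U_q(\su(2))$-invariants, and precomposing the $U_q(\su(2))$-intertwiner $\lambda_{K,N}\circ f^K_\alpha$ with this projector is the identity on that composite. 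Either this invariance argument or the paper's explicit $h_1$-computation is needed; the bare identification of $X^{\;\;0}_{0\;\;0}$ with $1$ is false.

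Your convergence sketch has the right structure but is imprecise at one point: the $n=3$ result of \cite{PE2} is a convergence theorem, not a pointwise estimate, so it cannot be invoked as a black box ``with one extra harmless bounded factor'' (the $n=4$ case also carries one extra $b$-sum of size $2I{+}1$). The paper proceeds exactly as you anticipate but quotes the explicit bound $|\Lambda^{I,I+R}_{KL}(\alpha)|\le \mathcal C\,I\,q^{2I}$ from \cite{PE2} directly; four such factors contribute $q^{8I}$, which dominates the $q^{-4I}$ you correctly extracted from $[2I+1]\cdot\sum_b q^{-2b}$, leaving a summand of order $I^{11}q^{4I}$. The intertwining argument via left-invariance of $h$ matches the paper.
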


{\em Proof.} The proof  contains two steps. We first show the identity \eqref{evalL}  and then prove  the convergence. By definition, the evaluation of the EPRL intertwiner is given by the expression \ros
\beq
\iota_{\alpha,N} \, (e^L_{c}[\alpha]) = \sum_{A_1, ... A_4}  \lambda_{K,N} \circ  f^{K}_{\alpha} \circ \left( \pi_{\alpha_1}(x^{A_1}) \, e_{c_1}^{L_1} (\alpha_1) \otimes ... \otimes \pi_{\alpha_4} (x^{A_4}) \, e_{c_4}^{L_4} (\alpha_4) \right) h(x_{A_1} ... x_{A_4}), \nn
\eeq
where  $\alpha_i$ is a shorthand notation for $\alpha_i(K_i)$, $i=1,...,4$ and we use Sweedler's notation $\Delta(x^A) = x^{A_1} \otimes x^{A_2}$. We now choose as the basis $\{x_A\}_A$ of $F_q(\SL(2,\C)_{\mathbb{R}})$ the basis with elements $u^{\;\;a}_{I \;\; a'} \otimes E^{\;\;b}_{J \;\; b'}$ introduced in Section \ref{irrepslor}. Its dual is the basis
 of $F_q(\SL(2,\C)_{\mathbb{R}})^*$  with basis elements $X^{\;\;a'}_{I \;\; a} \otimes g^{\;\;b'}_{J \;\; b}$. Using equation \eqref{repQLG}, we obtain \ros
\begin{align} &\iota_{\alpha,N} \, (e^L_{c}[\alpha]) = \!\!\!\!\sum_{I_1, ... I_4} \sum_{J_1,...,J_4} \sum_{M_1,...,M_4}  \!\!\!\!\lambda_{K,N} \circ f_{\alpha}^K  ( e_{a'_1}^{I_1} (\alpha_1) \otimes ... \otimes \, e_{a'_4}^{I_4} (\alpha_4))\;\; h(u^{\;\;a'_1}_{I_1 \;\; a_1} \!\!\otimes\! E^{\;\;b'_1}_{J_1 \;\; b_1} ...  u^{\;\;a'_4}_{I_4 \;\; a_4}\!\! \otimes\! E^{\;\;b'_4}_{J_4 \;\; b_4}) \nn \\
&\Lambda^{J_1M_1}_{I_1L_1}(\alpha_1) \!
 \left( \begin{array}{cc} a_1 & b_1 \\
                          I_1 & J_1  \end{array} \right| \left. \begin{array}{c} M_1 \\ d_1 \end{array} \right) \!\!                           \left( \begin{array}{c} d_1 \\ M_1
 \end{array} \right| \left. \begin{array}{cc} J_1 & L_1 \\ b'_1 & c_1 \end{array} \right) \!\!... \;
\Lambda^{J_4M_4}_{I_4L_4}(\alpha_4) \!
 \left( \begin{array}{cc} a_4 & b_4 \\
                          I_4 & J_4  \end{array} \right| \left. \begin{array}{c} M_4 \\ d_4 \end{array} \right)  \!\!                          \left( \begin{array}{c} d_4 \\ M_4
 \end{array} \right| \left. \begin{array}{cc} J_4 & L_4 \\ b'_4 & c_4 \end{array} \right) \nn \\
& = \!\!\!\sum_{J_1,...,J_4} \sum_{M_1,...,M_4} \left( \begin{array}{cccc} K_1 & K_2 & K_3 & K_4 \\ 
 a'_1 & a'_2 & a'_3 & a'_4 \end{array} \right)_{N} \;\;h_1(u^{\;\;a'_1}_{K_1 \;\; a_1} \; ... \; u^{\;\;a'_4}_{K_4 \;\; a_4}) h_2(E^{\;\;b'_1}_{J_1 \;\; b_1} \; ... \;  E^{\;\;b'_4}_{J_4 \;\; b_4}) \nn \\
&\Lambda^{J_1M_1}_{K_1L_1}(\alpha_1) \!
 \left( \begin{array}{cc} a_1 & b_1 \\
                          K_1 & J_1  \end{array} \right| \left. \begin{array}{c} M_1 \\ d_1 \end{array} \right) \!\!                           \left( \begin{array}{c} d_1 \\ M_1
 \end{array} \right| \left. \begin{array}{cc} J_1 & L_1 \\ b'_1 & c_1 \end{array} \right) \!\!... \;
\Lambda^{J_4M_4}_{K_4L_4}(\alpha_4) \!
 \left( \begin{array}{cc} a_4 & b_4 \\
                          K_4 & J_4  \end{array} \right| \left. \begin{array}{c} M_4 \\ d_4 \end{array} \right)  \!\!                          \left( \begin{array}{c} d_4 \\ M_4
 \end{array} \right| \left. \begin{array}{cc} J_4 & L_4 \\ b'_4 & c_4 \end{array} \right),
 \nn
\end{align}
where we  inserted the definition \eqref{projection} of the $U_q(\su(2))$-intertwiner $f_{\alpha}^K$ to derive the second expression from the first.
The next step is to compute the integrals $h_1$ and $h_2$. This is done by use of the expressions \eqref{dualuqrels}, \eqref{dualAN2rels} for the multiplication on $F_q(\SU(2))$ and $F_q(\mathrm{AN}(2))$, respectively, together with the definition of the integrals \eqref{integraleuc}. This yields for the integral on  $F_q(\mathrm{AN}(2))$
\beq
\label{intAN}
h_2(E^{\;\;b'_1}_{J_1 \;\; b_1} \; ... \;  E^{\;\;b'_4}_{J_4 \;\; b_4}) = \delta_{J_1 J_2} \delta_{J_1 J_3}  \delta_{J_3 J_4} \delta_{b_1}^{b'_2} \delta_{b_3}^{b'_4}  \delta_{b_2}^{b'_3}  [2J_1+1] \pi_{J_1}(\mu^{-1})^{b'_1}_{\;\; b_4}.
\eeq
The Haar integral on $F_q(\SU(2))$ is computed analogously. The first step is to derive the identity \ros
\begin{align}
h_1(u^{\;\;a'_1}_{K_1 \;\; a_1} \; ... \; u^{\;\;a'_4}_{K_4 \;\; a_4}) =  \sum_{M,P} 
&\left( \begin{array}{cc} a'_1 & a'_2 \\
                     K_1 & K_2  \end{array} \right| \left. \begin{array}{c} M \\ a' \end{array} \right)                            \left( \begin{array}{c} a \\ M
 \end{array} \right| \left. \begin{array}{cc} K_1 & K_2 \\ a_1 & a_2  \end{array} \right)  \nn \\
 &\left( \begin{array}{cc} a'_3 & a'_4 \\
                          K_3 & K_4  \end{array} \right| \left. \begin{array}{c} P \\ b' \end{array} \right)                            \left( \begin{array}{c} b \\ P
 \end{array} \right| \left. \begin{array}{cc} K_3 & K_4 \\ a_3 & a_4 \end{array} \right) h_1(u^{\;\;a'}_{M\;\; a} u^{\;\;b'}_{P \;\; b}). \nn
\end{align}
It is then easy to compute the bivalent integral 
$$
h_1(u^{\;\;a'}_{M \;\; a} u^{\;\;b'}_{pP \;\; b}) = \frac{\delta_{MP}}{[2P +1]} \, e^{2i \pi P} \, v_P \, \epsilon_{P \, ab} \, \epsilon^{P \, a'b'},
$$
where we  used the identity
$$
\left( \begin{array}{c} 0 \\ 0
 \end{array} \right| \left. \begin{array}{cc} I & J \\ a & b \end{array} \right) = \frac{\delta_{IJ}}{\sqrt{[2J+1]}} (-1)^{J-a} q^a \delta_{a,-b} = \frac{\delta_{IJ}}{\sqrt{[2J +1]}} \, e^{i \pi J} \, v_J^{1/2} \, \epsilon_{J ab},
$$
to obtain \ros
\beq
\label{intSU}
h_1(u^{\;\;a'_1}_{K_1 \;\; a_1} \; ... \; u^{\;\;a'_4}_{K_4 \;\; a_4}) = \sum_P  \frac{v_P}{[2P +1]} \, e^{2i \pi P} \, \left( \begin{array}{cccc} K_1 & K_2 & K_3 & K_4 \\ 
 a_1 & a_2 & a_3 & a_4 \end{array} \right)_{P} \left( \begin{array}{cccc} a'_1 & a'_2 & a'_3 & a'_4 \\ 
 K_1 & K_2 & K_3 & K_4 \end{array} \right)_{P}. 
\eeq
The expression for the evaluation of the EPRL intertwiner then takes the form \ros
\begin{align}
\iota_{\alpha,N} \, (e^L_{c}[\alpha]) = \sum_{I,P} &\sum_{M_1,...,M_4} \frac{[2I+1]}{[2P+1]} v_P \, e^{2i\pi P} \, \pi_I(\mu^{-1})^{b'_1}_{\;\; b_4} \Lambda^{I M_1}_{K_1L_1}(\alpha_1) \Lambda^{I M_2}_{K_2L_2}(\alpha_2)  \Lambda^{I M_3}_{K_3L_3}(\alpha_3)  \Lambda^{I M_4}_{K_4L_4}(\alpha_4)\; 
 \nn \\
 &\left( \begin{array}{cccc} K_1 & K_2 & K_3 & K_4 \\ 
 a'_1 & a'_2 & a'_3 & a'_4 \end{array} \right)_{N} 
 \left( \begin{array}{cccc} a'_1 & a'_2 & a'_3 & a'_4 \\  K_1 & K_2 & K_3 & K_4 \end{array} \right)_{P}  \left( \begin{array}{cccc} K_1 & K_2 & K_3 & K_4 \\ 
 a_1 & a_2 & a_3 & a_4 \end{array} \right)_{P}\nn \\
&
 \left( \begin{array}{cc} a_1 & b_1 \\
                          K_1 & I  \end{array} \right| \left. \begin{array}{c} M_1 \\ d_1 \end{array} \right)                            \left( \begin{array}{c} d_1 \\ M_1
 \end{array} \right| \left. \begin{array}{cc} I & L_1 \\ b'_1 & c_1 \end{array} \right) \left( \begin{array}{cc} a_2 & b_2 \\
                          K_2 & I  \end{array} \right| \left. \begin{array}{c} M_2 \\ d_2 \end{array} \right)                            \left( \begin{array}{c} d_2 \\ M_2
 \end{array} \right| \left. \begin{array}{cc} I & L_2 \\ b_1 & c_1 \end{array} \right) 
 \nn \\
&\left( \begin{array}{cc} a_3 & b_3 \\
                          K_3 & I  \end{array} \right| \left. \begin{array}{c} M_3 \\ d_3 \end{array} \right)\left( \begin{array}{c} d_3 \\ M_3
 \end{array} \right| \left. \begin{array}{cc} I & L_3 \\ b_2 & c_3 \end{array} \right) 
\left( \begin{array}{cc} a_4 & b_4 \\
                          K_4 & I  \end{array} \right| \left. \begin{array}{c} M_4 \\ d_4 \end{array} \right)                            \left( \begin{array}{c} d_4 \\ M_4
 \end{array} \right| \left. \begin{array}{cc} I & L_4 \\ b_3 & c_4 \end{array} \right) . \nn
\end{align}
Using the orthogonality property \eqref{orthogonality4} of the Clebsch-Gordan coefficients and simplifying the resulting expressions, we then obtain  equation \eqref{evalL}.

To prove the absolute convergence of the series, we bound the absolute value of the summands by the summands of a convergent series. We start be expressing \eqref{evalL} as \ros
\beq
\label{series}
\iota_{\alpha,N} \, (e^L_{c}[\alpha]) = c(K) \sum_{I} \sum_{M_1,...,M_4} S(I,M_1, ... ,M_4),
\eeq
where $c(K)$ is a function of the  labels  $K$ given by
$$
c(K)= \left( \begin{array}{c} a \\ N
 \end{array} \right| \left. \begin{array}{cc} K_1 & K_2 \\ a_1 & a_2 \end{array} \right) \epsilon_{N ab} \left( \begin{array}{c} b \\ N
 \end{array} \right| \left. \begin{array}{cc} K_3 & K_4 \\ a_3 & a_4 \end{array} \right),
 $$ 
 and the summand takes the form \ros
\begin{align}
\label{summand}
S(I,M_1, ... ,M_4) = 
C(I,M_1,...,M_4) [2I+1] \Lambda^{I M_1}_{K_1L_1}(\alpha_1) \Lambda^{I M_2}_{K_2L_2}(\alpha_2)  \Lambda^{I M_3}_{K_3L_3}(\alpha_3)  \Lambda^{I M_4}_{K_4L_4}(\alpha_4),
\end{align}
with
\begin{align}
&C(I,M_1,...,M_4) = \sum_{b_1,..., b_4} \sum_{d_1,...,d_4} q^{-2 b_4}
 \left( \begin{array}{cc} a_1 & b_1 \\
                          K_1 & I  \end{array} \right| \left. \begin{array}{c} M_1 \\ d_1 \end{array} \right)                            \left( \begin{array}{c} d_1 \\ M_1
 \end{array} \right| \left. \begin{array}{cc} I & L_1 \\ b_4 & c_1 \end{array} \right) \left( \begin{array}{cc} a_2 & b_2 \\
                          K_2 & I \end{array} \right| \left. \begin{array}{c} M_2 \\ d_2 \end{array} \right)  \nn \\                          
&\left( \begin{array}{c} d_2 \\ M_2
 \end{array} \right| \left. \begin{array}{cc} I & L_2 \\ b_1 & c_1 \end{array} \right) 
\left( \begin{array}{cc} a_3 & b_3 \\
                          K_3 & I  \end{array} \right| \left. \begin{array}{c} M_3 \\ d_3 \end{array} \right)                            \left( \begin{array}{c} d_3 \\ M_3
 \end{array} \right| \left. \begin{array}{cc} I & L_3 \\ b_2 & c_3 \end{array} \right) \left( \begin{array}{cc} a_4 & b_4 \\
                          K_4 & I  \end{array} \right| \left. \begin{array}{c} M_4 \\ d_4 \end{array} \right)                            \left( \begin{array}{c} d_4 \\ M_4
 \end{array} \right| \left. \begin{array}{cc} I & L_4 \\ b_3 & c_4 \end{array} \right).
 \nn 
\end{align}
We can now bound the summand $S(I,M_1, ... ,M_4)$ of the series as follows. Using the fact that the Clebsch-Gordan coefficients are bounded by one and the identity
$\sum_{a=-I}^{I} q^{-2a} = [2I + 1]$, we  obtain 
$$
\mid C(I,M_1,...,M_4)  \mid \leq [2I +1] (2I+1)^3 (2M_1+1) (2M_2+1) (2M_3+1) (2M_4+1),
$$
We then use the asymptotic properties of the  coefficients  $\Lambda^{IM}_{KL}(\alpha)$ which are derived in \cite{PE2}. It is shown in \cite{PE2} that  there exist constants $\mathcal{J}, \mathcal{C} > 0$ such that
\begin{align}\label{lambdas}
 \mid \Lambda^{I I+N}_{K L} \mid \leq \mathcal{C} I q^{2I}\qquad \forall I > \mathcal{J}, K,L,N\in\NN_0/2.
\end{align}
This implies that  for $I$ sufficiently large, $I>\mathcal J$, we have the following bound on  the summands \eqref{summand} of the series \eqref{series}:
$$
\mid S(I,I+N_1,I+N_2,I+N_3,I+N_4) \mid \leq 
\mathcal{K} I^{11} q^{4I}\qquad \forall N_1,N_2,N_3,
$$
where $\mathcal{K}$ is a constant and we have used the asymptotic identities $\lim_{I\to \infty}[2I+1] /q^{-2I}=1$ and \eqref{lambdas}. As $q\in]0,1[$, the series $\sum_{I=0}^\infty   I^{\alpha} q^{\beta I}$  converges  for all $\beta>0$ 
and the  series \eqref{series} converges absolutely.  By using the left invariance of the Haar measure, one then finds
 that the quantum EPRL intertwiner is an element of $\Hom_{D(U_q(\su(2)))}(V[\alpha],\C)$. \hfill $\square$

To conclude our discussion of the quantum EPRL intertwiner, we investigate its transformation under  braiding. As the following proposition shows, unlike the quantum Barrett-Crane intertwiner \cite{PK}, the quantum EPRL intertwiner is not invariant under braiding. For simplicity, we will concentrate on the braiding of the two first arguments. The general case can be treated analogously.

\begin{proposition}

Let    $K=(K_1,...,K_4)$, $K_i\in\mathcal L$,  be a $4$-tuple of of irreducible representations of $U_q(\su(2))$ with associated EPRL representations
$\alpha = (\alpha_1(K_1), ..., \alpha_4(K_4))$. 
Consider the $EPRL$ intertwiner $\iota_{\alpha,J}=f^*(\lambda_{K,J})$  for  an element   $\lambda_{K,J}\in \Hom_{U_q(\su(2))}(V[K], \C)$. Then the transformation of the EPRL intertwiner $\iota_{\alpha,J}$ under  the braiding $c_{\alpha_2, \alpha_1}=\tau_{\alpha_2,\alpha_1} \circ (\pi_{\alpha_2} \otimes \pi_{\alpha_1} )(R): V_{\alpha_2(K_2)}\otimes V_{\alpha_1(K_1)}\rightarrow V_{\alpha_1(K_1)}\otimes V_{\alpha_2(K_2)}$ of the first two EPRL representations is given by
\footnote{Note the abuse of notation in the right hand side of equation \eqref{br}; $\alpha_1(K)$ is not necessarily an EPRL 
representation. The notation is used nevertheless for compacity purposes.} \ros
\beq \label{br}
\iota_{\alpha_1(K_1) ... \alpha_4(K_4), J} \circ c_{\alpha_2,\alpha_1} =  \sum_K \Lambda^{K_2 J}_{K_1 K} \; \iota_{\alpha_2(K_2) \alpha_1(K) \alpha_3(K_3) \alpha_4(K_4)}.
\eeq
\end{proposition}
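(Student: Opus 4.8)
\emph{Proof proposal.} The plan is to compute $\iota_{\alpha,J}\circ c_{\alpha_2,\alpha_1}$ directly from the explicit evaluation \eqref{evalL} of the quantum EPRL intertwiner together with the explicit action \eqref{repR} of the universal $R$-matrix, and then to recognise the outcome as a linear combination of EPRL intertwiners, again via \eqref{evalL}. The conceptual reason such a recognition is possible is the following: writing $\iota_{\alpha,J}=\lambda_{K,J}\circ f^{K}_{\alpha}\circ T_{V[\alpha]}$, where $T_{V[\alpha]}=\sum_{A}\bigl(\bigotimes_{i=1}^{4}\pi_{\alpha_i}\bigr)\bigl(\Delta^{(3)}(x^{A})\bigr)\,h(x_{A})$ is the Haar-averaging endomorphism implicit in \eqref{quintertwiner}, and using that $c_{\alpha_2,\alpha_1}\otimes\mathrm{id}^{\otimes 2}$ is a morphism of $D(U_q(\su(2)))$-modules while $T_{V[\alpha]}$ depends only on the module structure, one obtains $T_{V[\alpha_1\alpha_2\alpha_3\alpha_4]}\circ(c_{\alpha_2,\alpha_1}\otimes\mathrm{id}^{\otimes 2})=(c_{\alpha_2,\alpha_1}\otimes\mathrm{id}^{\otimes 2})\circ T_{V[\alpha_2\alpha_1\alpha_3\alpha_4]}$. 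Hence the entire content of the statement lies in the way the Lorentzian braiding \eqref{repR} interacts with the lowest-weight projections $f^{K_i}_{\alpha_i}$ that define the embedding $f^{*}$.

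Carrying this out, I would substitute the $R$-matrix action \eqref{repR} on the first two tensor factors into \eqref{evalL}. Two mechanisms then combine. First, the projections $f^{K_1}_{\alpha_1}$ and $f^{K_2}_{\alpha_2}$ onto the lowest weights collapse several of the block sums produced by the braiding and, crucially, force the $\Lambda$-coefficient coming from the $R$-matrix to be evaluated with one label equal to the lowest weight $K_1$ (respectively $K_2$) of the EPRL representation $\alpha_1(K_1)$ (respectively $\alpha_2(K_2)$); using the properties of these coefficients established in \cite{PE2}, this is precisely what produces the $6j$-type coefficient $\Lambda^{K_2 J}_{K_1 K}$ of \eqref{br}, whose abbreviated notation (as in the footnote) suppresses an inessential deformation argument. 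Second, the first orthogonality relation \eqref{orthogonality} for Clebsch--Gordan coefficients contracts the magnetic indices introduced by the braiding against those already present in \eqref{evalL}, reducing the summation to a single internal spin, and a Biedenharn--Elliott (pentagon) identity for the $\Lambda(\alpha_i)$ reorganises the product of the four $\Lambda(\alpha_i)$-factors of \eqref{evalL} together with this $6j$-factor into the four $\Lambda$-factors associated with the permuted and relabelled tuple $\bigl(\alpha_2(K_2),\alpha_1(K),\alpha_3(K_3),\alpha_4(K_4)\bigr)$, multiplied by $\Lambda^{K_2 J}_{K_1 K}$ and summed over the intermediate spin $K$.

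Comparing the resulting expression with \eqref{evalL} for this permuted tuple then yields \eqref{br}. All steps are legitimate: $c_{\alpha_2,\alpha_1}$ has only finitely many nonvanishing matrix coefficients (as recalled after \eqref{repR}), so $\iota_{\alpha,J}\circ c_{\alpha_2,\alpha_1}$ is a finite combination of the absolutely convergent series \eqref{evalL} and may be rearranged and contracted freely; the sum over $K$ is finite because $V_{K_1}\otimes V_{K_2}$ has finitely many irreducible summands; and each $\iota_{\alpha_2(K_2)\alpha_1(K)\alpha_3(K_3)\alpha_4(K_4)}$ converges by Theorem \ref{qeprlth}, whose proof uses only the asymptotic bound \eqref{lambdas} and hence applies verbatim even for those $K\notin\mathcal L$ for which $\alpha_1(K)$ is not an honest EPRL representation. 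I expect the main obstacle to be precisely the recoupling step: pinning down the correct $6j$-identity among the $\Lambda(\alpha_i)$ and carrying out the index bookkeeping so that the output matches \eqref{evalL} exactly, with coefficient $\Lambda^{K_2 J}_{K_1 K}$.
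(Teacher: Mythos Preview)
Your first paragraph identifies the right structural fact: since $c_{\alpha_2,\alpha_1}$ is an intertwiner, it commutes past the Haar-averaging endomorphism $T_{V[\alpha]}$, and the only nontrivial computation is $\lambda_{K,J}\circ f^{K}_{\alpha}\circ (c_{\alpha_2,\alpha_1}\otimes\mathrm{id}^{\otimes 2})$. The paper's proof does exactly this, but stays at the level of the \emph{definition} \eqref{quintertwiner} rather than the evaluated formula \eqref{evalL}. Once the braiding is moved past the integral, one inserts the $R$-matrix action \eqref{repR}, applies the projector $f^{K}_{\alpha}$ (which fixes the outgoing weights to $K_1,K_2$), evaluates $\lambda_{K,J}$ via \eqref{lambdaikdef}, and then a \emph{single} application of Clebsch--Gordan orthogonality \eqref{orthogonality} collapses the intermediate spin $N$ to $J$. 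The surviving $\Lambda$-coefficient from the braiding is already $\Lambda^{K_2 J}_{K_1 K}$, and what remains is recognised directly as the defining expression for $\iota_{\alpha_2(K_2)\alpha_1(K)\alpha_3(K_3)\alpha_4(K_4)}$.

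Your second paragraph, by contrast, proposes to work inside the fully evaluated series \eqref{evalL} and invokes a Biedenharn--Elliott identity to ``reorganise the product of the four $\Lambda(\alpha_i)$-factors''. This is where the proposal goes wrong: those four $\Lambda(\alpha_i)$ factors arise from the $F_q(\mathrm{AN}(2))$ part of the Haar integral and are \emph{untouched} by the braiding --- they are exactly the same in the permuted intertwiner, since the underlying principal representations $\alpha_1,\ldots,\alpha_4$ are unchanged (the footnote's ``abuse of notation'' $\alpha_1(K)$ still denotes the representation $\alpha_1(K_1)$, only the projection weight changes from $K_1$ to $K$). No pentagon identity is needed or used; the sole recoupling step is the Clebsch--Gordan orthogonality that identifies the intermediate spin with $J$. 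Working through \eqref{evalL} would eventually succeed, but the mechanism you describe is not the one that operates, and the bookkeeping you anticipate as ``the main obstacle'' largely evaporates once you compute at the level of \eqref{quintertwiner} instead.
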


{\em Proof.} As the braiding is an intertwiner between the representations of the quantum Lorentz group on $V_{\alpha_2(K_2)}\otimes V_{\alpha_1(K_1)}$ and $V_{\alpha_1(K_1)}\otimes V_{\alpha_2(K_2)}$, we have \ros
\beqa
\iota_{\alpha,J} \circ c_{\alpha_2,\alpha_1}
&=& \sum_{A}  \lambda_{K,J} \circ  f^{K}_{\alpha} \circ c_{\alpha_2,\alpha_1}  \circ \left( \bigotimes_{i=\sigma(1)}^{\sigma(4)} \pi_{\alpha_i(K_i)} (\Delta^{(3)}(x^A)) \right)  h(x_{A}), \nn 
\eeqa
where $\sigma: (1,2,3,4)\rightarrow (2,1,3,4)$ is the permutation that exchanges
the first and second argument. 
Inserting expression \eqref{repR} for the action of the $R$-matrix on the principal representations into the definition of the braiding and using the definition \eqref{projection} of the projector $f^K_\alpha$ we obtain \ros
\begin{align}\label{intid}
\iota_{\alpha,J} \circ c_{\alpha_2,\alpha_1} (\otimes_{i=1}^{4} &e^{L_i}_{c_i}(\alpha_i)) 
\!= \!\!\!\!\!\!\sum_{A_1,...,A_4} \sum_{L'_1,N}\!\! \lambda_{K,J} (e^{K_1}_{f} \!\otimes\! e^{K_2}_{e} \!\otimes\! e^{K_3}_{c'_3} \!\otimes\! e^{K_4}_{c'_4})  \left(\!\! \begin{array}{cc} f & e \\
                          K_1 & K_2  \end{array} \right| \left. \begin{array}{c} N \\ g \end{array} \!\!\right) \!\!                           \left( \!\!\begin{array}{c} g \\ N
 \end{array} \right| \left. \begin{array}{cc} K_2 & L'_1 \\ c'_2 & c'_1 \end{array} \!\!\right) \; \nn \\
& \Lambda^{K_2 N}_{K_1 L'_1} \pi_{\alpha_2}(x^{A_2})^{K_2 c'_2}_{\;\;\; L_2 c_2}   \pi_{\alpha_1}(x^{A_1})^{L'_1c'_1}_{\;\;\; L_1 c_1}  \pi_{\alpha_3}(x^{A_3})^{K_3 c'_3}_{\;\;\; L_3 c_3}    \pi_{\alpha_4}(x^{A_4})^{K_4 c'_4}_{\;\;\; L_4 c_4}  h(x_{A_2} x_{A_1} ... x_{A_4}). \nn
\end{align}
We now evaluate this expression by using identity \eqref{lambdaikdef} which expresses the $U_q(\su(2))$ intertwiner $\lambda_{K,J}$  in terms of the Clebsch-Gordan coefficients. Due to  the orthogonality of the Clebsch-Gordan maps \eqref{orthogonality}, we can then recombine two coefficients and rewrite expression \eqref{intid} as \ros
\begin{align}
&\iota_{\alpha,J} \circ c_{\alpha_2,\alpha_1} (\otimes_{i=1}^{4} e^{L_i}_{c_i}(\alpha_i)) \\
&=\sum_{A_1,...,A_4} \sum_{L'_1}    \left( \begin{array}{c} c \\ J
 \end{array} \right| \left. \begin{array}{cc} K_2 & L'_1 \\ c'_2 & c'_1 \end{array} \right) \; \epsilon_{Jcd} \left( \begin{array}{c} d \\ J
 \end{array} \right| \left. \begin{array}{cc} K_3 & K_4 \\ c'_3 & c'_4 \end{array} \right) \Lambda^{K_2 J}_{K_1 L'_1}  \nn \\
& \qquad\qquad\qquad\pi_{\alpha_2}(x^{A_2})^{K_2 c'_2}_{\;\;\; L_2 c_2}   \pi_{\alpha_1}(x^{A_1})^{L'_1c'_1}_{\;\;\; L_1 c_1}  \pi_{\alpha_3}(x^{A_3})^{K_3 c'_3}_{\;\;\; L_3 c_3}    \pi_{\alpha_4}(x^{A_4})^{K_4 c'_4}_{\;\;\; L_4 c_4}  h(x_{A_2} x_{A_1} ... x_{A_4})  \nn \\
& = \sum_{A_1,...,A_4} \sum_{K}    
\left( \begin{array}{cccc} K_2 & K & K_3 & K_4 \\ 
 c'_2 & c'_1 & c'_3 & c'_4 \end{array} \right)_J \Lambda^{K_2 J}_{K_1 K}  \nn \\
& \qquad\qquad\qquad\pi_{\alpha_2}(x^{A_2})^{K_2 c'_2}_{\;\;\; L_2 c_2}   \pi_{\alpha_1}(x^{A_1})^{K c'_1}_{\;\;\; L_1 c_1}  \pi_{\alpha_3}(x^{A_3})^{K_3 c'_3}_{\;\;\; L_3 c_3}    \pi_{\alpha_4}(x^{A_4})^{K_4 c'_4}_{\;\;\; L_4 c_4}  h(x_{A_2} x_{A_1} ... x_{A_4}). \nn
\end{align}
This proves the claim.
\hfill  $\square$

\subsection{The four-simplex amplitude}
\label{qamp}

We are now ready to construct the amplitude for $4$-simplexes labeled by EPRL representations and $q$-EPRL intertwiners. 
Such an amplitude is defined with the aid of the graphical calculus of spin networks.
There are two main difficulties in the definition of the amplitude that arise from the fact that the representation spaces 
of the EPRL representations are infinite-dimensional. The first is that there is no coevaluation map that intertwines 
the trivial representation of the quantum Lorentz group on $\C$ with a representation on the tensor product 
$V_\alpha\otimes V_\alpha^*$ and therefore no notion of a quantum trace. The second difficulty is that a naive definition 
of the amplitude for the four-simplexes gives an infinite answer.

These problems  arise in a similar fashion in the classical Lorentzian BC \cite{BC2} and EPRL \cite{eprlmod} models. A
solution to the first problem was provided in \cite{notts3} where a Lorentzian graphical calculus based on non-invariant
tensors and bilinear forms was invented. A regularisation prescription that circumvents the second difficulty has been 
given in
\cite{BB} and \cite{roberto} for the BC and EPRL models, respectively. 
Extending these procedures to the quantum Lorentz group, we will overcome these issues and consistently construct a finite 
amplitude for the $4$-simplexes. 

\subsubsection{EPRL tensors and bilinear form on the representation spaces}

We first need a notion of dual quantum EPRL intertwiners. These dual objects are required since we cannot pair 
$q$-EPRL intertwiners together because of the absence of a coevaluation map.
As in the Lie group case, the vector space  
$\bigotimes_{i=1}^n V_{\alpha_i}$ does not contain 
tensors that are invariant under the action of the quantum Lorentz group and such objects do not exist per se. Accordingly, 
dual $q$-EPRL intertwiners  are replaced by non-invariant quantities which can be viewed as the quantum group analogue of  
the boosted $\SU(2)$ intertwiners considered in \cite{notts3}. These quantities, which are referred to as vertex functions in 
\cite{PK},  will be called EPRL tensors in the following.

\begin{definition} {\bf (Quantum EPRL tensor)}
Let $K = (K_1, ..., K_n)$ be a $n$-tuple of representations of $U_q(\su(2))$ labelled by elements of $\mathcal{L}$. Denote by  $\alpha= (\alpha_1(K_1),...,\alpha_n(K_{n}))$ the associated $n$-tuple of EPRL representations, 
and consider an element  $\Lambda^{K} \in \mathrm{Hom}_{U_q(\su(2))}(\C, V[K])$.  The quantum EPRL tensor $\Psi^{\alpha}$ associated to $\Lambda^K$ is  defined by 
\beq
\label{tensor}
\Psi^{\alpha} = \left[ \sum_{A}  \left( \bigotimes_{i=1}^n \pi_{{\alpha}_{i(K_i)}} \left( \Delta^{(n-1)}(x^A) \right)\right) \circ f_{K}^{\alpha} \circ \Lambda^{K} \right] \otimes x_{A},
\eeq
where $\{x_A\}_{A}$ is a basis of $F_q(\SL(2,\C)_\RR)$ and  $\{x^A\}_{A}$ is the dual basis
of $F_q(\SL(2,\C)_\RR)^*$. The vector space of EPRL tensors associated with $\alpha$ is the vector space  $H[\alpha] := \mathcal{L}(\C, V[\alpha]) \otimes F_q(\SL(2,\C)_{\mathbb{R}})$,
where $\mathcal{L}(\C, V[\alpha])$ denotes the space of linear maps from $\C$ to $V[\alpha]$.
\end{definition}

The quantum EPRL tensors associated with the  elements 
of  the basis $\{\lambda_{K,J}\}_{J \in \mathbb{N}_0/2}$ of \linebreak $\Hom_{U_q(\su(2))}(\C, V[\alpha])$ will be denoted $\Psi^{\alpha , J}$ in the following.

To pair EPRL tensors together, we need a bilinear form on the representation spaces $V_\alpha$. This form is induced by an
intertwiner between the representation spaces $V_\alpha$ and their duals. 

\begin{lemma} 
\label{lemma}
Let $V_{\alpha}^* = \bigoplus_{I=m}^\infty V_I^*$ be the dual to the vector space $V_{\alpha}$.
There exist a bijective intertwiner $\phi^{\alpha} : V_{\alpha} \rightarrow V_{\alpha}^*$ whose expression with respect to the basis $\{e^I_a\}_{I,a}$ of $V_{\alpha}$ and the dual basis $\{e^{Ia}\}_{I,a}$ of $V_{\alpha}^{*}$ is given by
\beq 
\label{Lorint}
\phi^\alpha (e^I_a)=  \phi^{\alpha}_{IaJb} e^{Jb} \qquad \phi^{\alpha}_{IaJ_b} = c_{\alpha} v_I^{1/2} \delta_{IJ} \epsilon_{I ba},
\eeq
where  $c_{\alpha}$ is a constant and $\epsilon_{Iab}$ is given by \eqref{epsexpl}.
The representations of the quantum Lorentz group on  $V_\alpha$ and $V_\alpha^*$ are equivalent. The bilinear form $\beta_{\alpha} : V_{\alpha} \otimes V_{\alpha} \rightarrow \C$ 
\beq
\label{bilinear}
 \beta_{\alpha}(v,w) = \phi^{\alpha}(w)(v)\qquad \forall v,w \in V_{\alpha}, 
\eeq
satisfies the invariance property $\beta_{\alpha}( v , \pi_{\alpha}(a) w) = \beta_{\alpha}( \pi_{\alpha}(S(a)) v ,  w)$ for all $a\in D(U_q(\su(2)))$.
\end{lemma}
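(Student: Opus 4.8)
The plan is to construct $\phi^\alpha$ explicitly from the already-known intertwiners $\epsilon_I : V_I \to V_I^*$ for $U_q(\su(2))$ and verify directly that the stated formula defines a map that intertwines not only the $U_q(\su(2))$-action but also the action of $F_q(\SU(2))^{op}$, so that it intertwines the full $D(U_q(\su(2)))$-action on $V_\alpha$. The bijectivity is immediate from the closed formula \eqref{Lorint}: since $\epsilon_I$ is bijective (its inverse having matrix elements $\epsilon^I$, by \eqref{epsilon}) and $v_I \neq 0$, the block-diagonal map $\phi^\alpha$ has a block-diagonal inverse $\phi_\alpha^{-1}(e^{Jb}) = c_\alpha^{-1} v_J^{-1/2} \epsilon^{J ba} e^J_a$ up to the constant $c_\alpha$. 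So the content of the lemma is the intertwining property, and then the invariance of $\beta_\alpha$ in \eqref{bilinear} follows formally once the intertwiner property is established: for any $a \in D(U_q(\su(2)))$ and $v,w \in V_\alpha$ one has $\beta_\alpha(v,\pi_\alpha(a)w) = \phi^\alpha(\pi_\alpha(a)w)(v) = (\pi_\alpha^*(a)\phi^\alpha(w))(v) = \phi^\alpha(w)(\pi_\alpha(S(a))v) = \beta_\alpha(\pi_\alpha(S(a))v, w)$, using exactly the definition of the dual representation via the antipode.

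The first step is therefore to write down the dual representation $\pi_\alpha^*$ on $V_\alpha^*$ via $(\pi_\alpha^*(a)\beta)(v) = \beta(\pi_\alpha(S(a))v)$, and to reformulate ``$\phi^\alpha$ is an intertwiner'' as the matrix identity $\phi^\alpha \circ \pi_\alpha(a) = \pi_\alpha^*(a) \circ \phi^\alpha$ for all generators $a$ of $D(U_q(\su(2)))$. Because $D(U_q(\su(2))) = U_q(\su(2)) \,\hat\otimes\, F_q(\SU(2))^{op}$ is generated by $q^{\pm J_z}, J_\pm$ together with the matrix elements $u^{\;\;m}_{J\;\;n}$ (equivalently the fundamental $a,b,c,d$), it suffices to check the identity on these. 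For the $U_q(\su(2))$-generators this reduces, block by block in $I$, to the statement that $\epsilon_I : V_I \to V_I^*$ is a $U_q(\su(2))$-intertwiner — which is precisely the content of the discussion around \eqref{epsexpl} that is already in the paper, so this part is essentially free once the block-diagonal structure $\delta_{IJ}$ in \eqref{Lorint} is in place. The factor $v_I^{1/2}$ is a block-dependent scalar and commutes with the $U_q(\su(2))$-action, so it does not disturb the $U_q(\su(2))$-intertwining.

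The genuine work — and the main obstacle — is checking compatibility with the $F_q(\SU(2))$-part, where the action \eqref{repQLGmod} mixes different $U_q(\su(2))$-isotypic components $V_L \to V_M \oplus \cdots$ and is governed by the coefficients $\Lambda^{JM}_{NL}(\alpha)$. Here one must show that
\begin{align}\label{phi-Fq-compat}
\phi^\alpha \circ \pi_\alpha(u^{\;\;m}_{J\;\;n}) = \pi_\alpha^*(u^{\;\;m}_{J\;\;n}) \circ \phi^\alpha,
\end{align}
where the right-hand side involves $\pi_\alpha(S^{-1}(u^{\;\;m}_{J\;\;n}))$ (the antipode on the $F_q^{op}$-subalgebra being $S^{-1}$) composed with the $\epsilon_I$'s. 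Expanding both sides using \eqref{repQLGmod}, \eqref{antFq} and the closed form \eqref{epsexpl} for $\epsilon_I$, the identity becomes a relation among Clebsch--Gordan coefficients and the $\Lambda$-coefficients. The key input will be the symmetry $\Lambda^{JM}_{NL}(\alpha) = \Lambda^{JM}_{LN}(\alpha)$ stated after \eqref{repQLGmod} (reflecting that these are symmetrised analytic continuations of $6j$-symbols), together with the permutation symmetry and first orthogonality relation \eqref{orthogonality} of the Clebsch--Gordan coefficients and the identities \eqref{epsilon} relating $\epsilon_I \epsilon^I$ to $v_I^{-1}$ and to $\pi_I(\mu)$; the presence of the factor $v_I^{1/2}$ in \eqref{Lorint} and of the $\mu$-twist in the Haar measure \eqref{integral} is exactly what makes the $v_P$- and $\pi_I(\mu)$-factors cancel. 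Rather than grinding through the full index manipulation, I would package this as: the map $\epsilon_I$ already handles the $\su(2)$-covariance in each block; the extra scalar $v_I^{1/2}$ is forced by requiring covariance under the $F_q(\SU(2))$-generators, and a short computation with \eqref{epsexpl}, \eqref{antFq} and the $\Lambda$-symmetry confirms \eqref{phi-Fq-compat}. Once \eqref{phi-Fq-compat} holds, combined with the block-diagonal $\su(2)$-intertwining, $\phi^\alpha$ intertwines all of $D(U_q(\su(2)))$, the two representations $\pi_\alpha$ and $\pi_\alpha^*$ are equivalent, and the invariance of $\beta_\alpha$ follows as above. I expect step \eqref{phi-Fq-compat} — controlling the mixing coefficients $\Lambda$ — to be where essentially all the difficulty lies; everything else is bookkeeping with structures already set up in the excerpt.
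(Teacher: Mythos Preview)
Your plan is correct and follows essentially the same route as the paper: reduce everything to verifying the intertwining condition $\phi^\alpha\circ\pi_\alpha(a)=\pi_\alpha^*(a)\circ\phi^\alpha$, observe that bijectivity and the invariance of $\beta_\alpha$ are formal consequences, and carry out the nontrivial check using the explicit formula \eqref{epsexpl} for $\epsilon_I$, the permutation symmetry \eqref{orthogonality} of the Clebsch--Gordan coefficients, and the symmetry $\Lambda^{JM}_{NL}(\alpha)=\Lambda^{JM}_{LN}(\alpha)$. The paper organises the computation slightly differently---it works directly with a general basis element $X^{\;\;d}_{L\;\;e}\otimes g^{\;\;f}_{M\;\;g}$ of the double dual $F_q(\SL(2,\C)_\RR)^*$ and verifies \eqref{condition} in one pass, rather than splitting into $U_q(\su(2))$- and $F_q(\SU(2))^{op}$-generators---but the algebraic content is identical and the same identities are invoked.

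One small correction: your aside that ``the $\mu$-twist in the Haar measure \eqref{integral} is exactly what makes the $v_P$- and $\pi_I(\mu)$-factors cancel'' is misplaced here. The Haar measure plays no role in this lemma; the only cancellation needed is between the $v_I^{1/2}$ in \eqref{Lorint} and the phases/$q$-powers coming from $\epsilon_I$ and the Clebsch--Gordan permutation symmetry, exactly as the paper's computation shows. This does not affect the soundness of your plan.
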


{\em Proof:}  We first show that the map $\phi_\alpha: V_\alpha\rightarrow V_\alpha^*$ is an intertwiner between the representation of the quantum Lorentz group on $V_\alpha$ and on its dual $V_\alpha^*$. 
By definition, the latter is given by the antipode of the quantum Lorentz group
$$\pi_{\alpha}^*(a) f = f \circ \pi_{\alpha}(S(a))\qquad \forall a \in D(U_q(\su(2))), f \in V_{\alpha}^*.$$ In terms of the basis $(X^{\;\;a}_{I \;\; a'} \otimes g^{\;\;b}_{J \;\; b'})_{I,J,a,a',b,b'}$ defined after \eqref{dualAN2rels} and the bases  $\{e^I_a\}_{I,a}$ and $\{e^{Ia}\}_{I,a}$ of $V_\alpha$ and $V_\alpha^*$,
the condition that $\phi_\alpha: V_\alpha\rightarrow V^*_\alpha$ defines an intertwiner 
takes the form \ros
\beq
\label{condition}
\sum_{J,b} \phi^{\alpha }_{IaJb} \, \pi_{\alpha}(S(X^{\;\;d}_{L \;\; e} \otimes g^{\;\;f}_{M \;\; g}))^{Jb}_{\;\;\; Kc} = \sum_{J,b}  \pi_{\alpha}(X^{\;\;d}_{L \;\; e} \otimes g^{\;\;f}_{M \;\; g})^{Jb}_{\;\;\; Ia} \, \phi^{\alpha}_{JbKc}.
\eeq
To prove this identity, we compute the matrix elements $\pi_\alpha(X^{\;\;d}_{L \;\; e} \otimes g^{\;\;f}_{M \;\; g})$ using 
\eqref{repQLG}. This yields \ros
\beq
\label{matrixelementsrepQLG}
\pi_{\alpha}(X^{\;\;d}_{L \;\; e} \otimes g^{\;\;f}_{M \;\; g})^{Jb}_{\;\;\; Ia} = \delta^b_e \delta_L^J \; \sum_N \left( \begin{array}{cc} d & f \\
                          L & M  \end{array} \right| \left. \begin{array}{c} N \\ h \end{array} \right)                            \left( \begin{array}{c} h \\ N
 \end{array} \right| \left. \begin{array}{cc} M & I \\ g & a \end{array} \right) \Lambda^{MN}_{JI}(\alpha).
\eeq
To compute the matrix elements of  $S(X^{\;\;d}_{L \;\; e} \otimes g^{\;\;f}_{M \;\; g})$ in $\End V_{\alpha}$, we use 
 that the antipode is an  anti-algebra morphism together with expression \eqref{coFq} for  the antipode on $F_q(\SU(2))$. From the identification $D(U_q(\su(2)))=F_q(\SU(2))\hat \otimes F_q(AN(2))$ and the duality between $F_q(AN(2))$ and $F_q(\SU(2))^{op}$, it follows that the  antipode on $D(U_q(\su(2)))$ takes the form
\begin{align}
&S(X^{\;\;d}_{L \;\; e} \otimes g^{\;\;f}_{M \;\; g}) =(1 \otimes S^{-1} (g^{\;\;f}_{M \;\; g}))\cdot (S(X^{\;\;d}_{L \;\; e}) \otimes 1)\\
&S^{-1}(g^{\;\;f}_{M \;\; g}) = \epsilon_{M ig} \, g^{\;\;i}_{M \;\; j} \, \epsilon_M^{-1fj}, \;\;\;\;\;\; S(X^{\;\;d}_{N \;\; e}) = \epsilon_{N ke} \, X^{\;\;k}_{L \;\; l} \, \epsilon_L^{-1dl}.\nonumber
\end{align}
From  \eqref{repQLG}  we then obtain that its matrix elements  are given by \ros
\begin{align}
&\pi_{\alpha}(S(X^{\;\;d}_{L \;\; e} \otimes g^{\;\;f}_{M \;\; g}))^{Jb}_{\;\;\; Kc} = \epsilon_{Lce} \, \epsilon_L^{-1dn} \, \epsilon_{M pg} \, \epsilon_M^{-1fq} \, 
\sum_N \left( \begin{array}{cc} b & p \\
                          J & M  \end{array} \right| \left. \begin{array}{c} N \\ h \end{array} \right)                            \left( \begin{array}{c} h \\ N
 \end{array} \right| \left. \begin{array}{cc} M & L \\ q & n \end{array} \right) \Lambda^{MN}_{JL}(\alpha)  
 \nn \\ 
 &=v_L^{1/2} e^{i \pi (g-d-f)} q^{-g+d+f} \, \epsilon_{Lce} \,
\sum_N \left( \begin{array}{cc} -f & -d \\
                          M & L  \end{array} \right| \left. \begin{array}{c} N \\ h \end{array} \right)                            \left( \begin{array}{c} h \\ N
 \end{array} \right| \left. \begin{array}{cc} J & M \\ b & -g \end{array} \right) \Lambda^{MN}_{JL}(\alpha),
\end{align}
where we  used the definition of the $U_q(\su(2))$-intertwiner $\epsilon$ given in \eqref{epsexpl} and the reality condition \eqref{real} on  the Clebsch-Gordan coefficients.
We can now rewrite the intertwining property \eqref{condition} as \ros
\beqa
\label{condition2}
&& \sum_{J,b } \phi^{\alpha }_{IaJb} \,  v_L^{1/2} e^{i \pi (g-d-f)} q^{-g+d+f}  \, \epsilon_{Lce} \,
\sum_N \left( \begin{array}{cc} -f & -d \\
                          M & L  \end{array} \right| \left. \begin{array}{c} N \\ h \end{array} \right)                            \left( \begin{array}{c} h \\ N
 \end{array} \right| \left. \begin{array}{cc} J & M \\ b & -g \end{array} \right) \Lambda^{MN}_{JK}(\alpha) \nn \\
 &=& \sum_N \left( \begin{array}{cc} d & f \\
                          L & M  \end{array} \right| \left. \begin{array}{c} N \\ h \end{array} \right)                            \left( \begin{array}{c} h \\ N
 \end{array} \right| \left. \begin{array}{cc} M & I \\ g & a \end{array} \right) \Lambda^{MN}_{LI}(\alpha) \, \phi^{\alpha}_{LeKc}.
\eeqa
This identity now follows by inserting the definition \eqref{Lorint}
 of $\phi^{\alpha}$. Using \eqref{epsexpl} and the permutation symmetry \eqref{orthogonality} of the Clebsch-Gordan coefficients, we obtain for the first expression in  \eqref{condition2} \ros
\beqa
&& \sum_{N,b} c_{\alpha} v_I^{1/2} v_K^{1/2} e^{i \pi (g-d-f)} q^{-g+d+f} \delta_{LK} \, \epsilon_{I ba} \, \epsilon_{Kce} \left( \begin{array}{cc} -f & -d \\
                          M & K  \end{array} \right| \left. \begin{array}{c} N \\ h \end{array} \right)                            \left( \begin{array}{c} h \\ N
 \end{array} \right| \left. \begin{array}{cc} I & M \\ b & -g \end{array} \right) \Lambda^{MN}_{IK}(\alpha) \nn \\
 && = \sum_N c_{\alpha} v_K^{1/2} e^{i \pi (g-d-f+a)} q^{-g+d+f-a} \delta_{LK} \, \epsilon_{Kce} \left( \begin{array}{cc} -f & -d \\
                          M & K  \end{array} \right| \left. \begin{array}{c} N \\ h \end{array} \right)                            \left( \begin{array}{c} h \\ N
 \end{array} \right| \left. \begin{array}{cc} I & M \\ -a & -g \end{array} \right) \Lambda^{MN}_{IK}(\alpha) \nn \\
 && = \sum_N c_{\alpha} v_K^{1/2} e^{i \pi (g-d-f+a)} q^{-g+d+f-a} \delta_{LK} \, \epsilon_{Kce} \left( \begin{array}{cc} d & f \\
                          K & M  \end{array} \right| \left. \begin{array}{c} N \\ -h \end{array} \right)                            \left( \begin{array}{c} -h \\ N
 \end{array} \right| \left. \begin{array}{cc} M & I \\ g & a \end{array} \right) \Lambda^{MN}_{IK}(\alpha), \nn
\eeqa
Note that this expression vanishes unless $d+f = g+a$ in which case it reduces to 
$$
\sum_N c_{\alpha} v_K^{1/2} \delta_{LK} \, \epsilon_{Kce} \left( \begin{array}{cc} d & f \\
                          K & M  \end{array} \right| \left. \begin{array}{c} N \\ h \end{array} \right)                            \left( \begin{array}{c} h \\ N
 \end{array} \right| \left. \begin{array}{cc} M & I \\ g & a \end{array} \right) \Lambda^{MN}_{IK}(\alpha).
$$
This coincides with the expression obtained by inserting \eqref{Lorint} into the second expression in\eqref{condition2} for
 $d+f = g+a$. Otherwise, the second expression in \eqref{condition2} vanishes. This proves that $\Phi^\alpha: V_\alpha\rightarrow V^*_\alpha$ defines an intertwiner between the representations of the quantum Lorentz group on $V_\alpha$ and on its dual $V_\alpha^*$. 
The second part of the lemma then follows  directly from the definitions and from the intertwining property of $\phi^{\alpha}$. \hfill $\square$

In the following, we will set the constant $c_{\alpha}$ to one for all $\alpha$.

\subsubsection{Graphical calculus}

We  will now construct the four-simplex amplitude from EPRL tensors.
To define the amplitude, it  is convenient to use diagrammatic methods. More precisely, we will make use of the graphical calculus of spin networks. The idea behind this calculus is to organise tensor calculations according to a diagram drawn in the plane. To cover calculations based on the representation theory of the quantum Lorentz group, some constructions used routinely  in the representation theory of finite-dimensional Hopf algebras do not work and the calculus is restricted to a certain class of diagrams. The elements needed for the construction of the four-simplex amplitude 
are as follows.

We denote by oriented dashed lines irreducible representations of $U_q(\su(2))$ and by oriented solid  lines EPRL representations. The tensor product of representations is depicted by drawing the lines next to each other from the left to the right. The Clebsch-Gordan maps for $U_q(\su(2))$ are depicted by three-valent vertices involving three dashed lines as shown in Figure \ref{beta} a).  The inclusion map $f^\alpha_K: V_K\rightarrow V_{\alpha(K)}$ is depicted by a black box as shown in Figure \ref{beta} b).
We choose the conventions that diagrams are evaluated from bottom to top.
\begin{figure}
  \includegraphics[scale=0.4]{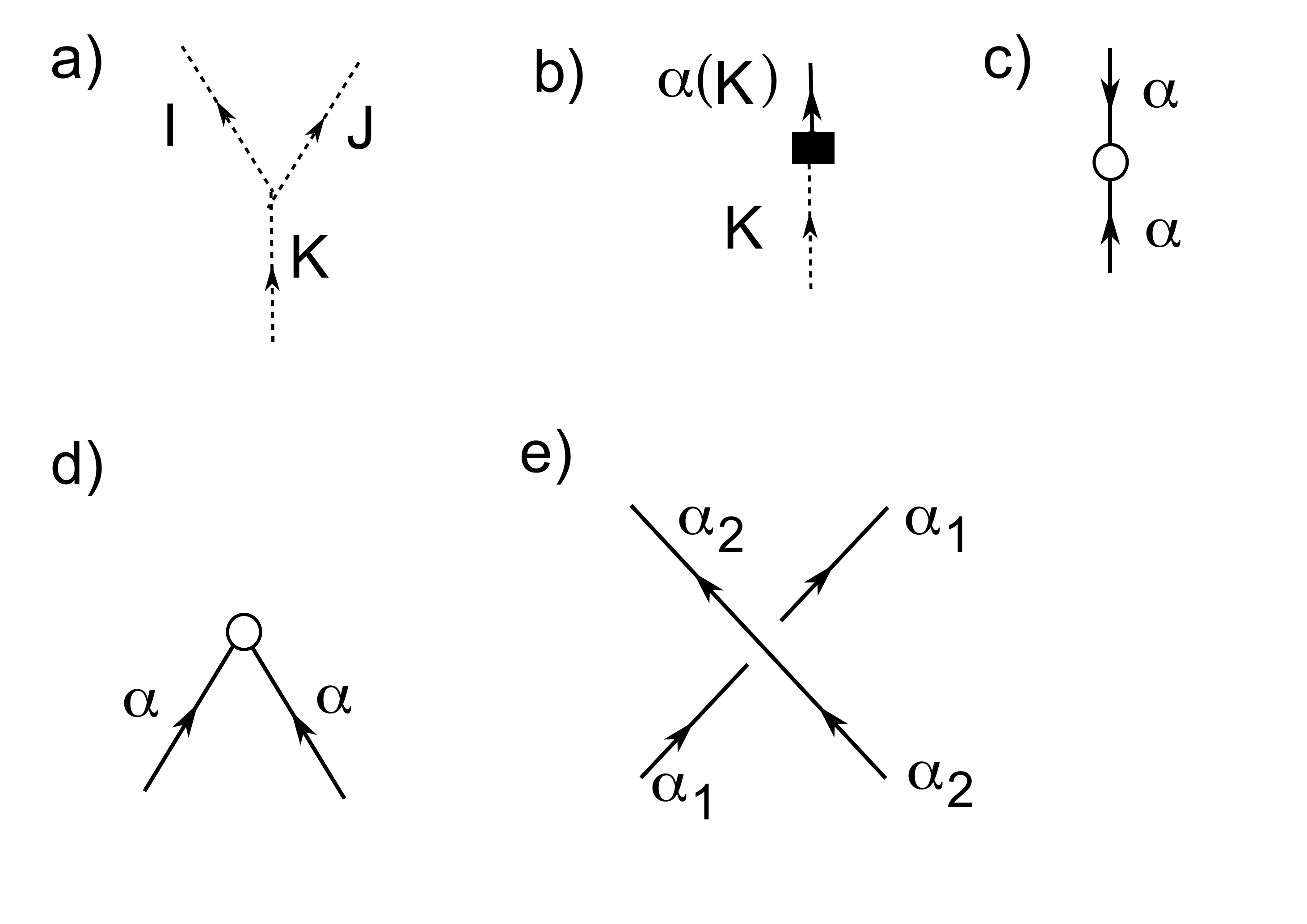}
\centering  \caption{ Diagram components:\newline
a) Clebsch-Gordan intertwiner $C_{K}^{IJ}: V_K\rightarrow V_I\otimes V_J$ for $U_q(\su(2))$.\newline
b) Inclusion map $f^\alpha_K: V_K\rightarrow V_{\alpha(K)}$.\newline
c) Intertwiner $\phi_\alpha: V_\alpha\rightarrow V_\alpha^*$.\newline
d) Bilinear form $\beta_\alpha: V_\alpha\otimes V_\alpha\rightarrow \C$.\newline
e) Braiding $c_{\alpha_2,\alpha_1}: V_{\alpha_2}\otimes V_{\alpha_1}\rightarrow V_{\alpha_1}\otimes V_{\alpha_2}$.
}
  \label{beta}
\end{figure}
With this notation, it follows that the diagram for a EPRL tensor $\Psi^{\alpha}$ is given by Figure \ref{eprltensor} a) and the one for an EPRL intertwiner $\iota_\alpha$ by Figure \ref{eprltensor} c). 
To keep the diagrams simple, we also introduce a shorthand notation, in which a EPRL tensor is depicted by a solid vertex
 $v$ with four  `legs' pointing upwards, away from the vertex as shown in Figure \ref{eprltensor} b). The vertex is  decorated  with a cilium (short dashed line in Figure \ref{eprltensor} b)) that indicates in which order the representations are coupled to the trivial representation. This cilium is labelled by the intermediate $U_q(\su(2))$-representation $J$ as shown in Figure \ref{eprltensor} b).  The picture for an EPRL intertwiner is obtained by rotating the one for an EPRL tensor by 180 degree and inverting the orientation of the edges as shown in Figure \ref{eprltensor} c),d). 
 
 To each vertex we associate  a  basis element $x_A$ of $F_q(\SL(2,\C))$, which is omitted in the diagrams for reasons of legibility.   The diagram for the tensor product of $p$ EPRL  tensors $\Psi^{\alpha_1} \otimes ... \otimes \Psi^{\alpha_p}$ is obtained by  placing the $p$ vertices on a horizontal line, in the order in which they appear in the tensor product read from left to right. 

\begin{figure}
  \includegraphics[scale=0.4]{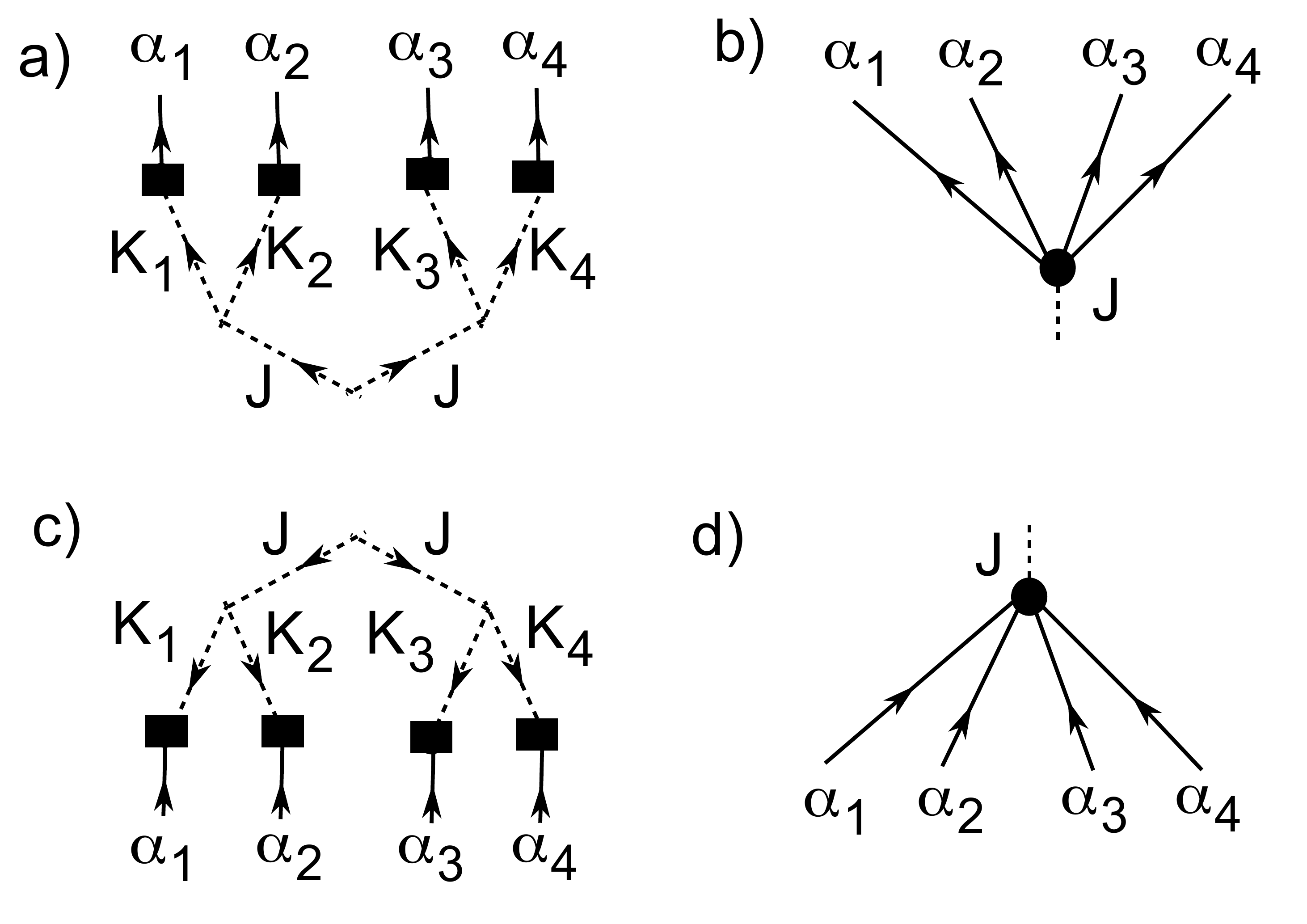}
\centering  \caption{$\quad$\newline
  a) EPRL tensor $\psi^\alpha\in\mathcal L( \C,\otimes_{i=1}^4 V_{\alpha_i})\otimes F_q(SL(2,\C)_\RR)$,  
  b) Short notation for the EPRL tensor $\psi^\alpha$.\newline
  c) EPRL intertwiner $\iota_\alpha:\otimes_{i=1}^4 V_{\alpha_i}\rightarrow\C$, 
  d) Short notation for the EPRL intertwiner $\iota_\alpha$.
  }
  \label{eprltensor}
\end{figure}

EPRL tensors can be paired using the invariant bilinear form $\beta_{\alpha} : V_{\alpha} \otimes V_{\alpha} \rightarrow \C$ defined in equation \eqref{bilinear}. In the diagrams, this bilinear form is depicted by a white circle with two legs, one to the left, one to the right and both pointing towards the vertex, as  shown in Figure \ref{beta} d). The precise details of the pairing are organised with the aid of the diagrammatic method.  
As the order of the two arguments of the bilinear form is important, the convention is that the first argument corresponds to the element on the left-hand side and the second argument to the right-hand side of the circle.

When more than one pair of edges are paired in a diagram,  crossings  can occur in the diagrams. To each such crossing we associate a braiding $c_{\alpha_2, \alpha_1} = \tau \circ( \pi_{\alpha_2} \otimes \pi_{\alpha_1})(R)$, where $R$ is the $R$-matrix of the quantum Lorentz group defined in \eqref{Rmatrix}. Again, the precise form of the crossing is important and the braiding $c_{\alpha_2, \alpha_1}$ will be associated to the crossing where the left-hand leg goes under the right-hand leg as shown in Figure \ref{beta} e).

The diagrams are composed horizontally by tensor product, and vertically by the composition of maps. In this context, a vertical line corresponds to the identity map on a representation space $V_{\alpha}$. Note, however,  that in contrast to the diagrams  for ribbon categories, the upward and downward arcs have no direct meaning. The rules are that lines go upwards the EPRL tensors  and  can only be paired by the bilinear form $\beta_\alpha$, but they are free to go up and down between the vertices. 

An important class of diagrams are closed diagrams. A closed diagram $\Gamma$ consists of a set of $p$ vertices arranged in a horizontal line, composed with crossings, vertical lines and pairings, in such a way that there are no  free ends. A closed diagram $\Gamma$  corresponds to an element $\phi(\Gamma)$ of $\End(\C) \otimes F_q(\SL(2,\C))^{\otimes p} \cong F_q(\SL(2,\C))^{\otimes p}$.
The evaluation $ev(\Gamma)$ of a closed diagram $\Gamma$ is then defined via the Haar integral in the spirit of Feynman diagram evaluations.

The naive evaluation of a closed diagram with $p$ vertices would correspond to setting  $ev(\Gamma)=h^{\otimes p} (\phi(\Gamma))$. However, such an evaluation is generically divergent for the Lorentzian model and needs to be regularised.  
This is done in analogy to the classical case \cite{BB,roberto} by removing the Haar measure or integration at one (randomly chosen) vertex as in \cite{PK}. The invariance of the Haar integral implies that the result is independent of the chosen vertex. Moreover, it  implies  that $(h^{\otimes p - 1} \otimes id) (\phi(\Gamma)) = ev(\Gamma) 1$, where $ev(\Gamma)$ is a complex number or infinity and 1 is the unit in $F_q(SL(2,\C))$.  
As we will show in the following, if $\Gamma$ is the complete graph with five vertices, $ev(\Gamma)$ is finite, i.~e.~the diagram  $\Gamma$ is integrable. The evaluation of $\Gamma$ is therefore obtained by applying $p-1$ copies of the Haar measure to $\Phi(\Gamma)$ and then applying the counit of $F_q(SL(2,\C))$  to the resulting expression
$$
ev(\Gamma) = \epsilon \left( (h^{\otimes p - 1} \otimes id) (\phi(\Gamma)) \right).
$$

\subsubsection{Amplitude for the $4$-simplexes}

\label{ampsec}

Let $M$ be an oriented, closed  triangulated $4$-manifold with sets of $n$-simplexes $\Delta^{(n)}$. Consider a $4$-simplex $\sigma$ 
of $M$. The set $\Delta_{\sigma}^{(3)}$ of tetrahedra of $\partial \sigma$ will be parametrised by $(a)$, $a=1,...,5$. 
Consequently, the triangles  $\Delta_{\sigma}^{(2)}$ of $\sigma$ will be labelled by ordered pairs $(ab)$, with $a<b$.
Let $\mathcal{C}$ denote the braided tensor category of representations of the quantum Lorentz group.

\begin{definition} {\bf (Colouring)}
A colouring $\alpha: \Delta_{\sigma}^{(2)} \rightarrow \mathrm{Ob}(\mathcal{C})$ associates an EPRL representation 
$\alpha_{ab} \in  \mathrm{Ob}(\mathcal{C})$ of $D(U_q(\su(2)))$ to each oriented triangle $(ab)$ in $\Delta_{\sigma}^{(2)}$. 
\end{definition} 

From the colourings, one can construct state spaces for each tetrahedron of $\partial \sigma$. The construction records 
the orientation of each tetrahedron in the boundary of $\sigma$
$$
\partial \sigma = (a) - (b) + (c) - (d) + (e).
$$
\begin{definition} {\bf (State space)}
Let $\alpha: \Delta_{\sigma}^{(2)} \rightarrow  \mathrm{Ob}(\mathcal{C})$ denote a colouring. The state space associated with a 
tetrahedron $(a)$ appearing with positive sign in $\partial \sigma$ is read out of the colouring of its boundary 
$\partial(a) = (ab) - (ac) + (ad) - (ae)$ and is defined by
$$
H_{a} = \mathcal{L} \left( \C, V_{\alpha_{ab}} \otimes V_{\alpha_{ad}} \otimes V_{\alpha_{ac}} \otimes V_{\alpha_{ae}} \right) \otimes F_q(\SL(2,\C)_{\mathbb{R}})
$$
Likewise, the state space for negative tetrahedra is given by 
$$
H_{a}^* =  \mathcal{L} \left( \C, V_{\alpha_{ae}} \otimes V_{\alpha_{ac}} \otimes V_{\alpha_{ad}} \otimes V_{\alpha_{ab}} \right) \otimes F_q(\SL(2,\C)_{\mathbb{R}})
$$
A state is an assignment of a EPRL tensor  $\Psi_{a}$ in either $H_{a}$ or $H_a^*$ to each tetrahedron $(a)$ in 
$\Delta_{\sigma}^{(3)}$.
\end{definition}

We are now ready to define the amplitude, or partition function, for the $4$-simplex $\sigma$, which is given by the evaluation 
of the diagram in   Figure \ref{amplitude}.
\begin{figure}
  \includegraphics[scale=0.4]{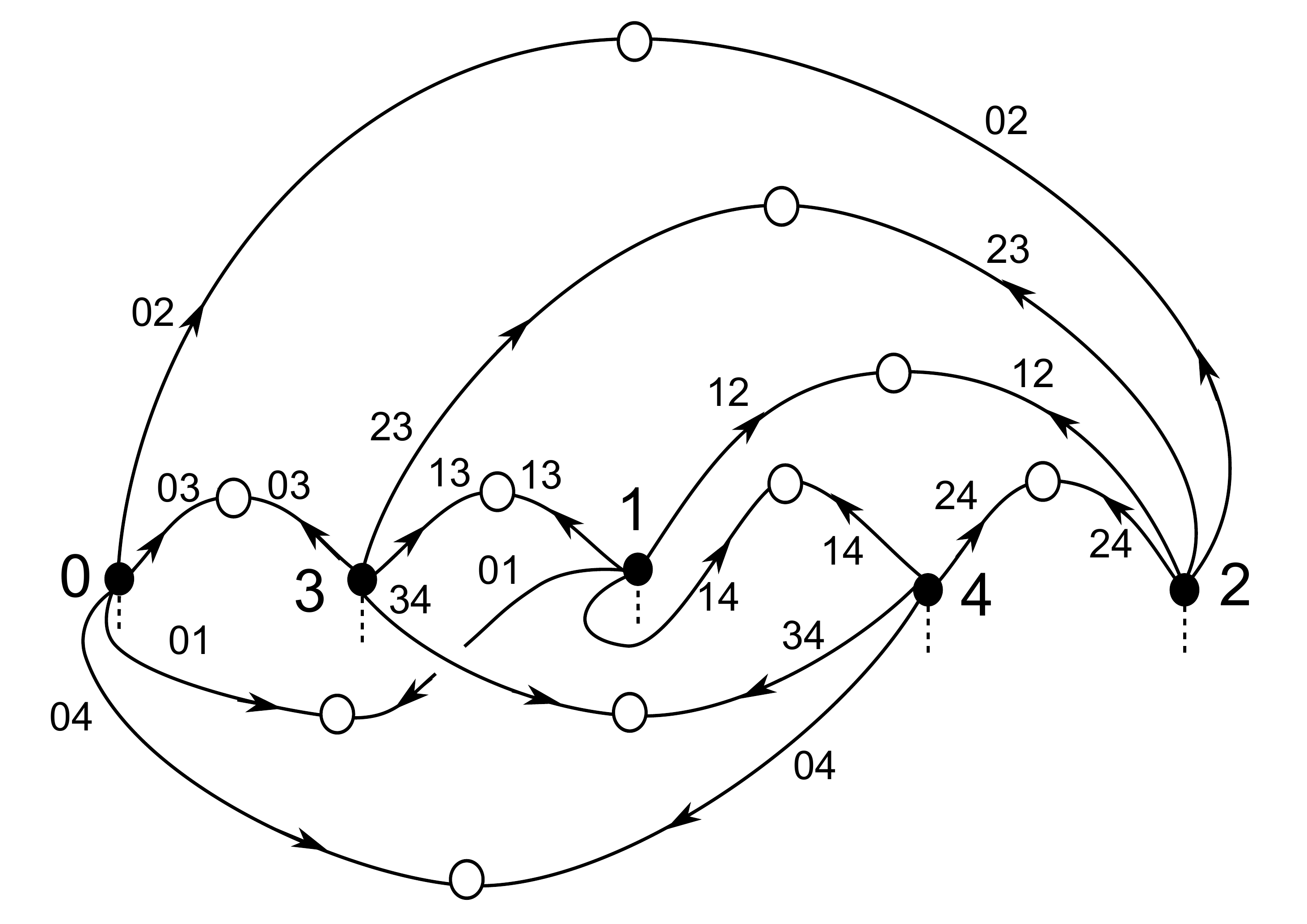}
\centering  \caption{Diagram for the four-simplex amplitude $ev(\Gamma)$. The oriented lines carry EPRL representations. The solid black circles represent EPRL tensors, and the white circles denote the bilinear forms $\beta_\alpha$ on the representation spaces $V_\alpha$.}
  \label{amplitude}
\end{figure}

\begin{definition} {\bf ($4$-simplex amplitude)}
\label{amplitudel}
Let $\alpha: \Delta_{\sigma}^{(2)} \rightarrow \mathrm{Ob}(\mathcal{C})$ denote a colouring and $\Psi$ a state. 
The amplitude  for the $4$-simplex $\sigma$  is a linear map
\begin{align}
&\mathcal{Z}_{\sigma} : H_{a}^* \otimes H_{b} \otimes H_{c}^* \otimes H_{d} \otimes H_{e}^* \rightarrow \C,\nonumber 
\end{align}
obtained by evaluating the spin network diagram in Figure \ref{amplitude}:  $\mathcal{Z}_{\sigma}(\Psi_a \otimes ... \otimes \Psi_e) = ev(\Gamma)$.
\end{definition}

We will now prove that the evaluation of this diagram is well-defined, i.~e.~that the four-simplex amplitude converges. 

\begin{theorem} The four-simplex amplitude $ev(\Gamma)$ defined in Definition \ref{amplitudel} converges absolutely.\label{ampconv}
\end{theorem}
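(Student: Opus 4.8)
The plan is to reduce the evaluation $ev(\Gamma)$ of the complete graph on five vertices to an iterated sum over intermediate $U_q(\su(2))$-spins and over the "running" spins of the principal representations, and then to bound the summand using the asymptotic estimate \eqref{lambdas} on the coefficients $\Lambda^{IM}_{KL}(\alpha)$ exactly as in the proof of Theorem \ref{qeprlth}. Concretely, I would first unfold the diagram in Figure \ref{amplitude}: the five vertices carry EPRL tensors $\Psi^{\alpha_a,J_a}$, the ten edges are paired by the bilinear forms $\beta_{\alpha_{ab}}$ of Lemma \ref{lemma}, and the crossings contribute braidings $c_{\alpha,\beta}$ whose matrix elements are given by \eqref{repR}. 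After applying $h^{\otimes 4}$ to the four integrated vertices (one vertex is left unintegrated by the regularisation prescription), the Haar measure \eqref{integral} collapses, as in Theorem \ref{qeprlth}, each $F_q(\SU(2))$-integral to a product of $4$-valent symbols $\left(\begin{smallmatrix}K_1&K_2&K_3&K_4\\ a_1&a_2&a_3&a_4\end{smallmatrix}\right)_P$ and each $F_q(\mathrm{AN}(2))$-integral to a factor $[2J+1]\,\pi_J(\mu^{-1})$ together with the coefficients $\Lambda^{JM}_{NL}(\alpha)$ coming from \eqref{repQLG}. The upshot is a finite-dimensional "index" sum (over all magnetic numbers and over the $\binom{5}{2}$-many internal recoupling spins $P_{ab}$) wrapped inside an infinite sum over the $\SU(2)$-content of the principal representations; schematically
\beq
\label{evgammaseries}
ev(\Gamma)=\sum_{\{I_{ab}\}}\ \sum_{\text{finite}}\ \Big(\prod_{\text{edges}}[2I_{ab}+1]\Big)\Big(\prod_{\text{vertices } a}\ \prod_{b}\Lambda^{I_{ab}\,M}_{K_{ab}L_{ab}}(\alpha_{ab})\Big)\times(\text{bounded CG factors}),
\eeq
where the finite inner sum ranges over magnetic indices and the intermediate spins $P$, $M$, $N$ produced by the Haar integrations, and the Clebsch--Gordan and $\beta_\alpha$ matrix elements are all bounded by $1$ in absolute value (using $|\epsilon_{Iab}|=|c_I|$ and $|c_I|=v_I^{-1/2}$ with $v_I$ a phase, hence $|c_I|=1$).

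Next I would make the index sum explicit enough to count powers: the finite inner sum over magnetic numbers for each fixed tuple of spins has cardinality polynomial in the spins (each magnetic index ranges over $2I+1$ values and there are finitely many of them, with $I_{ab}$ dominating $M,N,P$ up to bounded shifts by the triangle inequalities in the CG coefficients), so it contributes at most a fixed power $I^{c}$ for a constant $c$ depending only on the combinatorics of $K_5$. Each $[2I_{ab}+1]$ behaves like $q^{-2I_{ab}}$ as $I_{ab}\to\infty$, and each of the ten (over the five vertices, each of valence four, but each edge shared, so effectively $20$ occurrences of $\Lambda$ with however only $10$ distinct running spins $I_{ab}$ appearing, each in two vertices) factors $\Lambda^{I_{ab}\,I_{ab}+N}_{K_{ab}L_{ab}}(\alpha_{ab})$ is bounded by $\mathcal C\,I_{ab}\,q^{2I_{ab}}$ once $I_{ab}>\mathcal J$ by \eqref{lambdas}. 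Collecting exponents: each edge spin $I_{ab}$ appears in exactly one factor $[2I_{ab}+1]\sim q^{-2I_{ab}}$ and in exactly two factors $\Lambda\le \mathcal C I_{ab}q^{2I_{ab}}$, giving a net weight $\lesssim I_{ab}^{?}q^{-2I_{ab}}q^{4I_{ab}}=I_{ab}^{?}q^{2I_{ab}}$, i.e. a strictly decreasing geometric factor $q^{2I_{ab}}$ in each of the ten independent summation variables, times an overall polynomial prefactor in all the $I_{ab}$. Therefore the multiple series is dominated by a product of ten convergent series of the form $\sum_{I=0}^{\infty}I^{\alpha}q^{\beta I}$ with $\beta>0$, which converges since $q\in\,]0,1[$; absolute convergence of $ev(\Gamma)$ follows.

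The main obstacle — and the step that requires genuine care rather than bookkeeping — is the power counting of the running spins $I_{ab}$ through the $\beta_\alpha$-pairings and the braidings. Unlike in Theorem \ref{qeprlth}, where each principal representation is "capped" by a single projector onto its lowest weight, here each edge $(ab)$ of $K_5$ glues two EPRL tensors via $\beta_{\alpha_{ab}}$, so the corresponding infinite sum over the $\SU(2)$-content of $V_{\alpha_{ab}}$ is genuinely shared between two vertices; I need to verify that the factor $q^{2I_{ab}}$ from \eqref{lambdas} accrues \emph{twice} (once per adjacent vertex) so that it can overcome the single $[2I_{ab}+1]\sim q^{-2I_{ab}}$ coming from $h_2$ and the polynomial growth from the magnetic-index sums and from $\phi^\alpha_{IaJb}\propto v_I^{1/2}$ (a phase, so harmless). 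A secondary subtlety is that the braiding matrix elements \eqref{repR} again involve $\Lambda$-coefficients, which contribute additional favourable factors $q^{2I}$ rather than harmful ones, so the crossings only help; I would note this explicitly. Once the exponent in each variable is checked to be strictly positive and the prefactor is polynomial, the conclusion is immediate by the same elementary fact ($\sum I^\alpha q^{\beta I}<\infty$ for $\beta>0$) already used above, and by the invariance of the Haar measure the regularised evaluation $ev(\Gamma)=\epsilon\big((h^{\otimes 4}\otimes\mathrm{id})(\phi(\Gamma))\big)$ is well-defined and independent of the chosen vertex. \hfill$\square$
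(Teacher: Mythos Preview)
Your overall strategy coincides with the paper's: expand the diagram, perform the Haar integrals, and dominate the remaining infinite series by a geometric series using the bound \eqref{lambdas} on the $\Lambda$-coefficients. However, the detailed power counting you propose is incorrect, and you skip the step that actually requires work.

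First, the counting of $\Lambda$-factors. You assert that each edge $(ab)$ contributes \emph{two} factors $\Lambda^{I\,M}_{KL}$, one from each endpoint. In the paper's computation this is not what happens: the invariance of $\beta_\alpha$ gives $\beta_\alpha(\pi_\alpha(x^I)e^K_a,\pi_\alpha(x^J)e^K_b)=\beta_\alpha(\pi_\alpha(x^I S(x^J))e^K_a,e^K_b)$, and the product of the two $g$-generators in $x^I S(x^J)$ is reduced via the multiplication \eqref{dualuqrels} to a single $g^{\;g}_{N\;h}$. Representing this produces \emph{one} factor $\Lambda^{NM}_{KK}$ per edge, not two. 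Likewise, the factors $[2J+1]$ do not appear ``one per edge'': they come from the four $h_2$-integrals at the four integrated vertices, and only after the delta functions and the spin-network recoupling are they redistributed to the edge variables as $[2N_i+1]^{1/2}$. With only one $\Lambda\sim q^{2N}$ per edge, a naive full $[2N+1]\sim q^{-2N}$ per edge would give net exponent zero and the polynomial prefactor would diverge; your margin of safety is illusory.

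Second, and more seriously, you dispose of the combinatorial factor as ``bounded CG factors'' with magnetic sums of polynomial size. This is exactly the step that is \emph{not} automatic. The $\pi_J(\mu^{-1})$-weights in the $h_2$-integrals mean some magnetic sums contribute $[2J+1]$ rather than a polynomial, and the proliferation of coupled Clebsch--Gordan coefficients in the expression for $Y$ cannot be controlled term by term. The paper handles this by recognising $Y$ (after implementing all $\delta$-constraints) as a $U_q(\su(2))$ spin network of the type analysed in \cite{PK}, and then recoupling it to a product of $6j$-symbols (each bounded by $1$) times an open diagram depending only on the fixed labels $K_1,\dots,K_{10}$. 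This recoupling is what produces a polynomial bound $Q$ on $Y$ and the clean $[2N_i+1]^{1/2}$ per edge; without it you do not have a proof.

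A minor point: your claim that $v_I$ is a phase (so that $|c_I|=1$) is false here. For the real deformation parameter $q\in\,]0,1[$ of the Lorentzian model one has $|v_I|=q^{-2I(I+1)}\neq 1$. In the paper's calculation these factors involve only the fixed spins $K_i$ and drop out of the asymptotics, but this must be checked, not asserted.
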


\begin{proof}
To prove this theorem, we  consider the diagram $\Gamma$ depicted in Figure \ref{amplitude_proof}. The elements of the diagrammatic calculus are EPRL tensors, bilinear forms and braidings as explained in the previous subsection. 
At each vertex, we place an EPRL tensor $\Psi^{\alpha}$, $\alpha = (\alpha_1, \alpha_2, \alpha_3, \alpha_4)$, 
$$
\Psi^{\alpha} = \sum_{I}  \, \pi_{\alpha_{1}} (x^{I_{1}}) \, \otimes ... \otimes \pi_{\alpha_{4}} (x^{I_{4}}) e^{K_{1}}_{a_{1}} \otimes ... \otimes e^{K_{4}}_{a_{4}} \left( \begin{array}{cccc} a_{1} & a_{2} & a_{3} & a_{4} \\ 
 K_{1} & K_{2} & K_{3} & K_{4} \end{array} \right) \otimes x_{I_1} ... x_{I_4}.
$$
To each line connecting two vertices we associate the bilinear form $\beta$. This implies that each line coloured by $\alpha(K)$ carries a propagator, i.e., an element of $F_q(\SL(2,\C)_{\mathbb{R}})^{\otimes 2}$ given by
\beqa
\mathcal{K}_{ab}^\alpha\!=\!\mathcal{K}_{ab}(x^I,x^J)\, x_I \!\otimes\! x_J\!=\!\beta \left( \pi_{\alpha} (x^I) e^K_a, \pi_{\alpha} (x^J) e^K_b \right)\, x_I \!\otimes\! x_J
\!=\! \beta \left( \pi_{\alpha} (x^I S(x^J))  e^K_a, e^K_b \right)\, x_I \!\otimes\! x_J.\eeqa
Note that the propagator is given by 
$
\mathcal{K}_{ab}^\alpha = (id \otimes S) \Delta K_{ab}^{\alpha}
$,
where $K^{\alpha}_{ab}$ is the linear form on $U_q(\mathfrak{sl}(2,\C)_{\mathbb{R}})$ defined by
$
K_{ab}^{\alpha} = \beta \left( \pi_{\alpha} ( \cdot ) e^K_a,  e^K_b \right).
$

The diagram also contains a crossing to which we associate  the braiding  $c_{\alpha, \beta} : V_{\alpha} \otimes V_{\beta} \rightarrow V_{\beta} \otimes V_{\alpha}$, given by $c_{\alpha, \beta} =  \tau \circ \pi_{\alpha} \otimes \pi_{\beta}(R)$. Throughout this proof, we will use the notation $R = R_{(1)} \otimes R_{(2)}$ for the $R$-matrix.

\begin{figure}
  \includegraphics[scale=0.4]{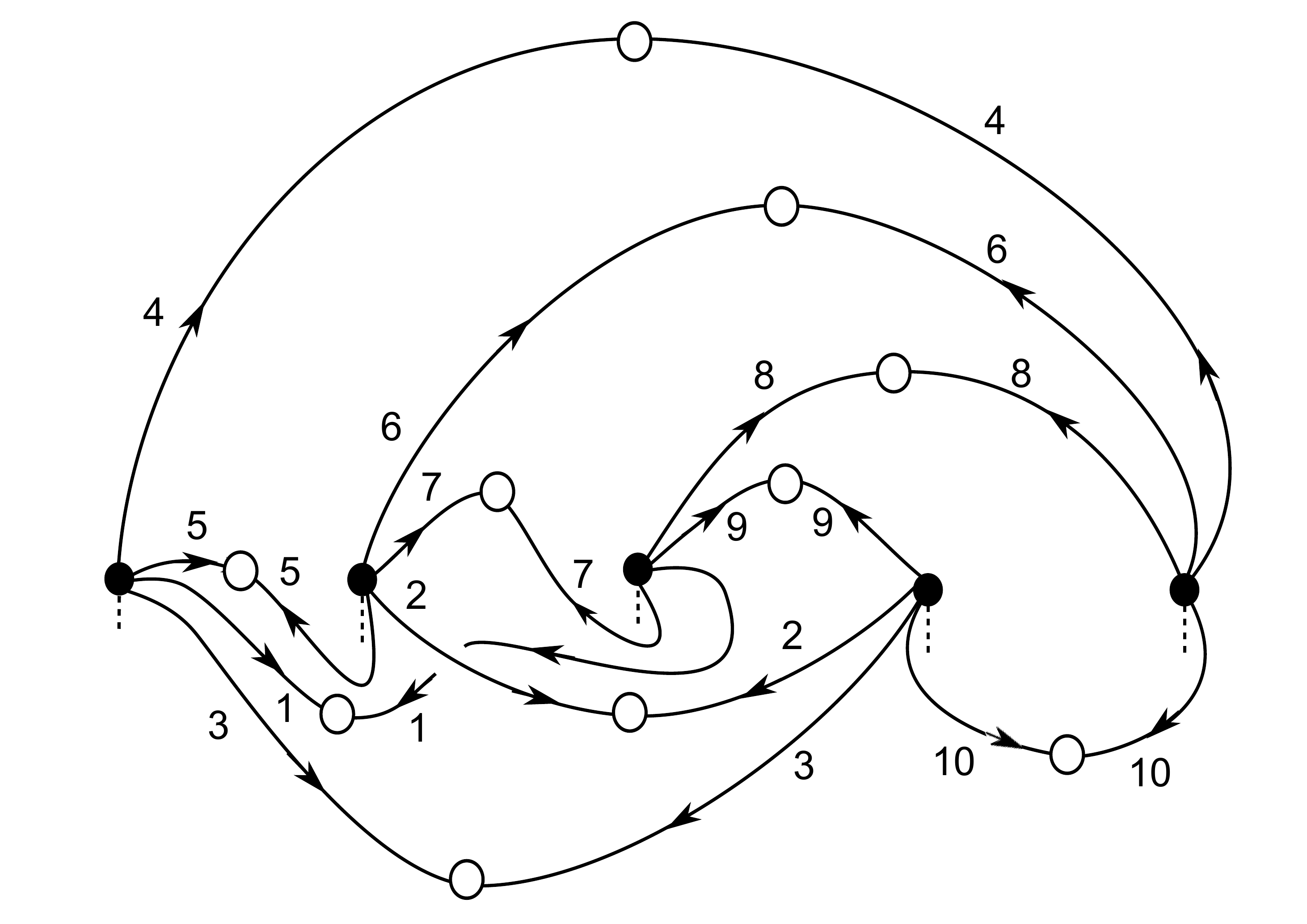}
\centering  \caption{Diagram for the four-simplex amplitude $ev(\Gamma)$ with the labeling used in the proof of Theorem \ref{ampconv}.}
  \label{amplitude_proof}
\end{figure}

To simplify the notation, we  label the lines  of the diagram with numbers  $i=1,...,10$ instead of  associating labels to the vertices. The endpoints of each line carry a basis element of $F_q(\SL(2,\C)_{\mathbb{R}})$ and our conventions are such that the left hand side of the line $i$ carries the basis element $x_{I_i}$ while the right hand side is coloured by the basis element $x_{J_i}$. With the choice of cilium depicted in Figure \ref{amplitude_proof}, the element $\phi(\Gamma)$ of $F_q(\SL(2,\C)_{\mathbb{R}})^{\otimes 5}$ associated to $\Gamma$ is thus given by \ros
\begin{align}
\label{phi}
\phi(\Gamma) = \sum_{I,J} 
&\left( \begin{array}{cccc} a_{4} & a_{5} & a_{1} & a_{3} \\ K_{4} & K_{5} & K_{1} & K_{3} \end{array} \right) 
\left( \begin{array}{cccc} a_{6} & a_{7} & a_{2} & b_{5} \\ 
 K_{6} & K_{7} & K_{2} & K_{5} \end{array} \right) 
 \left( \begin{array}{cccc} a_{8} & a_{9} & b_{1} & b_{7} \\ 
K_{8} & K_{9} & K_{1} & K_{7} \end{array} \right) \nn \\
&\left( \begin{array}{cccc} a_{10} & b_{3} & b_{2} & b_{9} \\ 
 K_{10} & K_{3} & K_{2} & K_{9} \end{array} \right) 
 \left( \begin{array}{cccc} b_{8} & b_{6} & b_{4} & b_{10} \\ 
 K_{8} & K_{6} & K_{4} & K_{10} \end{array} \right) \mathcal{K}_{a_1b_1}(x^{I_1}, x^{J_1} S^{-1}(R_{(2)})) \nn \\ 
&\mathcal{K}_{a_2b_2}(x^{I_2} S^{-1}(R_{(1)}), x^{J_2} )) \mathcal{K}_{a_3b_3}(x^{I_3} , x^{J_3} ))  \mathcal{K}_{a_{4}b_{4}}(x^{I_{4}} , x^{J_{4}} ))  \mathcal{K}_{a_{5}b_{5}}(x^{I_{5}} , x^{J_{5}} )) \mathcal{K}_{a_{6}b_{6}}(x^{I_{6}} , x^{J_{6}} )) \nn \\ 
&\mathcal{K}_{a_{7}b_{7}}(x^{I_{7}} , x^{J_{7}} )) \mathcal{K}_{a_{8}b_{8}}(x^{I_{8}} , x^{J_{8}} )) \mathcal{K}_{a_{9}b_{9}}(x^{I_{9}} , x^{J_{9}} ))  \mathcal{K}_{a_{10}b_{10}}(x^{I_{10}} , x^{J_{10}} ))\nn\\
&x_{I_4} x_{I_5} x_{I_1} x_{I_3} \otimes x_{I_6} x_{I_7} x_{I_2} x_{J_5} \otimes x_{I_8} x_{I_9} x_{J_1} x_{J_7} \otimes x_{I_{10}} x_{J_3} x_{J_2} x_{J_9} \otimes x_{J_8} x_{J_6} x_{J_4} x_{J_{10}}.
\end{align}
The next step is to calculate the propagators. This is achieved via a choice of basis, the elements of which are given by $x^I = X^{\;\;i}_{A \;\; j} \otimes g^{\;\;k}_{B \;\; l}$ and $x^J = X^{\;\;m}_{C \;\; n} \otimes g^{\;\;p}_{D \;\; q}$. In this basis, the propagator takes the form
\begin{align}
\mathcal{K}_{ab}(x^{I} , x^{J})=&\beta \left( \pi_{\alpha} \left( X^{\;\;i}_{A \;\; j} \otimes g^{\;\;k}_{B \;\; l} \, . \, S(X^{\;\;m}_{C \;\; n} \otimes g^{\;\;p}_{D \;\; q} \right)  e^K_a, e^K_b \right) \nn \\
=&\beta \left( \pi_{\alpha} \left(X^{\;\;i}_{A \;\; j} \otimes g^{\;\;k}_{B \;\; l} \, . \, 1 \otimes S^{-1}(g^{\;\;p}_{D \;\; q}) \, . \,  S(X^{\;\;m}_{C \;\; n}) \otimes 1 \right) e^K_a, e^K_b \right) \nn \\
=& \sum_N \epsilon_{D vq} \, \epsilon_{D}^{-1 pw} \epsilon_{C xn} \, \epsilon_{C}^{-1 my}  
\left( \begin{array}{cc} k & v \\
                          B & D  \end{array} \right| \left. \begin{array}{c} N \\ g \end{array} \right)                            \left( \begin{array}{c} h \\ N
\end{array} \right| \left. \begin{array}{cc} B & D \\ l & w \end{array} \right)\nn \\
&\quad\quad \beta \left( \pi_{\alpha} \left(X^{\;\;i}_{A \;\; j} \otimes 1 \, . \, 1 \otimes g^{\;\;g}_{N \;\; h} \, . \,  X^{\;\;x}_{C \;\; y} \otimes 1 \right) e^K_a, e^K_b \right),
\end{align}
where we  used the definition of the antipode and the multiplicative structure of $U_q(\mathfrak{an}(2))$. With expression  \eqref{repQLG}  for the representation and the explicit form of the bilinear form $\beta$, we obtain 
\begin{align}
\mathcal{K}_{ab}(x^{I} , x^{J}) = \sum_{M,N} &\;\delta_{AK} \delta_{CK} v_K^{1/2} \epsilon_{K bj} \epsilon_{K an} \epsilon_{K}^{-1 my} \epsilon_{D vq} \epsilon_{D}^{-1 pw} 
  \\ 
&\left( \begin{array}{cc} k & v \\
                          B & D  \end{array} \right| \left. \begin{array}{c} N \\ g \end{array} \right)                            \left( \begin{array}{c} h \\ N
\end{array} \right| \left. \begin{array}{cc} B & D \\ l & w \end{array} \right)\left( \begin{array}{cc} i & g \\
                          K & N  \end{array} \right| \left. \begin{array}{c} M \\ f \end{array} \right)                            \left( \begin{array}{c} f \\ M
\end{array} \right| \left. \begin{array}{cc} N & K \\ h & y \end{array} \right) 
 \Lambda^{NM}_{KK}.\nn
\end{align}
We can now compute the propagators associated to the lines  involved in the crossing. Denoting the $R$-matrix by $R= \sum_{E} X^{\;\;r}_{E \;\; s} \otimes g^{\;\;s}_{E \;\; r}$, we find that the first propagator is of the form
\beq
\mathcal{K}_{ab}(x^{I}, x^{J} S^{-1}(R_{(2)})) = \beta \left( \pi_{\alpha} \left( X^{\;\;i}_{A \;\; j} \otimes g^{\;\;k}_{B \;\; l} \, . \, 1 \otimes  g^{\;\;s}_{E \;\; r} \, . \, S(X^{\;\;m}_{C \;\; n} \otimes g^{\;\;p}_{D \;\; q} \right)  e^K_a, e^K_b \right).
\eeq
Using again the definition of the antipode and the product  for $U_q(\mathfrak{an}(2))$, we obtain
\begin{align}
&\mathcal{K}_{ab}(x^{I}, x^{J} S^{-1}(R_{(2)})) = \sum_{M,N,P} \delta_{AK} \delta_{CK} v_K^{1/2} \epsilon_{K bj} \epsilon_{K an} \epsilon_{K}^{-1 my} \epsilon_{D vq} \epsilon_{D}^{-1 pw} 
\left( \begin{array}{cc} s & v \\
                          E & D  \end{array} \right| \left. \begin{array}{c} P \\ d \end{array} \right) \\
& \left( \begin{array}{c} c \\ P
\end{array} \right| \left. \begin{array}{cc} E & D \\ r & w \end{array} \right)  \left( \begin{array}{cc} k & d \\
                          B & P  \end{array} \right| \left. \begin{array}{c} N \\ g \end{array} \right)                            \left( \begin{array}{c} h \\ N
\end{array} \right| \left. \begin{array}{cc} B & P \\ l & c \end{array} \right) 
\left( \begin{array}{cc} i & g \\
                          K & N  \end{array} \right| \left. \begin{array}{c} M \\ f \end{array} \right)                            \left( \begin{array}{c} f \\ M
\end{array} \right| \left. \begin{array}{cc} N & K \\ h & y \end{array} \right) \nn
 \Lambda^{NM}_{KK}.
\end{align}
The propagator  associated to the second line  in the crossing  is given by
\beq
\mathcal{K}_{ab}(x^{I} S^{-1}(R_{(1)}), x^{J} ) = \beta \left( \pi_{\alpha} \left( X^{\;\;i}_{A \;\; j} \otimes g^{\;\;k}_{B \;\; l} \, . \, S^{-1}(X^{\;\;r}_{E \;\; s}) \otimes  1 \, . \, S(X^{\;\;m}_{C \;\; n} \otimes g^{\;\;p}_{D \;\; q} \right)  e^K_a, e^K_b \right),
\eeq
and after some further computations,  we find
\begin{align}
\mathcal{K}_{ab}(x^{I} S^{-1}(R_{(1)}), x^{J}) = \sum_{M,N} &\delta_{AK} \delta_{CK} v_K^{1/2} \epsilon_{K bj} \epsilon_{K an} \epsilon_{K}^{-1 my} \epsilon_{D vq} \epsilon_{D}^{-1 pw} \epsilon_{E s's} \epsilon_E^{-1 r'r} 
\left( \begin{array}{cc} i & k \\
                          K & B  \end{array} \right| \left. \begin{array}{c} N \\ g \end{array} \right)  \nn \\ 
                          &\left( \begin{array}{c} g \\ N
\end{array} \right| \left. \begin{array}{cc} B & E \\ l & r' \end{array} \right)  \left( \begin{array}{cc} s' & v \\
                          E & D  \end{array} \right| \left. \begin{array}{c} M \\ f \end{array} \right)                            \left( \begin{array}{c} f \\ M
\end{array} \right| \left. \begin{array}{cc} D & K \\ w & y \end{array} \right) \Lambda^{BN}_{KE}
 \Lambda^{DM}_{EK}.
\end{align}
We have thus derived explicit expressions for all quantities arising in \eqref{phi}
in terms of the Clebsch-Gordan coefficients for $U_q(\mathfrak{\su}(2)))$, the quantities $\Lambda^{AB}_{CD}$ from \eqref{repQLGmod}, the intertwiners \eqref{epsilon} and the 
ribbon element for $U_q(\mathfrak{\su}(2)))$.
It remains to evaluate the Haar integrals  associated with the diagram. For this we need to select one of the five vertices  that does not carry a Haar integral. In the following, we suppose that this vertex is the second vertex from the left in Figure \ref{amplitude}. The amplitude associated to the diagram  is then given by
\beq\label{evexp}
ev(\Gamma) = h \otimes \epsilon \otimes h \otimes h \otimes h (\phi(\Gamma)).
\eeq
The Haar integrals act on the basis elements $x_I = u^{\;\;j}_{A \;\; i} \otimes E^{\;\;l}_{B \;\; k}$ and $x_J = u^{\;\;n}_{C \;\; m} \otimes E^{\;\;q}_{D \;\; p}$ according to formulae \eqref{intAN} and \eqref{intSU}. The counit is a morphism of algebras which acts on the basis elements according to
$$
\epsilon(u^a_{I \;\;b}) = \delta^a_{I \;\; b}, \;\;\;\; \mbox{and} \;\;\;\; \epsilon(E^a_{I \;\;b}) = \delta_{I0}.
$$

Combining these results, we find that the evaluation $ev(\Gamma)$ is given by  an expression of the form 
\begin{align}
\label{sum}
ev(\Gamma) = \sum_{P} \!\!\sum_{\substack{{M_1, M_2, }\\{ M_3,..., M_{10}}}} \sum_{\substack{{N_1, D_2,}\\{N_3, ... ,N_{10}}}} \quad &\Lambda^{N_1 M_1}_{K_1K_1} \Lambda^{D_2 M_2}_{K_2K_2} \Lambda^{N_3 M_3}_{K_3K_3} \cdots \Lambda^{N_{10} M_{10}}_{K_{10}K_{10}} \\[-1.7em]
&[2N_1+1]^{1/2}[2D_2 + 1]^{1/2} [2N_3 + 1]^{1/2} \cdots [2N_{10} + 1]^{1/2}\nn\\
& X(P, M_1, M_2, ..., M_{10}; N_1, D_2, N_3, ..., N_{10} ; K_1, K_2, ..., K_{10}), \nn
\end{align}
where $X$ is a function of the representation labels that factorises  as
\beqa
&& X(P, M_1, M_2, ..., M_{10}; N_1, D_2, N_3, ..., N_{10} ; K_1, K_2, ..., K_{10}) = C(K_1, K_2, ..., K_{10})^{j_1, ..., j_{10} ; n_1, ... , n_{10}}_{i_1, ..., i_{10} ; m_1, ... , m_{10}} \nn \\ 
&& Y(P, M_1, M_2, ..., M_{10}; N_1, D_2, N_3, ..., N_{10} ; K_1, K_2, ..., K_{10})^{i_1, ..., i_{10} ; m_1, ... , m_{10}}_{j_1, ..., j_{10} ; n_1, ... , n_{10}}.
\eeqa
The quantity $C$ depends only on the $U_q(\su(2))$ representation labels $K_1,...,K_{10}$, which are fixed. The
 sums over the associated labels $j_1,...,j_{10}$, $i_1,...,i_{10}$, $m_1,...,m_{10}$, $n_{1},...,n_{10}$  for the basis elements of $V_{K_1}$,..., $V_{K_{10}}$
  are thus finite. The quantities which depend on the summation indices  $P$, $M_1,...,M_{10}$, $N_1,D_2,N_3,...,N_{10}$
   in the sum \eqref{sum} are contained in the function $Y$, which is given explicitly by
\begin{align}
&Y =  \sum_{C,D,E} \left( \prod_{i=1}^{10} \epsilon_{D_i v_iq_i} \epsilon_{D_i}^{-1 p_iw_i} \epsilon_{K_i}^{-1 m_i y_i} \right) \epsilon_{E s's} \epsilon_{E}^{-1 r'r}
\left( \begin{array}{cc} s & v_1 \\
                          E & D_1  \end{array} \right| \left. \begin{array}{c} P \\ d \end{array} \right)   
\left( \begin{array}{c} c \\ P
\end{array} \right| \left. \begin{array}{cc} E & D_1 \\ r & w_1 \end{array} \right)  
\nn  \\
&
\left( \begin{array}{cc} k_1 & d \\
                          B_1 & P  \end{array} \right| \left. \begin{array}{c} N_1 \\ g_1 \end{array} \right)   \left( \begin{array}{c} h_1 \\ N_1 \end{array} \right| \left. \begin{array}{cc} B_1 & P \\  l_1 & c \end{array} \right) 
\left( \begin{array}{cc} i_1 & g_1 \\
                          K_1 & N_1  \end{array} \right| \left. \begin{array}{c} M_1 \\ f_1 \end{array} \right)                           
 \left( \begin{array}{c} f_1 \\ M_1
\end{array} \right| \left. \begin{array}{cc} N_1 & K_1 \\ h_1 & y_1 \end{array} \right)   \nn \\
&
\left( \begin{array}{cc} i_2 &  k_2 \\
                          K_2 & B_2  \end{array} \right| \left. \begin{array}{c} N_2 \\ g_2 \end{array} \right)                         
 \left( \begin{array}{c} g_2 \\ N_2
\end{array} \right| \left. \begin{array}{cc} B_2 & E \\ l_2 & r' \end{array} \right)
 \left( \begin{array}{cc} s' & v_2 \\
                          E & D_2  \end{array} \right| \left. \begin{array}{c} M_2 \\ f_2 \end{array} \right)   \left( \begin{array}{c} f_2 \\ M_2
\end{array} \right| \left. \begin{array}{cc} D_2 & K_2 \\ w_2 & y_2 \end{array} \right) 
 \nn \\ 
&
\left( \begin{array}{cc} k_3 & v_3 \\
                          B_3 & D_3  \end{array} \right| \left. \begin{array}{c} N_3 \\ g_3 \end{array} \right)                           
 \left( \begin{array}{c} h_3 \\ N_3
\end{array} \right| \left. \begin{array}{cc} B_3 & D_3 \\ l_3 & w_3 \end{array} \right)  \left( \begin{array}{cc} i_3 & g_3 \\
                          K_3 & N_3  \end{array} \right| \left. \begin{array}{c} M_3 \\ f_3 \end{array} \right)                            
\left( \begin{array}{c} f_3 \\ M_3
\end{array} \right| \left. \begin{array}{cc} N_3 & K_3 \\ h_3 & y_3 \end{array} \right) 
 \nn \\ 
&
\left( \begin{array}{cc} k_4 & v_4 \\
                          B_4 & D_4  \end{array} \right| \left. \begin{array}{c} N_4 \\ g_4 \end{array} \right)                           
 \left( \begin{array}{c} h_4 \\ N_4
\end{array} \right| \left. \begin{array}{cc} B_4 & D_4 \\ l_4 & w_4 \end{array} \right)  \left( \begin{array}{cc} i_4 & g_4 \\
                          K_4 & N_4  \end{array} \right| \left. \begin{array}{c} M_4 \\ f_4 \end{array} \right)                            
\left( \begin{array}{c} f_4 \\ M_4
\end{array} \right| \left. \begin{array}{cc} N_4 & K_4 \\ h_4 & y_4 \end{array} \right) 
 \nn \\ 
&
\left( \begin{array}{cc} k_5 & v_5 \\
                          B_5 & D_5  \end{array} \right| \left. \begin{array}{c} N_5 \\ g_5 \end{array} \right)                           
 \left( \begin{array}{c} h_5 \\ N_5
\end{array} \right| \left. \begin{array}{cc} B_5 & D_5 \\ l_5 & w_5 \end{array} \right)  \left( \begin{array}{cc} i_5 & g_5 \\
                          K_5 & N_5  \end{array} \right| \left. \begin{array}{c} M_5 \\ f_5 \end{array} \right)                            
\left( \begin{array}{c} f_5 \\ M_5
\end{array} \right| \left. \begin{array}{cc} N_5 & K_5 \\ h_5 & y_5 \end{array} \right) 
 \nn \\ 
&
\left( \begin{array}{cc} k_6 & v_6 \\
                          B_6 & D_6  \end{array} \right| \left. \begin{array}{c} N_6 \\ g_6 \end{array} \right)                           
 \left( \begin{array}{c} h_6 \\ N_6
\end{array} \right| \left. \begin{array}{cc} B_6 & D_6 \\ l_6 & w_6 \end{array} \right)  \left( \begin{array}{cc} i_6 & g_6 \\
                          K_6 & N_6  \end{array} \right| \left. \begin{array}{c} M_6 \\ f_6 \end{array} \right)                            
\left( \begin{array}{c} f_6 \\ M_6
\end{array} \right| \left. \begin{array}{cc} N_6 & K_6 \\ h_6 & y_6 \end{array} \right) 
 \nn \\ 
&
\left( \begin{array}{cc} k_7 & v_7 \\
                          B_7 & D_7  \end{array} \right| \left. \begin{array}{c} N_7 \\ g_7 \end{array} \right)                           
 \left( \begin{array}{c} h_7 \\ N_7
\end{array} \right| \left. \begin{array}{cc} B_7 & D_7 \\ l_7 & w_7 \end{array} \right)  \left( \begin{array}{cc} i_7 & g_7 \\
                          K_7 & N_7  \end{array} \right| \left. \begin{array}{c} M_7 \\ f_7 \end{array} \right)                            
\left( \begin{array}{c} f_7 \\ M_7
\end{array} \right| \left. \begin{array}{cc} N_7 & K_7 \\ h_7 & y_7 \end{array} \right) 
 \nn \\ 
&
\left( \begin{array}{cc} k_8 & v_8 \\
                          B_8 & D_8  \end{array} \right| \left. \begin{array}{c} N_8 \\ g_8 \end{array} \right)                           
 \left( \begin{array}{c} h_8 \\ N_8
\end{array} \right| \left. \begin{array}{cc} B_8 & D_8 \\ l_8 & w_8 \end{array} \right)  \left( \begin{array}{cc} i_8 & g_8 \\
                          K_8 & N_8  \end{array} \right| \left. \begin{array}{c} M_8 \\ f_8 \end{array} \right)                            
\left( \begin{array}{c} f_8 \\ M_8
\end{array} \right| \left. \begin{array}{cc} N_8 & K_8 \\ h_8 & y_8 \end{array} \right) 
 \nn \\ 
&
\left( \begin{array}{cc} k_9 & v_9 \\
                          B_9 & D_9  \end{array} \right| \left. \begin{array}{c} N_9 \\ g_9 \end{array} \right)                           
 \left( \begin{array}{c} h_9 \\ N_9
\end{array} \right| \left. \begin{array}{cc} B_9 & D_9 \\ l_9 & w_9 \end{array} \right)  \left( \begin{array}{cc} i_9 & g_9 \\
                          K_9 & N_9  \end{array} \right| \left. \begin{array}{c} M_9 \\ f_9 \end{array} \right)                            
\left( \begin{array}{c} f_9 \\ M_9
\end{array} \right| \left. \begin{array}{cc} N_9 & K_9 \\ h_9 & y_9 \end{array} \right) 
 \nn \\ 
  \intertext{}
&
\left( \begin{array}{cc} k_{10} & v_{10} \\
                          B_{10} & D_{10}  \end{array} \right| \left. \begin{array}{c} N_{10} \\ g_{10} \end{array} \right)                           
 \left( \begin{array}{c} h_{10} \\ N_{10}
\end{array} \right| \left. \begin{array}{cc} B_{10} & D_{10} \\ l_{10} & w_{10} \end{array} \right) \left( \begin{array}{cc} i_{10} & g_{10} \\
                          K_{10} & N_{10}  \end{array} \right| \left. \begin{array}{c} M_{10} \\ f_{10} \end{array} \right)                            
\left( \begin{array}{c} f_{10} \\ M_{10}
\end{array} \right| \left. \begin{array}{cc} N_{10} & K_{10} \\ h_{10} & y_{10} \end{array} \right) 
\nn \\ 
& \delta_{B_6 0} \delta_{B_7 0}  \delta_{B_2 0} \delta_{D_5 0} \; 
\delta_{B_4 B_5} \delta_{B_4 B_1}  \delta_{B_1 B_3}  \delta_{k_4}^{l_5} \delta_{k_5}^{l_1}  \delta_{k_1}^{l_3} \;
\delta_{B_8 B_9} \delta_{B_8 D_1}  \delta_{D_1 D_7}  \delta_{k_8}^{l_9} \delta_{k_9}^{q_1}  \delta_{p_1}^{q_7} \; \delta_{B_{10} D_3} \delta_{B_{10} D_2}  \delta_{D_2 D_9}  \nn \\
& \delta_{k_{10}}^{q_3} \delta_{p_3}^{q_2}  \delta_{p_2}^{q_9} \;
\delta_{D_8 D_6} \delta_{D_8 D_4}  \delta_{D_4 D_{10}}  \delta_{p_8}^{q_6} \delta_{p_6}^{q_4}  \delta_{p_4}^{q_{10}} \,
 \pi_{B_4}(\mu^{-1})^{l_4}_{k_3} \pi_{B_8}(\mu^{-1})^{l_8}_{p_7} \pi_{B_{10}}(\mu^{-1})^{l_{10}}_{p_9} \pi_{D_8}(\mu^{-1})^{q_8}_{p_{10}}\nn\\
 &[2D_2+1]^{1/2}[2N_5+1]^{1/2}[2N_6+1]^{1/2}[2N_7+1]^{1/2}[2N_1+1]^{-1/2}[2N_3+1]^{-1/2}[2N_4+1]^{-1/2}\nn\\
 &[2N_8+1]^{-1/2}[2N_9+1]^{-1/2}[2N_{10}+1]^{-1/2}.\label{yexp}
\end{align}
Implementing the delta functions in the above expression enforces the relations $B_6 = B_7 = B_2 = D_5 = 0$. Together with the Clebsch-Gordan conditions and the remaining delta functions, the conditions imply  $B_1 = B_3 = B_4 = B_5 = N_5$; $B_8 = B_9 = D_1 = D_7 = N_7$; $B_{10} = D_3 = D_9 = D_2$; $D_8 = D_6 = D_4 = D_{10} = N_6$, and $E = N_2 = K_2$. This removes the sum over the representation labels $C,D,E$ in \eqref{yexp}. 

The resulting expression for $Y$
 is of the type obtained in the $q$-deformation \cite{PK} of the Barrett-Crane model and can be represented by a $U_q(\su(2))$ spin network.
 The only difference is that our expression for $Y$ involves contributions from the $R$-matrix\footnote{Note however that the summation label arising from the expression of the $R$-matrix is fixed to $K_2$.} and that, in contrast to \cite{PK},  our  spin network diagram contains open edges labelled with the variables $K_1,...,K_{10}$. 
We can therefore follow the procedure  in \cite{PK} and recouple the diagram to the product  of an  open diagram with a product
 of $U_q(\su(2))$ $6j$ symbols. The open diagram depends only on the variables $K_1,...,K_{10}$. It therefore contains only finite sums and is finite. The $6j$ symbols for $U_q(\su(2))$ are bounded by one. We therefore find that the function $Y$ is bounded by a polynomial $Q(P,D_2,\{M\},\{N\},\{K\})$ in the variables $P, M_1, M_2, ..., M_{10}, N_1, D_2, N_3, ..., N_{10}, K_1,...,K_{10}$. 
This implies that  the evaluation \eqref{sum} of the diagram is given by
\begin{align}\label{evred}
ev(\Gamma)= &C(K_1, K_2, ..., K_{10})^{j_1, ..., j_{10} ; n_1, ... , n_{10}}_{i_1, ..., i_{10} ; m_1, ... , m_{10}}\\
&\sum_{P,D_2}\sum_{M_1,...,M_{10}}\sum_{N_1,N_3,...,N_{10}}  S(P,D_2, \{M\},\{N\}, \{K\})^{i_1, ..., i_{10} ; m_1, ... , m_{10}}_{j_1, ..., j_{10} ; n_1, ... , n_{10}},\nn
\end{align}
where the absolute values of the summands are bounded as follows
\begin{align}
\big | S(P,D_2,\{M\}, &\{N\}, \{K\})^{i_1, ..., i_{10} ; m_1, ... , m_{10}}_{j_1, ..., j_{10} ; n_1, ... , n_{10}} \big | \leq   \mid \Lambda^{N_1M_1}_{K_1K_1} \mid\mid \Lambda^{D_2M_2}_{K_2K_2}\mid  \mid \Lambda^{N_3M_3}_{K_3K_3} \mid ... \mid \Lambda^{N_{10}M_{10}}_{K_{10}K_{10}} \mid\nn \\
&   [2N_1 + 1]^{1/2} [2D_2 + 1]^{1/2} [2N_3 + 1]^{1/2} \cdots [2N_{10} + 1]^{1/2} Q(P, D_2,\{M\}, \{N\},\{K\})\nn.
\end{align}
Moreover, it follows from the expression \eqref{yexp} that the function $Y$ vanishes unless the representation labels in $Y$ satisfy the following inequalities 
\begin{align}\label{meq}
&|N_i-K_i|\leq M_i\leq N_i+K_i\quad \forall i\in\{1,3,...,10\} & &|D_2 -K_2|\leq M_2\leq D_2+K_2\\
&|K_2-N_7|\leq P\leq K_2+N_7\nn\\
&|N_5-D_2|\leq N_3\leq N_5+D_2 & &|N_5-N_6|\leq N_4\leq N_5+N_6\nn \\
&|N_6-N_7|\leq N_8\leq N_5+N_7 & &|N_7-D_2|\leq N_9\leq N_7+D_2\nn\\
&|N_6-D_2|\leq N_{10}\leq D_2+N_6 & &|P-N_5|\leq N_1\leq P+N_5\nn.
\end{align}
The first three inequalities in \eqref{meq} imply that  for fixed  $N_1,D_2,N_3,..., N_{10}$, the summation range of the variables $M_1,...,M_{10}$ and  $P$  in \eqref{evred} is restricted to finite intervals. 
We now take into account the asymptotic behaviour of the expressions $[2N+1]$ and $\Lambda^{N N+R}_{KK}$ derived in \cite{PE2}
\begin{align}
[2N+1]\sim q^{-2N}\qquad \mid \Lambda^{N N+R}_{KK} \mid \leq C(2N + 2R +1) q^{2N}\qquad\text{for}\; N\to\infty,
\end{align}
where $C>0$ is a constant. For $N_1,D_2,N_3,...,N_{10}$ sufficiently large and $M_1,...,M_{10}$, $P$ subject to the first three  conditions in \eqref{meq}, we can thus bound the value of the summand in \eqref{evred} by
\begin{align}
\sum_P\sum_{M_1,...,M_{10}}&\left|S(P,D_2,\{M\}, \{N\}, \{K\})^{i_1, ..., i_{10} ; m_1, ... , m_{10}}_{j_1, ..., j_{10} ; n_1, ... , n_{10}}\right|\\
 &\leq    q^{2(N_1+D_2+N_3+\ldots N_{10})}q^{-N_1-D_2-N_3-\ldots -N_{10}}P_K(N_1,D_2,N_3,...,N_{10})\nn \\
&\leq q^{N_1+D_2+N_3+\dots+N_{10}}
P_K(N_1,D_2,N_3,...,N_{10}) \nn,
\end{align}
where $P_K$ is a polynomial in the variables $N_1,D_2,N_3,\cdots N_{10}$ that depends on the external parameters $K_1,...,K_{10}$.
As $0<q<1$, this implies that there exists a constant $A\in\mathbb N$ such that 
$$
\sum_{P}\sum_{M_1,...,M_{10}} \left | S(P,D_2,\{M\}, \{N\}, \{K\})^{i_1, ..., i_{10} ; m_1, ... , m_{10}}_{j_1, ..., j_{10} ; n_1, ... , n_{10}}\right |\leq 
 2 q^{N_1+D_2+N_3+\dots+N_{10}}
$$
for all $N_1,D_2,N_3,...,N_{10}\geq A$. The series in \eqref{evred} then converge absolutely, because we have
\begin{align}
&\sum_{N_1,N_3,...,N_{10}}  \sum_{P}\sum_{M_1,...,M_{10}} \left|S(P,D_2, \{M\},\{N\}, \{K\})^{i_1, ..., i_{10} ; m_1, ... , m_{10}}_{j_1, ..., j_{10} ; n_1, ... , n_{10}}\right|\\ 
&\leq 
\sum_{N_1,N_3,...,N_{10}< A}  \sum_{P}\sum_{M_1,...,M_{10}} \left|S(P,D_2, \{M\},\{N\}, \{K\})^{i_1, ..., i_{10} ; m_1, ... , m_{10}}_{j_1, ..., j_{10} ; n_1, ... , n_{10}}\right|  +\!\!\!\!\!\!\!\!\!\!
\sum_{N_1,N_3,...,N_{10}\geq A} \!\!\!\!\!\!\!\!\!\!q^{N_1+D_2+N_3+\dots+N_{10}}\nn\\
&\leq 
\sum_{N_1,N_3,...,N_{10}< A}  \sum_{P}\sum_{M_1,...,M_{10}} \left|S(P,D_2, \{M\},\{N\}, \{K\})^{i_1, ..., i_{10} ; m_1, ... , m_{10}}_{j_1, ..., j_{10} ; n_1, ... , n_{10}}\right|    +\frac{2q^{10 A}}{(1-q)^{10}},\nn
\end{align}
 and the series 
in the last expression 
reduce to  finite sums due to the first three inequalities in \eqref{meq}.
This proves the claim. 
\end{proof}

\subsection{The quantum spin foam model}

Using the definitions and results from  the previous subsections, we can now define the quantum EPRL spin foam model. 

\begin{definition}\label{zem}
Let $M$ be a closed oriented triangulated $4$-manifold with sets of $n$-simplexes $\Delta^{(n)}$. Let $\sigma \in \Delta^{(4)}$ be a $4$ simplex of $M$ and let $\Delta^{(2)}_{\sigma}$ and $\Delta^{(3)}_{\sigma}$ denote its set of triangles and tetrahedra respectively. Let $\alpha: \Delta^{(2)}_{\sigma} \rightarrow \mathrm{Ob}(\mathcal{C})$ denote the corresponding EPRL-colouring, $\Psi : \Delta^{(3)}_{\sigma} \rightarrow \Hom(\mathcal{C})$ the associated EPRL-state, and let the 4-simplex amplitudes $\mathcal{Z}_{\sigma}$ be given by Definition \ref{amplitudedef}. The partition function for the quantum EPRL model associated to $M$ is given by 
\beq
\mathcal{Z}(M) = \sum_{K,J} \prod_{t} [2K_{t} + 1] \prod_{\sigma} \mathcal{Z}_{\sigma} \left(\alpha(K_{t(\sigma)}),\Psi_{\sigma}(J_{tet(\sigma)}) \right).
\eeq
Here, the sum ranges over all $K$ in $\mathcal{L}$ and over the elements of a basis of $U_q(\su(2))$-intertwiners $(\Lambda_{K,J})_J$ entering the definition of the EPRL tensors for each tetrahedron $tet$ of $M$. The state $\Psi_{\sigma}$ for the $4$-simplex $\sigma$ is given by $\Psi_{\sigma} = \bigotimes_{tet(\sigma) \in \partial \sigma} \Psi_{tet(\sigma)}$, where $\Psi_{tet(\sigma)}$ is the state associated to the tetrahedron $tet(\sigma)$ and the order in the tensor product is the one given in Definition \ref{amplitudedef}. The products run over all the triangles $t$ and $4$-simplexes $\sigma$ of $M$.
\end{definition}

The weight associated to the triangles is fixed from gluing arguments as in the classical case \cite{Carlo2}. As there is only a finite number of representations in the label set  $\mathcal{L}$, the sum in Def.~\ref{zem} involves only a finite number of terms and hence converges. Given the convergence of the EPRL intertwiners and the  $4$-simplex amplitude for fixed labels $K$, the convergence of the 
 Lorentzian $q$-EPRL model is therefore immediate.  However, the proofs of the convergence of the EPRL intertwiners and the  $4$-simplex amplitude in the previous subsection are intricate and require a careful analysis of the representation theory of the quantum Lorentz group.   

\section{The Euclidean model}

In this section we study the Euclidean formulation of the model which is based on the representation theory of the quantum group $U_q^{res}(\su(2))$ at a root of unity. 
In this case, the generalisation of the EPRL-model is simple and more direct because it can be achieved in the framework of modular categories. These categories have been studied extensively and all ingredients needed in the construction of the model are available in the literature.
The resulting model does not present the difficulties associated with the Lorentzian case. This is due to the fact that there is only a finite number of irreducible representations and, consequently, no risks of divergences.

\subsection{The quantum rotation group}

\label{qrot}

\subsubsection{$U_q(\su(2))$ at root of unity.}

We start with a brief summary of the representation theory of the Hopf algebra  $U_q(\su(2))$
at a root of unity. In the following, we suppose that $q$ is a primitive $r$th root of unity $q=e^{2\pi i/r}$ with $r>2$ odd. The Hopf algebra $U_q(\su(2))$ at a root of unity is most easily presented in terms of
four generators $E,F,K^{\pm 1}$   that are given in terms of the generators in Section \ref{halgsec} by
\begin{align}
E=J_+q^{J_z}\qquad F=q^{-J_z}J_-\qquad K^{\pm 1}=q^{\pm 2J_z}.
\end{align}
The algebra structure and coalgebra structure are given by this identification together with formulas \eqref{rels1} to \eqref{uqantip2}. However, the star structure differs from the one in Section \ref{halgsec} and takes the form
\beq
\label{star}
\star K^{\pm 1} = K^{\pm 1}\qquad\star E=q^\inv KF\qquad\star F=qEK^\inv.
\eeq
The  finite-dimensional star Hopf algebra  $\uqr$  is the quotient of the Hopf subalgebra generated by $E,F,K^{\pm 1}$  by the two sided ideal generated by the elements $E^{r}, F^{r}, K^{\pm r}-1$. It can be identified with the associative algebra generated by $E,F,K$ subject to the algebra relations derived from \eqref{rels1} and  the additional relations
\begin{align}
E^{r}=F^{r}=0\qquad K^{\pm r}=1
\end{align}
Note that the star Hopf algebra $\uqr$  is a  ribbon Hopf algebra with universal $R$-matrix
\begin{align}
&R=\frac 1 {r} \sum_{0\leq i,j,k\leq r-1}\frac{(q-q^\inv)^k}{[k]!}q^{\tfrac k 2(k-1)+2k(i-j)-2ij}\; E^k K^i\oo F^kK^j,\label{rmatunity}
\end{align} 
where  $[k]$ and $[k]!$ are defined as in \eqref{qexp}.
The ribbon element $v$ is defined by the relation  $v^2=uS(u)$  with $v^2=uS(u)$, and the group-like element   $\mu = u v^{-1}$ takes the form $\mu=K$.

\subsubsection{Representation theory of $\uqr$ at a root of unity}

The representation theory of $\uqr$ differs from the generic case and the classical case in two fundamental  ways. Firstly, up to isomorphisms, there is only a finite number of  irreducible 
representations. Secondly,  $\uqr$  has finite-dimensional {\em indecomposable}  representations\footnote{An indecomposable representation has dimension $2r$ and is characterised by a half-integer $I$ such that $1 \leq 2I + 1 \leq r$, see for instance \cite{tilt1, tilt2, Arnaudon, Alvarez-Gaume} or  \cite{CP}.}   \cite{Arnaudon,Alvarez-Gaume},  which are not  fully reducible and have vanishing quantum dimension. 
As the tensor product decomposition of two irreducible representations  contains indecomposable representations, the representations of $\uqr$ do not form a fusion category.

This obstacle  is  overcome with the tilting module construction  \cite{tilt1,tilt2}, for a pedagogical introduction see \cite{CP}. In this construction  the tensor product of the finite-dimensional irreducible representations is modified in a consistent way such that all factors arising in the tensor product decomposition of finite-dimensional irreducible representations are again finite-dimensional irreducible representations.
The result is a modular category $\cres$. As this category has been discussed extensively in the literature, see for instance \cite{Kirillov}, we limit our presentation to a brief  summary  that emphasises the link with the classical theory and the Lorentzian model.

\paragraph{Monoidal structure}
We start by outlining its structure as $\C$-linear tensor category.
The objects of $\cres$  are certain finite-dimensional irreducible representations
 $(\pi_V,V)$  of $\uqr$ over $\C$, the tilting modules. Its morphisms are intertwiners between tilting modules.  The category $\cres$ is a strict $\C$-linear monoidal category.  This means that there is a $\C$-bilinear  functor $\otimes: \cres\times\cres\rightarrow\cres$, which corresponds to the tensor product of representations,  and a unit object $1$, 
such that $1\otimes (\pi_V,V)\cong (\pi_V,V)\otimes 1\cong (\pi_V,V)$ for all objects $(\pi_V,V)$. The unit object  corresponds to the representation of $\uqr$ on $\C$ that is defined by the counit $\epsilon:\uqr\rightarrow \C$.

\paragraph{Semisimplicity}
The category $\cres$ is abelian, finite and semisimple. This implies  that there is a finite collection  of simple objects $(\pi_J,V_J)$ of $\cres$, which correspond to finite-dimensional irreducible tilting modules  and which are unique up to isomorphism. These simple objects are labelled 
 by a half-integer $J$ with $0\leq J \leq (r-2)/2$.  Again, the representation spaces $V_J$  are $(2J+1)$-dimensional, and we denote by $\{e^J_m\}_{m=-J,...,J}$ the orthonormal basis of $V_J$. Then the action of the generators $E,F,K$ on $V_J$ takes the form
\begin{align}
\label{repsuqr}
&\pi_{J}(E) \, e^J_m= q^{m+1/2} \sqrt{[J- m] [J+ m+1]} \, e^J_{m+ 1}, \\
&\pi_{J}(F)\, e^J_m=q^{-(m-1/2)}\sqrt{[J+m][J-m+1]}\, e^J_{m-1}, \nonumber\\
&\pi_{J}(K^{\pm 1}) \, e^J_m= q^{\pm 2m} \, e^J_m. \nonumber
\end{align}
Every object in $\cres$ is given as a direct sum of simple objects
\beq\label{decomuq}
V=\bigoplus_{J} V_J,
\eeq
and the tensor product of two simple objects decomposes into a direct sum of simple objects according to 
\beq
\label{fusionunityroot}
V_{I} \otimes V_{J} \cong  \bigoplus_{K = | I - J |}^{\mbox{{ \tiny min}} (I + J, r - 2 -(I + J))} V_K.
\eeq
Note that the decomposition is very similar to the one for  generic $q$. The only difference is that the tensor product 
decomposition is cut off at $r-2-(I+J)$ instead of continuing to $I+J$ if $I+J>(r-2)/2$. The isomorphism in \eqref{fusionunity} is given by the  Clebsch-Gordan morphisms
$C^{K}_{\; IJ}: V_I \otimes V_J \rightarrow V_K$, $C^{IJ}_{\;\, K} : V_K \rightarrow V_I \otimes V_J$.
As in the generic case, the vector space $\text{Hom}(V_I\otimes V_J, V_K)$ of intertwiners between $V_I\otimes V_J$ and $V_K$ is either one-dimensional or trivial and the Clebsch-Gordan intertwiners are therefore unique up to normalisation. 
Their matrix elements with respect to the bases $\{e^j_a\}_{a=-j,...,j}$ are the Clebsch-Gordan coefficients. They satisfy the same relations as in the generic case with the additional condition  that they vanish unless
\beq\label{qCG}
|I-J|\leq K\leq \text{min}(I+J, r-2-(I+J)).
\eeq
This condition implements the cut off in the fusion rules \eqref{fusionunityroot}.

\paragraph{Braiding}
The category $\cres$ is braided. This means that   for all objects $(\pi_V,V)$, $(\pi_W,W)$  there exists an intertwiner between $V\otimes W$ and $W\otimes V$, the  braiding, which is   given by the representation of the universal $R$-matrix  $c_{V,W}=\tau\circ (\pi_V\otimes \pi_W)(R)$.
Using expression \eqref{rmatunity} and \eqref{repsuqr} one finds that the braiding $c_{I,J} : V_I \otimes V_J \rightarrow V_J \otimes V_I$ between two simple objects takes the form
 \begin{align}
 \label{braidingres}
c_{IJ}(e^I_m\oo e^J_n)=\sum_{K=0}^{r-1} &(q-q^\inv)^k[k]!q^{-k(k+1)/2+k(n-m)+2nm} \nn \\
&\sqrt{\left[\begin{array}{c} \!\!\!I-m\!\!\!\\ k\end{array}\right]\!\!\left[\begin{array}{c} \!\!\!I+m+k\!\!\!\\ k\end{array}\right]\!\!\left[\begin{array}{c}\!\!\! J+n\!\!\!\\ k\end{array}\right]\!\!\left[\begin{array}{c} \!\!\!J-n+k\!\!\!\\ k\end{array}\right]} e^J_{n-k}\oo e^I_{m+k},\nonumber
\end{align}
\paragraph{Pivotal structure and duals}
The category $\cres$ is a braided monoidal category with duals. This means that for each object
$(\pi_V,V)$ of $\cres$ there is a dual object $(\pi_{V^*}, V^*)$, which corresponds to the representation of $\uqr$ on the dual vector space $V^*$. It is given by
 the antipode
$$
\pi_{V^*}(a)\alpha=\alpha\circ \pi_V(S(a))\qquad \forall \alpha\in V^*, a\in \uqr.
$$
Moreover, for each object $(\pi_V,V)$ there are morphisms
$d_V: V^*\otimes V\rightarrow 1$, the evaluation, and  $b_V:  1\rightarrow V\otimes V^*$, the coevaluation, which  satisfy the identities
\begin{align}
&b_{V\otimes W}=(\text{id}_V\otimes b_W\otimes \text{id}_{V^*})\circ b_V, \qquad d_{V\otimes W}=d_V\circ (\text{id}_{V^*}\otimes d_W\otimes \text{id}_V)\nonumber\\
&(\text{id}_V\otimes d_V)\circ ( b_V \otimes \text{id}_V)=(d_{V^*}\otimes \text{id}_V) \circ(\text{id}_V\otimes b_{V^*})=\text{id}_V.  
\end{align}
 The evaluation corresponds to the pairing between the vector space $V$ and its dual. In terms of the  basis  $\{e^J_m\}_{m=-J,...,J}$  and the  dual basis $\{e^{Jm}\}_{m=-J,...,J}$ of $V_J^*$, the evaluation and coevaluation are given by
$$
b_J(1) = e^J_m \otimes e^{J m}, \;\;\;\;\; d_J(e^{J m} \otimes e^J_n) = e^{Jm} (e^J_n)=\delta^m_n.
$$
They associate to each intertwiner
$\phi$ between representations $(\pi_V,V)$ and $(\pi_W,W)$ a dual intertwiner $\phi^*$ from $(\pi_{W^*},W^*)$ to $(\pi_{V^*}, V^*)$ given by
\begin{align}
\phi^*=(d_{W}\otimes \text{id}_{V^*})\circ(\text{id}_{W^*}\otimes \phi\otimes \text{id}_{V^*})\circ(\text{id}_{W^*}\otimes b_V).
\end{align}
Note that due to the identity $S^2(a)=KaK^\inv$ for all $a\in\uqr$, the functor $*\circ *$ is not the identity functor.

\paragraph{Ribbon structure and quantum trace}
The category $\cres$ is a ribbon category and thus equipped with a twist. The twist is given as 
 a collection of  self-intertwiners $\theta_V=\pi_V(v^{-1}): V\rightarrow V$
 for each object $(\pi_V,V)$, which commute with all morphisms. It is defined  by the action of  the ribbon element  $v$ of $\uqr$.  For simple objects $(\pi_J,V_J)$ it takes the form
\beq
\label{ribbon}
\theta_J := \pi_J(v^{-1}) = v_J \, \text{id}_J, \quad v_J = e^{2i \pi J} q^{-2J(J+1)}
\eeq
The ribbon element and the related element $\mu=u^\inv v$  define the quantum trace of the category $\cres$, which is given as a
collection of  linear maps $\mathrm{tr}_q: \text{End}(V)\rightarrow \C$ from the endomorphism space $\text{Hom}(V,V)$ of each object $(\pi_V,V)$ to $\C$. It defines a non-degenerate pairing $\text{Hom}(V,W)\times \text{Hom}(W,V)\rightarrow \C$, $(f,g)\mapsto \mathrm{tr}_q(f\circ g)$.  and  is given by
\beq
\mathrm{tr}_q (\phi) := \mathrm{tr}_V \, (\pi_V(\mu^{-1}) \phi), \;\;\;\; \forall \phi \in \End (V).
\eeq
The
quantum dimension is defined as the quantum trace of the identity endomorphism $\text{id}_V$. This implies in particular  that the quantum dimension of the simple objects $(\pi_J,V_J)$ is given by  $ \mathrm{tr}_q (\text{id}_J)= \mathrm{tr}_J\, (\pi_J(\mu^{-1}))=[2J +1]$.

\paragraph{Identification of objects and their duals}
To exhibit the structural similarities between the Lorentzian and the Euclidean
model, it will be useful to introduce an identification between each object $(\pi_V,V)$ and its dual $(\pi_{V^*}, V^*)$. This amounts to a choice of basis for each  representation space $V_J$. By assigning the dual basis to the representation space $V^J$, one obtains  
 bijective intertwiners  $\epsilon_V: V\rightarrow V^*$, 
$\epsilon^V: V^*\rightarrow V$. For irreducible objects $(\pi_J,V_J)$ their
 matrix elements with respect to the bases $\{e^J_m\}_{m=-J,...,J}$ and $\{e^{Jm}\}_{m=-J,...,J}$ take the same form as for a real deformation parameter
\begin{align}\label{epsexpluqr}
&\epsilon_J(e^J_m)=\epsilon_{J \, nm}e^{Jn}\qquad \epsilon^J(e^{Jm})=\epsilon^{J\,nm}e^J_n
\qquad\epsilon_{J \, mn} = \epsilon^{J\, mn}  = c_J e^{i \pi(J-m)}q^m\delta_{m,-n},
\end{align}
where 
$c_J$ is given by the ribbon element \eqref{ribbon}: $c_J = e^{-i \pi J} v_J^{-1/2}$.  The matrix elements satisfy 
\beq
\label{epsiloneuc}
\epsilon_{J mn} \, \epsilon^{J n p} = v_J^\inv
\, \delta^p_n, \;\;\;\; \mbox{and} \;\;\;\;
\epsilon_{J mn} \, \epsilon^{J p n} = v_J^{-1} e^{-2\pi i J} 
\, \pi_J(\mu)^p_{\;\; n}.
\eeq
Composing the intertwiner $\epsilon_V: V\rightarrow V^*$ with the coevaluation yields  a bilinear form $\beta_V=d_V\circ (\epsilon_V\otimes\text{id}_V): V\times V\rightarrow \C$ on each representation space $V$, which is invariant 
$$
\beta_V( v, \pi_{V}(a) w )= \beta_V( \pi_{V}(S(a)) v,  w ) \quad\forall v,w\in V, a\in \uqr.
$$
For the simple objects $(\pi_J,V_J)$, its expression in  terms of the basis 
$(e^J_m)_{m=-J,...,J}$ reads  
\begin{align}\label{betadefuq}
\beta_J( e^J_m,e^J_n ) =\epsilon_{J\, mn}.
\end{align}

\subsubsection{Quantum $\SO(4)$.}
\label{qso4}

In this section, we define the Hopf algebra that we will consider as the quantum counterpart of the Euclidean isometry group $\SO(4)$. As in the previous section, we fix the parameter $q$ to an odd primitive of unity $q = e^{2i \pi/r}$.

The self-dual and anti-self-dual factorisation of the classical Lie algebra $\spin(4) \cong \su(2)_+ \oplus \su(2)_-$ suggests a two parameter deformation $U_{q,q'}(\spin(4)) \simeq U_q(\su(2))\otimes U_{q'}(\su(2))$, where $q$ is the deformation parameter for the first  copy of $\su(2)$ and $q'$ the one for the second copy. In the following, we will be interested in one parameter deformation which arises by specialising\footnote{Note that the construction differs from that of the quantum Barrett-Crane model \cite{yet4} where $q' = q^{-1}$.} to $q' = q$. In other words, we consider as the the quantum version of the Lie group $\Spin(4)$ the ribbon star Hopf algebra 
$\uqr \otimes \uqr$.
The multiplication, unit, coproduct, counit, antipode, $R$-matrix and ribbon element on $\uqr\otimes\uqr$ are inherited from the corresponding structures on $\uqr$
\begin{align}
&m = (m \otimes m) \circ \tau,  &  &\eta(\lambda) = \lambda 1 \otimes 1, \\
&\Delta =\tau \circ (\Delta \otimes \Delta),  & &\epsilon = \epsilon \otimes \epsilon, & &S = S\otimes S, \\
&R = \tau \circ R \otimes R,  &  &\mu = \mu \otimes \mu,
\end{align}
where the flip map $\tau$ acts on the second and third copy of the four-fold tensor product $\uqr^{\otimes 4}$, i.e., $\tau \equiv \tau_{23}$. Note that  the two  copies of $\uqr$ commute.

The finite-dimensional representations of $\uqr\otimes \uqr$ are obtained by tensoring pairs of finite-dimensional representations of $\uqr$. In the following, we will restrict attention  to those representations  which are obtained by tensoring two finite-dimensional irreducible tilting
modules of $\uqr$. We denote the associated modular category by $\mathcal C_E$.
The irreducible objects of $\mathcal{C}_E$ are labelled by  a couple of spins $\alpha=(J_+,J_-)$, $J_\pm\in\{0,...,(r-2)/2\}$ each corresponding to a finite-dimensional tilting module.
A basis of the $\uqr\otimes\uqr$ module labelled by $\alpha=(J_+,J_-)$ is given by $\{e^J_{m_+ m_-} \!\!= e^{J_+}_{m_+} \otimes e^{\alpha_-}_{m_-}\}_{m_\pm=-J_\pm,...,J_\pm}$. The action of  $\uqr\otimes\uqr$ on this basis takes the form 
\beq
 \pi_\alpha(a) \rhd e^{J_+}_{m_+} \otimes e^{J_-}_{m_-} = \pi_{J_+}(a_+) \, e^{J_+}_{m_+} \otimes \pi_{J_-}(a_-)e^{J_-}_{m_-}\qquad \forall a\in\uqr\otimes\uqr.
\eeq
By tensoring two copies of the bilinear form \eqref{betadefuq}, one obtains a bilinear form on $V_J=V_{J_+}\otimes V_{J_-}$ which is invariant under this action. 
Moreover, due to the fusion rules \eqref{fusionunity} of the irreducible tilting modules, there is a well defined action of $\uqr\otimes \uqr$ on each $U_q^{(r)}(\su(2))$-module $V_K$ in \eqref{fusionunityroot}, which is  given by
$$
\pi_\alpha(a) \rhd e^{K}_{m} = \!\!\!\sum_{m_+,m_-} \!\!\!\!\left( \begin{array}{cc} m_+ & m_- \\
                          J_+ & J_-  \end{array} \right| \left. \begin{array}{c} K \\ m \end{array} \right)  \pi_I(a) \rhd e^{J_+}_{m_+} \otimes e^{J_-}_{m_-}\quad \forall a=a_+\otimes a_-  \in \uqr\otimes\uqr. 
$$
This action is non-trivial  that provided $J_+,J_-$ and $K$ satisfy the constraint  \eqref{qCG}.

\subsection{The quantum EPRL intertwiner}
\label{euceprl}

We are now ready to construct the associated quantum EPRL-model. 
We start by defining the EPRL-representations. The construction is analogous  to the Lorentzian case. The only difference is that in order to ensure that these representations are associated with objects of $\cres$, we need to restrict the value of the Immirzi parameter $\gamma$ to the interval $0<\gamma<1$ and to impose that it is rational with $\gamma  (r-2)/2\in\mathbb N$. The case $\gamma>1$  will be treated elsewhere.

\begin{definition} {\bf (EPRL representations)}

Let $\gamma$ be a fixed real  parameter satisfying $0<\gamma < 1$ and $\gamma\cdot (r-2)/2\in\mathbb N$. Let  $\mathcal{L}= \{0,1/2,1,...,(r-2)/2\}\cap\{K\in \NN_0/2: \gamma K\in\NN_0\}$ be the label set  of simple objects in $\cres$ for which $\gamma K\in \NN_0$.
Given an element $K\in\mathcal{L}$, the associated Euclidean EPRL representation $\alpha(K)$ is the object $(\pi_{\alpha(K)}, V_{\alpha(K)})$ with $V_{\alpha(K)}=V_{J_+(K)}\otimes V_{J_-(K)}$, $\pi_{\alpha(K)}=\pi_{J_+(K)}\otimes  \pi_{J_-(K)}$   with
\begin{align}\label{eprluqr}
 \alpha(K)= (J_+(K),J_-(K)) :=(\tfrac 1 2 (1 + \gamma) K, \tfrac 1 2(1-\gamma) K).
\end{align}
\end{definition}

Note that for a given $K \in \mathcal{L}$ , the constraints $0<\gamma < 1$ and  $\gamma\cdot (r-2)/2\in\mathbb N$  on the Immirzi parameter ensure that
the Euclidean EPRL representation labelled by $\alpha(K)$  decomposes into irreducibles  \beq
\label{decompE}
V_{\alpha(K)} = \bigoplus_{J=\gamma K}^K V_J.
\eeq
We denote by $\{e^\alpha_{m+,m_-}\}_{m_\pm=-J_\pm,...,J_\pm}$ the tensor basis of $V_\alpha=V_{J_+}\otimes V_{J_-}$:
$
e^\alpha_{m_+,m_-}=e^{J_+}_{m_+}\otimes e^{J_-}_{m_-}
$,
and by $f_{\alpha}^k : V_{\alpha}=V_{J_+}\otimes V_{J_-} \rightarrow V_K$ the projection map associated to \eqref{decompE} that projects  onto the highest weight factor $V_K$ in the decomposition \eqref{decompE}.  As in the Lorentzian case, we denote by $f^\alpha_K$ the associated inclusion map 
$f^{\alpha}_K : V_K \rightarrow V_{\alpha}$ defined by the identity $f_{\alpha}^{K'}  \circ f^{\alpha}_K = \text{id}_K \delta_{K}^{K'}$. The projection  map $f_\alpha^K$ commutes  with the action of $\uqr$ and is given by the Clebsch-Gordan morphisms. In terms of the bases  introduced above, it reads
\beq
\label{projectionuqr}
f_\alpha^K(e^\alpha_{m_+ m_-}) = \sum_{n} \left( \begin{array}{c} n \\ K
 \end{array} \right| \left. \begin{array}{cc} J_+ & J_- \\ m_+ & m_- \end{array} \right)  e^K_{n}.
\eeq

To define the EPRL intertwiner, we need to specify the quantum counterpart of 
the expression \beq
\label{int2}
T_{V[\alpha]} = \int_G d X \left( \bigotimes_{i=1}^n \pi_{\alpha_i} \right)(X).
\eeq
In contrast to the Lorentzian situation, this will not be achieved by using a Haar measure on the dual Hopf algebra of $\uqr$. The reason is that, in contrast to the situation for generic $q$, there is no direct characterisation of the dual Hopf algebra of $\uqr$ in terms of matrix elements of irreducible $\uqr$ representations.  A simple dimension-counting argument shows that the the matrix elements in the representations \eqref{repsuqr} do not form a basis of the dual Hopf algebra $\uqr^*$. A Poincar\'e-Birkhoff-Witt basis for $\uqr$ is given by $\{K^sE^tF^u\,|\, s,t,u\in\{0,1,...,r-1\}\}$, which implies that the dimension of the vector space $\uqr$ is $r^3$. The matrix elements with respect to the representations \eqref{repsuqr} are linear functions on $\uqr$ and hence elements of $\uqr^*$. However,  the number of matrix elements obtained from the representations \eqref{repsuqr} is
$$
\sum_{J\in\{0,1/2,...,(r-2)/2\}} \!\!\!\!\!\!\!\!\!\!\!\!(2J+1)^2=\sum_{k=0}^{r-1} k^2=\frac {r(r-1)(2r-1)} 6< r^3\quad\forall r\geq 3.
$$
The matrix elements  with respect to the representations \eqref{repsuqr} therefore span only a  proper linear subspace of  $\uqr^*$ and do not form a  basis of the vector space  $\uqr^*$. 
Although it is immediate to obtain an expression that  has  the appropriate invariance properties 
by setting  $h(\pi_J) = \delta_{J,0}$, this is not sufficient to characterise a Haar measure on $\uqr^*$. 

However, due to the fact that  $\cres$ is a modular tensor category, there is an alternative generalisation of the classical
expression \eqref{int2} which makes use of the semisimplicity, for a pedagogical explanation see 
 \cite{Oeckl}.
The fact that $\cres$ is a modular tensor category implies that for each object $(\pi_V,V)$ there is  a unique endomorphism $T_V$
which can be expressed as
\begin{align}\label{tmap}
T_V=\sum_{b\in\mathcal B} g_b\circ f_b,
\end{align}
where $\mathcal B$ labels a basis of intertwiners in $\text{Hom}(V,\C)$,  $f_b: V\rightarrow\C$ runs over the basis elements of $\mathcal B$ and  $g_b=\epsilon^V\circ f_b^*:\C\rightarrow V$.
For simple objects $(\pi_I,V_I)$, the space of morphisms $\text{Hom}(V_I,\C)$ is trivial unless $I=0$. The associated endomorphism $T_{V_I}$ therefore takes the form $T_{V_I}=\delta_{I,0}$.
For a general object $(\pi_V,V)$ the associated endomorphism $T_V$ is obtained by expressing $V$ as a direct sum of simple objects $V=\bigoplus_{J} V_J$. In particular, this implies that for two simple objects $(\pi_I,V_I)$, $(\pi_J,V_J)$ the morphism $T_{V_I\otimes V_J} $ takes the form
$$
T_{V_I\otimes V_J}=\sum_{K} \text{dim}_q(k) C^K_{IJ} T_{V_K}  C^{IJ}_K= C^0_{IJ} C^{IJ}_0=\delta_{I,J} \,\,b_{V_I}\circ d_{V_I^*}.
$$

The corresponding expression for general tensor products of irreducibles is obtained by consecutively coupling them to zero via the Clebsch-Gordan coefficients. Consequently, the endomorphism $T_V$ can be viewed as the counterpart of the classical expression \eqref{int2} and expression  \eqref{classgenq} for the quantum Lorentz group.
It generalises 
 expression \eqref{int2} to the category $\cres$ and allows us to define the quantum EPRL intertwiner.

\begin{definition} {\bf (Quantum EPRL intertwiner)}\label{equeprl}
Let $K = (K_1, ..., K_n)\in\mathcal L^n$ be a $n$-tuple of irreducible representations in $\mathcal L$   and $V[K] = \bigotimes_{a=1}^n V_{K_a}$  the corresponding representation space. Denote by $\alpha=(\alpha(K_1),...,\alpha(K_n))$ the  associated EPRL-representations defined as in \eqref{eprluqr} and by $V[\alpha]$ the  tensor product of their representation spaces $$V[\alpha]=\bigotimes_{a=1}^n V_{\alpha(K_a)}=\bigotimes_{a=1}^n V_{J_+(K_a)}\otimes V_{J_-(K_a)}.$$
Then the the quantum EPRL 
intertwiner  associated to  an element  $\Lambda_K\in\mathrm{Hom}_{\uqr}(V[k], \C)$ is the morphism  $\iota_{\alpha,K}: V[\alpha] \rightarrow \C$  defined by 
\begin{align}
\label{quintertwineruq}
\iota_{\alpha,K}=\Lambda_K\circ f^K_\alpha\circ T_{V[\alpha]}
\end{align}
where  $f^{K}_{\alpha} : V[\alpha] \rightarrow V[K]$ is given in terms of the morphism \eqref{projectionuqr}  by 
$
f_{\alpha}^K = \bigotimes_{a=1}^n f_{\alpha(K_a)}^{K_a}.
$
\end{definition}

Note that, unlike the Euclidean quantum BC intertwiner \cite{yet4}, the Euclidean EPRL intertwiner is not invariant under braiding. This follows  directly from its definition in analogy to the Lorentzian case.

In the classical Euclidean EPRL model, it is possible to perform the Haar integrals in the definition of the EPRL intertwiner. This  leads to a factorisation of the amplitude for the $4$-simplexes in terms of $15j$-symbols. The next proposition gives the equivalent of this construction for the $q$-deformed Euclidean EPRL model.

\begin{proposition} 
\label{decomp}
Let $\alpha(K)$ be the representation label of  an EPRL representation. 
Then the associated  EPRL intertwiner  $\iota_\alpha$ can be expressed as
\beq
\iota_{\alpha(K)} = \sum_\delta c (K,\delta) \, \Phi_{\alpha, \delta}, \;\;\;\; \Phi_{\alpha, \delta} \in \Hom_{\uqr\otimes\uqr}(V[\alpha], \C),
\eeq
where
$
c(K,\delta)= \Lambda_K\circ f^K_\delta \circ \Phi^{\alpha, \delta} \in\C
$,
and  $\Phi^{\alpha, \delta} \in \Hom \left( \C , V[\alpha] \right)$ is given in the proof below.
\end{proposition}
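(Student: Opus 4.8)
The plan is to start from the definition \eqref{quintertwineruq} of the quantum EPRL intertwiner, $\iota_{\alpha,K}=\Lambda_K\circ f^K_\alpha\circ T_{V[\alpha]}$, and to compute $T_{V[\alpha]}$ explicitly by using the decomposition \eqref{decompE} of each EPRL representation space $V_{\alpha(K_a)}$ into $\uqr$-irreducibles together with the structure of the map $T_V$ for tensor products of simple objects. Recall that for two simple objects one has $T_{V_I\otimes V_J}=\delta_{I,J}\,b_{V_I}\circ d_{V_I^*}$, and for a general tensor product of irreducibles $T_V$ is obtained by consecutively coupling the factors to the trivial object via Clebsch--Gordan morphisms. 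Since each $V_{\alpha(K_a)}=V_{J_+(K_a)}\otimes V_{J_-(K_a)}$ is itself a (finite) direct sum of $\uqr$-simple objects, the full space $V[\alpha]=\bigotimes_{a=1}^n V_{\alpha(K_a)}$ is a direct sum of $\uqr$-irreducibles, and $T_{V[\alpha]}$ decomposes accordingly as a sum over recoupling channels.

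First I would fix a recoupling scheme for the $n$-fold (in practice, four-fold) tensor product: expand $T_{V[\alpha]}$ as a sum over an intermediate simple object, or more precisely over a multi-index $\delta$ labelling the intermediate $\uqr\otimes\uqr$-representations appearing when one consecutively couples the factors $V_{\alpha(K_a)}$ to the trivial object. Using the Clebsch--Gordan morphisms for $\uqr$ and the quantum dimensions $[2J+1]$ that appear in the formula for $T_{V_I\otimes V_J}$, one writes
$$
T_{V[\alpha]}=\sum_{\delta}\; \Phi^{\alpha,\delta}\circ f^{\delta}_{V[\alpha]},
$$
where $f^{\delta}_{V[\alpha]}:V[\alpha]\to\C$ runs over a basis of $\uqr\otimes\uqr$-intertwiners in $\Hom_{\uqr\otimes\uqr}(V[\alpha],\C)$ indexed by the recoupling data $\delta$, and $\Phi^{\alpha,\delta}=\epsilon^{V[\alpha]}\circ (f^{\delta}_{V[\alpha]})^*\in\Hom(\C,V[\alpha])$ is the associated ``creation'' map, exactly as in \eqref{tmap}. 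Substituting this into \eqref{quintertwineruq} gives
$$
\iota_{\alpha,K}=\sum_{\delta}\,\bigl(\Lambda_K\circ f^K_\alpha\circ\Phi^{\alpha,\delta}\bigr)\;\bigl(f^{\delta}_{V[\alpha]}\bigr).
$$
I would then set $\Phi_{\alpha,\delta}:=f^{\delta}_{V[\alpha]}\in\Hom_{\uqr\otimes\uqr}(V[\alpha],\C)$ and $c(K,\delta):=\Lambda_K\circ f^K_\alpha\circ\Phi^{\alpha,\delta}\in\C$; the scalar is well-defined because $f^K_\alpha\circ\Phi^{\alpha,\delta}:\C\to V[K]$ composed with $\Lambda_K:V[K]\to\C$ is an endomorphism of $\C$, i.e.\ a number. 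To match the exact form stated in the proposition (with $f^K_\delta$ rather than $f^K_\alpha$) I would absorb the projection $f^K_\alpha$ and the bookkeeping of the intermediate labels into a single map $f^K_\delta$; this is just a relabelling, using the compatibility $f^{K'}_\alpha\circ f^\alpha_K=\mathrm{id}_K\,\delta^{K'}_K$.

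The main obstacle, and the step that needs genuine care rather than bookkeeping, is the explicit identification of the maps $\Phi_{\alpha,\delta}$ and $\Phi^{\alpha,\delta}$ in terms of Clebsch--Gordan coefficients and quantum $6j$-symbols, and checking that the resulting $\Phi_{\alpha,\delta}$ really do span $\Hom_{\uqr\otimes\uqr}(V[\alpha],\C)$ so that the decomposition is the claimed factorisation into $15j$-type data. Concretely, one has to track how the projections $f^{K_a}_{\alpha(K_a)}$ onto the highest-weight factor $V_{K_a}\subset V_{J_+(K_a)}\otimes V_{J_-(K_a)}$ (given by \eqref{projectionuqr}) interact with the recoupling of the $\uqr\otimes\uqr$-tensor product; this is where the fusion coefficients enter and where the $15j$ symbol emerges after the $\uqr$-recoupling is disentangled from the $\uqr\otimes\uqr$-recoupling. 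The invariance $\Phi_{\alpha,\delta}\in\Hom_{\uqr\otimes\uqr}(V[\alpha],\C)$ is automatic since $T_{V[\alpha]}$, the evaluation/coevaluation maps, and the Clebsch--Gordan morphisms are all morphisms in $\mathcal C_E$; the finiteness of the sum over $\delta$ is automatic from semisimplicity and the finiteness of $\cres$. I expect the bulk of the proof to consist of writing out $\Phi^{\alpha,\delta}$ in the basis $\{e^\alpha_{m_+m_-}\}$ and verifying the stated formula $c(K,\delta)=\Lambda_K\circ f^K_\delta\circ\Phi^{\alpha,\delta}$ by a direct but lengthy Clebsch--Gordan computation, which I would present schematically and relegate the index gymnastics to the statement itself.
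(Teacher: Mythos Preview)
Your approach is correct and essentially the same as the paper's: both start from $\iota_{\alpha,K}=\Lambda_K\circ f^K_\alpha\circ T_{V[\alpha]}$, expand $T_{V[\alpha]}$ via the general formula \eqref{tmap} as $\sum_\delta \Phi^{\alpha,\delta}\circ\Phi_{\alpha,\delta}$ with $\Phi_{\alpha,\delta}$ running over a basis of $\Hom_{\uqr\otimes\uqr}(V[\alpha],\C)$, and read off $c(K,\delta)=\Lambda_K\circ f^K_\alpha\circ\Phi^{\alpha,\delta}$. The paper is simply more explicit at one point you flag as the ``main obstacle'': it writes out $\Phi_{\alpha,\delta}$ concretely for $n=4$ as $[2\delta+1]^{-1/2}\,d_\delta\circ(C_{\alpha_1\alpha_2\delta}\otimes C^\delta_{\;\alpha_3\alpha_4})$ and then uses the factorisation $C^\delta_{\;\alpha\beta}=(C^{K_+}_{\;I_+J_+}\otimes C^{K_-}_{\;I_-J_-})\circ\tau_{23}$ of the $\uqr\otimes\uqr$ Clebsch--Gordan maps to split $\Phi_{\alpha,\delta}$ into self-dual and anti-self-dual $\uqr$ pieces, which is what later feeds into the $15j$ factorisation.

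One clarification on your worry about $f^K_\delta$ versus $f^K_\alpha$: in the paper's actual proof the coefficient is computed as $\Lambda_K\circ\bigl(\bigotimes_a C^{K_a}_{K_{a+}K_{a-}}\bigr)\circ\Phi^{\alpha,\delta}$, and $\bigotimes_a C^{K_a}_{K_{a+}K_{a-}}$ is precisely $f^K_\alpha$. So your formula $c(K,\delta)=\Lambda_K\circ f^K_\alpha\circ\Phi^{\alpha,\delta}$ already matches the proof; the appearance of $f^K_\delta$ in the statement is a notational artefact and no ``absorption'' or relabelling is needed.
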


{\em Proof.} 
We use the identity $T_{V_J}=\delta_{J0}$ for any irreducible $\uqr$ representation  $J\in\mathcal L$
together with the complete reducibility of  the tensor product of $\uqr\otimes\uqr$-modules \ros
\beq\label{repfac}
(\pi_\alpha\otimes\pi_\beta)(\Delta(a)) = \sum_\delta C^{\alpha\beta}_{\; \delta} \circ \pi_\delta(a) \circ  C^\delta_{\; \alpha\beta}\qquad \forall a \in \uqr\otimes\uqr,
\eeq
where $ C^{\delta}_{\; \alpha\beta}: V_\alpha \otimes V_\beta \rightarrow V_\delta$, and $C^{\alpha\beta}_{\;\, \delta} : V_\delta \rightarrow V_\alpha \otimes V_\beta$ are the Clebsch-Gordan maps for the representations
of  $\uqr\otimes\uqr$. They factorise into Clebsch-Gordan maps for the irreducible tilting modules of $\uqr$:  $C^\delta_{\; \alpha\beta} = (C^{K_+}_{\; I_+ J_+} \otimes C^{K_-}_{\; I_- J_-} )\circ \tau_{23}$
for $\alpha=(I_+,I_-), \beta=(J_+,J_-)$, $\delta=(K_+,K_-)$.
From identity \eqref{repfac}, we then obtain 
\beq
T_{V_\alpha\otimes V_\beta}=C^{\alpha\beta}_{\; 0} \, C^{0}_{\; \alpha\beta} =  \frac{\delta_{\alpha\beta}}{[2\alpha +1]} \epsilon^\alpha \, \epsilon_\alpha, 
\eeq
where  $\epsilon_\alpha : V_{I_+} \otimes V_{I_-} \rightarrow \C$ is defined as $\epsilon_\alpha =( \epsilon_{I_+} \otimes \epsilon_{I_-}) \circ \tau_{23}$ and $[2\alpha+1]=[2I_++1][2 I_-+1]$ . 
With this result, it is  immediate to compute $T_V$ for  four-valent tensor products $V[\alpha]=\bigotimes_{a=1}^4 V[\alpha_a]$,  which appear in the definition of the  EPRL intertwiner. The computation proceeds as in the classical case.  For $\alpha=(\alpha_1,...,\alpha_4)$, $T_{V[\alpha]}$ is given as a sum over the representation $\delta$ appearing in the intermediate channel of three-valent decomposition of a $4$-valent intertwiner
\begin{align}
\label{integraleuc}
&T_{V[\alpha]}= 
 \sum_\delta \Phi^{\alpha_1,...,\alpha_4,\delta} \, \Phi_{\alpha_1...\alpha_4, \delta},\\
&\Phi_{\alpha_1...\alpha_4, \delta} = \frac{1}{\sqrt{[2\delta +1]}}\; d_\delta \circ (C_{\alpha_1\alpha_2 \delta} \otimes  C^{\delta}_{\; \alpha_3\alpha_4}) \; \in \Hom\left( \otimes_{a=1}^4 V_{\alpha_a}, \C \right), \nn
\end{align}
where $d_{\alpha} : V^*_{\alpha} \otimes V_{\alpha} \rightarrow \C$ is the evaluation map and $\Phi^{\alpha_1,...,\alpha_4,\delta}$ is the morphism dual to $\Phi_{\alpha_1,...,\alpha_4,\delta}$.
The intertwiner  $\Phi_{\alpha , \delta}$ factorises into intertwiners  of $\uqr$-modules according to
$$
\Phi_{\alpha_1...\alpha_4, \delta} = \left(\Phi_{I_{1+} I_{2+} I_{3+} I_{4+}, I_+} \otimes (\Phi_{I_{1-} I_{2-} I_{3-} I_{4-}, I_-} \right)\circ \tau\qquad \delta=(I_+,I_-), \alpha_a=(I_{a+},I_{a-}),
$$
where $\tau$ is the appropriate flip map and
\beq
\Phi_{I_1\cdots I_4,I} = \frac{1}{\sqrt{[2I +1]}} d_{I} \circ\left( C_{I_1I_2 I} \otimes C^{I}_{\; I_3I_4}\right).
\eeq
The evaluation of $\Phi$ in our basis then yields 
$$
\Phi_{JKLM, I}(e^J_{a} \otimes e^K_b \otimes e^L_c \otimes e^M_{d}) = \left( \begin{array}{cccc} J & K & L & M \\
a & b & c & d \end{array} \right)_{I},
$$
with
\beq
 \left( \begin{array}{cccc} J & K & L & M \\
a & b & c & d \end{array} \right)_{I} = \left( \begin{array}{c} e \\ I
 \end{array} \right| \left. \begin{array}{cc} J & K \\ a & b \end{array} \right) \epsilon_{I ef} \left( \begin{array}{c} f \\ I
 \end{array} \right| \left. \begin{array}{cc} L & M \\ c & d \end{array} \right),
\eeq
in analogy with previous results obtained for the  Lorentzian model.  Using this expression, it is  then immediate to obtain the evaluation of  $\Phi_{\alpha_1...\alpha_4, \delta}$
and to express  the quantum EPRL intertwiner as 
\begin{align}
&\iota_\alpha = \Lambda_K\circ f^K_\alpha\circ T_{V[\alpha]} = \sum_{\delta=(I_+,I_-)} c(K,\delta) \Phi_{\alpha, \delta},\\
&c(K,(I_+,I_-)) = \Lambda_{K_1 ... K_4} \circ \left( \bigotimes_{a=1}^4 C^{K_a}_{K_{a+} K_{a_-}} \right) \circ \left( \Phi_{I_+}^{K_{1+} K_{2+}K_{3+}K_{4+}} \otimes \Phi_{I_-}^{K_{1-}K_{2-}K_{3-}K_{4-}} \right)\nonumber
\end{align}
\hfill $\square$

\medskip

To exhibit explicitly the structural similarities with the Lorentzian case, we construct the EPRL model for $\uqr\otimes\uqr$ by making use of EPRL tensors, which are elements of $\Hom_{\uqr \otimes \uqr}( \C , \bigotimes_{i=1}^n V_{\alpha_i})$. 
The main difference with respect to the Lorentzian model
is that in the Euclidean case,
 the vector space $\Hom_{\uqr \otimes \uqr}( \C , \bigotimes_{i=1}^n V_{\alpha_i})$ is not-trivial. Therefore, EPRL tensors are bona fide intertwiners and are defined simply as the  intertwiners dual  to the EPRL intertwiners in Def.~\ref{equeprl}.

\subsection{Four-simplex amplitude and the quantum spin foam model}
\label{eucamp}

We will now show how the amplitude for the four-simplexes can be defined for the Euclidean model. As the relevant representations form a modular category, there are no problems related to divergences and no regularisation is required. 
We start by defining the colouring and state space of a tetrahedron in analogy to the Lorentzian case. We again suppose that
$M$ is an oriented, closed  triangulated $4$-manifold with sets of simplexes $\Delta^{(n)}$. Consider a $4$-simplex $\sigma$ of
$M$. We parametrise the  set $\Delta^{(3)}_{\sigma}$ of tetrahedra of $\partial \sigma$ as in Section \ref{ampsec}.
 
\begin{definition} {\bf (Colouring)}
A colouring $\alpha: \Delta_{\sigma}^{(2)} \rightarrow \mathrm{Ob}(\cres)$ associates an EPRL representation 
$\alpha_{ab}$ of $\uqr\otimes\uqr$ to each oriented triangle $(ab)$ in $\Delta_{\sigma}^{(2)}$. 
\end{definition} 

The state spaces for the tetrahedra are constructed from the colourings  similarly to  the Lorentzian case. The only difference is that in the Euclidean case, there is no need to implement  the state space as a space of Hopf-algebra valued linear maps. Instead, we identify it with the space of  intertwiners between the trivial representation and the tensor product of the  EPRL representations at their boundary triangles.

\begin{definition} {\bf (State space)}
Let $\alpha: \Delta^{(2)}_{\sigma} \rightarrow  \mathrm{Ob}(\cres)$ denote a colouring. The state space associated with a 
tetrahedron $(a)$ that appears with positive sign in $\partial \sigma$ is read out of the colouring of its boundary $\partial(a) = (ab) - (ac) + (ad) - (ae)$ and is defined by
$$
H_{a} =  \Hom_{\uqr \otimes \uqr} \left( \C, V_{\alpha_{ab}} \otimes V_{\alpha_{ad}} \otimes V_{\alpha_{ac}} \otimes V_{\alpha_{ae}} \right) 
$$
Likewise, the state space for negatively oriented  tetrahedra is given by 
$$
H_{a}^* =   \Hom_{\uqr \otimes \uqr} \left( \C, V_{\alpha_{ae}} \otimes V_{\alpha_{ac}} \otimes V_{\alpha_{ad}} \otimes V_{\alpha_{ab}} \right)
$$
A state is an assignment of an EPRL tensor $\Psi_{a}$ in either $H_{a}$ or $H_a^*$ to each tetrahedron $(a)$ in 
$\Delta_{\sigma}^{(3)}$. 
\end{definition}

We are now ready to define the amplitude, or partition function, for the $4$-simplex $\sigma$.  This is done by the same diagrammatic method as in the Lorentzian case. 

\begin{definition} {\bf ($4$-simplex amplitude)}
\label{amplitudedef}
Let $\alpha: \Delta_{\sigma}^{(2)} \rightarrow \mathrm{Ob}(\mathcal{C})$ denote a colouring and $\Psi$ a state. 
The amplitude  for a $4$-simplex $\sigma$   is the linear map
\begin{align}
&\mathcal{Z}_{\sigma} : H_{a}^* \otimes H_{b} \otimes H_{c}^* \otimes H_{d} \otimes H_{e}^* \rightarrow \C,\nonumber 
\end{align}
obtained by evaluating the spin network diagram in Figure \ref{amplitude} corresponding to the complete graph with five vertices $\Gamma$, that is, $\mathcal{Z}_{\sigma}(\Psi_a \otimes ... \otimes \Psi_e) = ev(\Gamma)$.
\end{definition}

The only difference to the Lorentzian case is the evaluation of the diagram. In the Euclidean case, there is no need to regularise the evaluation and $ev(\Gamma)$ is defined by the quantity that corresponds to $\phi(\Gamma)$  in the Lorentzian case. The diagrammatic calculus is defined in the same way as in the Lorentzian section. However,  the vertices of the diagram are now labelled by Euclidean quantum EPRL tensors, the  lines are connected  via  the invariant bilinear form
$$
\beta_{\alpha}(v,w) = \epsilon_{\alpha}(w)(v), \;\;\;\; \mbox{with} \;\;\;\; \epsilon_{\alpha} = (\epsilon_{I_+} \otimes \epsilon_{I_-} )\circ \tau_{23}, \alpha=(I_+,I_-)\qquad \forall v,w \in V_{\alpha},
$$
and the crossing diagram corresponds to the braiding $c_{\alpha, \beta}$ obtained by tensoring two copies of the  braiding \eqref{braidingres} of the category $\mathcal{C}_{res}$. Using these
structures associated with the diagram components, one can then compute the quantity $ev(\Gamma)$ with $\Gamma$ depicted in Figure \ref{amplitude}. The computation is analogous to the Lorentzian case, but all series arising there are replaced by finite sums. The amplitude therefore converges 
without further regularisation.

Using the factorisation of the quantum EPRL intertwiner in Def.~\ref{decomp}, one can express the amplitude for the four-simplexes in terms of quantum $15j$ symbols\footnote{Note that the notion of  $15j$ symbol used here is closely related to but not equivalent to the standard definition. The standard  $\uqr$ $15j$ symbol is obtained by composing five Clebsch-Gordan morphisms for $\uqr$ and closing the resulting expression with a quantum trace. As it is our aim to explicitly exhibit the similarities with the Lorentzian model, we work with a different expression, which is the analogue of the quantities called $15j$ symbols  there and in the classical EPRL models.} for $\uqr$.

\begin{definition} {\bf (Quantum $15j$ symbol)}
Let $I : \Delta_{\sigma}^{(2)} \rightarrow \mathrm{Ob}(\mathcal{C}_{res})$ be a $\uqr$ colouring and $\Lambda : \Delta_{\sigma}^{(3)} \rightarrow \Hom_{\uqr}(\mathcal{C}_{res})$ a state constructed from the representation theory of $\uqr$. The quantum $15j$ symbol is defined as
$
(15j)_q(I;\Lambda) = \mathcal{Z}_{\sigma}(\Lambda_a \otimes ... \otimes \Lambda_e). 
$
The quantum $15j$ symbol for $\uqr\otimes\uqr$ is the product  of the  $15j$-symbols for the two copies of $\uqr$
$$
(15j)_q(\alpha;\Lambda) = (15j)_q(I_+,\Lambda_+) \, (15j)_q(I_-,\Lambda_-)\quad\text{where}\;\alpha=(I_+,I_-),\; \Lambda = (\Lambda_+ \otimes \Lambda_- )\circ \tau.
$$
\end{definition}

Using this definition, it is immediate to prove the following proposition.

\begin{proposition}
The amplitude for the $4$-simplex $\sigma$ can be expressed in terms of a quantum $15j$ symbol for $\uqr$ as follows \ros
\beq
\mathcal{Z}_{\sigma}(\Psi_a \otimes ... \otimes \Psi_e) = \prod_{a=0}^4 \sum_{\delta_a} c_a(K_{a},\delta_a) \, (15j)_q(\alpha_a,\Phi_{\alpha_a,\delta_a}), \nn
\eeq
where we  use the shorthand notations $K_a=(K_{ab})_{b \neq a}$, $\alpha_a=(\alpha_{ab})_{b \neq a}$, and the intertwiners $\Phi_a$ and the constants $c_a$ are defined in proposition \ref{decomp}.
\end{proposition}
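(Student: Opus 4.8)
The plan is to substitute the factorisation of the quantum EPRL intertwiner established in Proposition \ref{decomp} into the definition of the four-simplex amplitude, and then to exploit the fact that the diagrammatic calculus for $\uqr\otimes\uqr$ splits into two independent copies of the diagrammatic calculus for $\uqr$. By Definition \ref{amplitudedef}, the amplitude $\mathcal{Z}_{\sigma}(\Psi_a \otimes \cdots \otimes \Psi_e)$ is the evaluation $ev(\Gamma)$ of the complete graph $\Gamma$ on five vertices, with an EPRL tensor $\Psi_a$ at each vertex, the invariant bilinear form $\beta_{\alpha}$ on each edge, and a braiding $c_{\alpha,\beta}$ at each crossing. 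Since EPRL tensors are dual to EPRL intertwiners, Proposition \ref{decomp} provides for each tetrahedron $a$ an expansion $\Psi_a = \sum_{\delta_a} c_a(K_a,\delta_a)\,\Phi^{\alpha_a,\delta_a}$, where $\Phi^{\alpha_a,\delta_a}$ is the intertwiner dual to $\Phi_{\alpha_a,\delta_a}$ and factorises into a self-dual and an anti-self-dual piece, $\Phi^{\alpha_a,\delta_a} = \left(\Phi^{\cdot}_{I_+}\otimes \Phi^{\cdot}_{I_-}\right)\circ\tau$ with $\delta_a = (I_+,I_-)$.

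First I would use multilinearity of the diagrammatic evaluation in the tensors at the vertices to pull the sums over $\delta_1,\dots,\delta_5$ and the scalars $c_a(K_a,\delta_a)$ outside $ev(\Gamma)$, reducing the problem to evaluating $\Gamma$ with the canonical intertwiners $\Phi_{\alpha_a,\delta_a}$ at the vertices. Next I would show that this reduced diagram factorises over the two copies of $\uqr$. All three structural ingredients decompose accordingly: the vertices $\Phi_{\alpha_a,\delta_a}$ by the factorisation above, the bilinear form through $\epsilon_{\alpha} = (\epsilon_{I_+}\otimes \epsilon_{I_-})\circ \tau_{23}$, and the braiding $c_{\alpha,\beta}$ as the tensor product of two copies of the braiding \eqref{braidingres}. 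Tracking the flip maps through the composition of the diagram, one obtains that the evaluation of $\Gamma$ with the $\uqr\otimes\uqr$ data equals the product of the evaluation of the same graph with the $\uqr$ "+" data and the evaluation with the "$-$" data. By the definition of the quantum $15j$ symbol these are exactly $(15j)_q(I_+,\Lambda_+)$ and $(15j)_q(I_-,\Lambda_-)$, whose product is $(15j)_q(\alpha_a,\Phi_{\alpha_a,\delta_a})$. Reinstating the sums and coefficients then yields the asserted identity $\mathcal{Z}_{\sigma}(\Psi_a\otimes\cdots\otimes\Psi_e) = \prod_{a}\sum_{\delta_a} c_a(K_a,\delta_a)\,(15j)_q(\alpha_a,\Phi_{\alpha_a,\delta_a})$.

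The main obstacle I anticipate is bookkeeping: one must check carefully that the various flip maps --- those internal to each $\Phi_{\alpha_a,\delta_a}$, the $\tau_{23}$ inside each bilinear form, and the flip entering the $\uqr\otimes\uqr$ braiding --- can be commuted past one another and combined into a single global permutation of the $2\times(\#\text{edges})$ tensor slots that separates the "+" factors from the "$-$" factors cleanly, with no residual reordering or sign obstructing the split and with the cilia and edge orientations inherited correctly on each copy. Once this combinatorial rearrangement is secured, what remains is the routine observation that the diagrammatic evaluation is multilinear and that a closed $\uqr$ diagram built from Clebsch-Gordan morphisms, $\epsilon$-intertwiners and braidings evaluates by definition to the corresponding $15j$ symbol. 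Unlike the Lorentzian case, no regularisation is needed and all sums are finite, so the interchange of summation and diagrammatic evaluation is unproblematic; this is what makes the proof "immediate" once Proposition \ref{decomp} and the factorised diagrammatic calculus are in place.
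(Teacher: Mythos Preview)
Your proposal is correct and follows exactly the line the paper intends: the proposition is stated as ``immediate'' from Proposition~\ref{decomp} and the definition of the quantum $15j$ symbol, and your argument---substitute the expansion $\Psi_a=\sum_{\delta_a} c_a(K_a,\delta_a)\Phi^{\alpha_a,\delta_a}$, use multilinearity, and then split the diagram into $+$ and $-$ factors via the factorisation of $\Phi_{\alpha,\delta}$, $\epsilon_\alpha$, and the braiding---is precisely that unpacking. Your caveat about tracking the flip maps is well placed but, as you say, purely combinatorial and poses no genuine obstruction.
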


This  proposition shows that the amplitude for the $4$-simplexes factorises in a way that is directly  analogous to the classical case. Note that  such factorisation would also occur in the Lorentzian model if one did not remove one of the Haar measures at the five vertices. In physical terms, the removal of one Haar measure amounts to a gauge fixing, which ensures the convergence of the model but breaks its symmetry. Without this gauge fixing, one would obtain a divergent  expression for the amplitude that factorises {\em formally} into $15j$ symbols.

Using the definitions introduced in the previous section, we can now define the quantum EPRL spin foam model. 

\begin{definition}
Let $M$ be a closed oriented triangulated $4$-manifold with sets of $n$-simplexes $\Delta^{(n)}$. Choose a fixed linear order of the vertices of $M$. Let $\sigma \in \Delta^{(4)}$ be a $4$ simplex of $M$ and let $\Delta^{(2)}_{\sigma}$ and $\Delta^{(3)}_{\sigma}$ denote its set of triangles and tetrahedra respectively. Let $\alpha: \Delta^{(2)}_{\sigma} \rightarrow \mathrm{Ob}(\mathcal{C})$ denote the corresponding EPRL-colouring, $\Psi : \Delta^{(3)}_{\sigma} \rightarrow \Hom(\mathcal{C})$ the associated EPRL-state, and let the 4-simplex amplitudes $\mathcal{Z}_{\sigma}$ be given by Definition \ref{amplitudedef}. The partition function for the quantum EPRL model associated to $M$ is given by 
\beq
\mathcal{Z}(M) = \sum_{K,J} \prod_{t} [2K_{t} + 1] \prod_{\sigma} \mathcal{Z}_{\sigma} \left(\alpha(K_{t(\sigma)}),\Psi_{\sigma}(J_{tet(\sigma)}) \right).
\eeq
Here, the sum ranges over all $K$ in $\mathcal{L}$ and over the elements of a basis of $U_q(\su(2))$-intertwiners $(\Lambda_{K,J})_J$ entering the definition of the EPRL tensors for each tetrahedron of $M$. The state $\Psi_{\sigma}$ for the $4$-simplex $\sigma$ is given by $\Psi_{\sigma} = \bigotimes_{tet(\sigma) \in \partial \sigma} \Psi_{tet(\sigma)}$, where $\Psi_{tet(\sigma)}$ is the state associated to the tetrahedron $tet(\sigma)$ and the order in the tensor product is the one given in Definition \ref{amplitudedef}. Finally, the product is over all the triangles $t$ and $4$-simplexes $\sigma$ of $M$.
\end{definition}
As an immediate consequence of the definition of the model we obtain that the partition function $\mathcal Z (M)$ converges. This follows directly from the fact that the label set $\mathcal L$ contains only a finite number of representations. Consequently, the expression for $\mathcal Z (M)$ involves only a finite sums and hence does not exhibit  any of the convergence issues associated with the classical models. 

\section{Discussion and conclusions}

\subsubsection*{Summary}

In this paper, we construct the $q$-deformed spin foam models  corresponding to the Lorentzian and Euclidean EPRL models. In the first part of the paper, we concentrate on the Lorentzian model. We review the construction of the quantum Lorentz group as the quantum double of $U_q(\su(2))$ together with its harmonic analysis given in \cite{PE,PE2}. Using these results, we generalise the Lorentzian EPRL intertwiner to the quantum Lorentz group by means of a Haar measure. We prove that the quantum EPRL intertwiner converges and show that, in contrast to the intertwiners arising in the BC model \cite{BC,BC2},  it is not invariant under braiding. We then provide a consistent definition of an amplitude for the four-simplexes using a graphical calculus for the braided category of representations of the quantum Lorentz group. The resulting spin foam model for a compact triangulated four-manifold $M$ converges.

In the Euclidean case, we  employ different methods to generalise the Euclidean EPRL intertwiner. Using the language of categories, we provide a generalisation of the classical intertwiner to the modular category of representations of $\uqr$ at root of unity, obtained via the tilting module construction. The resulting intertwiner converges trivially and is also not invariant under braiding. We then define an amplitude for the four-simplexes analogously to the Lorentzian model by using the appropriate graphical calculus. As a result, we obtain that the amplitude can be expressed in terms of quantum $15j$ symbols and fusion coefficients. As it only involves only sums over a finite set of representation labels, the resulting spin foam model converges.

\subsubsection*{Physical interpretation}

Due to their  convergence properties, $q$-deformed spin foam models can be viewed as infra-red regularisations  of the corresponding classical models. The divergences in the classical models  associated with large values of the representation labels disappear as there is a cut-off which eliminates these representations.  This cut-off on the representation labels can be interpreted as the presence of a horizon and appears to be related to the cosmological constant.

The correspondence between $q$-deformations and the cosmological constant is well-known
and understood in detail in the three-dimensional case.  In three dimensions, the Turaev-Viro invariant is related to Euclidean gravity with a positive cosmological constant. This relation can be inferred  from  two perspectives. The first one is the investigation of the semi-classical limit of the model via the asymptotics of the amplitude for the three-simplexes (given by a quantum $6j$ symbol). This limit involves  the physical regime where the spins $j$ are bounded from below by the Planck length $l_p$ and from above by the cosmological length $l_c=1/\sqrt{\Lambda}$ that corresponds to the de-Sitter horizon. The asymptotics of the quantum $6j$ symbol has been computed in \cite{M} and 
yields the cosine of the Regge action with a cosmological term. 

An alternative way to relate the Turaev-Viro model to three-dimensional gravity with positive cosmological constant is obtained from the correspondence between three-dimensional gravity, BF theory and Chern-Simons gauge theory. Three-dimensional gravity with Euclidean signature and positive cosmological constant can either be expressed as a $\SU(2)$ BF theory with cosmological constant $\Lambda>0$ or in terms of two   $\SU(2)$-Chern-Simons actions  \cite{Witten1} with level $k={\pi}/{\sqrt{\Lambda}}$. As a result, the partition function of a three-dimensional BF theory with cosmological term defined on a three-manifold $M$ is equal to the modulus square of the partition function of Chern-Simons theory on $M$ : $\mathcal{Z}_{BF, \Lambda}(M) = | I_{CS, k}(M) |^2$. On the other hand, it  was shown by Witten \cite{Witten2} that the  amplitude $I_{CS, k}(M)$ for a $\SU(2)$ Chern-Simons theory on $M$ is equal to the Reshetikhin-Turaev \cite{RTuraev} invariant for $M$ and $\uqr$. 
The fact that the Turaev-Viro sum of a manifold $M$ is equal to the modulus square of the Reshetikhin-Turaev invariant of $M$, relates the Turaev-Viro state sum to BF theory with a positive cosmological constant: \beq\mathcal{Z}_{TV}(q,M) = \mathcal{Z}_{BF, \Lambda}(M).\eeq By reintroducing natural units, one then finds that the deformation parameter $q$ of the Turaev-Viro model is related to the cosmological constant through the relation $q=\mbox{exp}( \mathrm i l_p / l_c)$.

In four dimensions, the situation is less clear, but there are indications for a similar relation between $q$-deformation and the cosmological constant in the context of the Crane Yetter invariant.  The relation between cosmological constant and $q$-deformation in four dimensions was first suggested in \cite{Lee, Lee2}. 
The Crane-Yetter invariant appears to be related to a four-dimensional $BF$ theory with cosmological constant. The first step in establishing this relation, shown for instance in \cite{Roberts}, is the proportionality relation between the Crane-Yetter state sum $\mathcal{Z}_{CY}(q,M)$ on a manifold $M$ with boundary $\partial M$ and the Chern-Simons amplitude $I_{CS,k}(\partial M)$ with level $k$ defined on the boundary three-manifold  $\partial M$: $Z_{CY}(q,M) \propto I_{CS,k}(\partial M)$. 
As in the three-dimensional case, the deformation parameter $q$ in the Crane Yetter state sum is related to the Chern-Simons level $k$ via $k={\pi}/{\sqrt{\Lambda}}$. 

If one considers  the path integral of $\SU(2)$ BF theory with cosmological constant $\Lambda > 0$ on a four-manifold $M$, one can integrate out the degrees of freedom associated with the $B$ field, which yields  the second Chern form. This form is in turn the differential of the Chern-Simons three-form and the path integral reduces to the path integral for Chern-Simons theory on the boundary $\partial M$ of $M$. Together with the relation between Crane-Yetter and Chern-Simons amplitudes,  this result gives a proportionality  relation between the Crane Yetter invariant and the state sum for $BF$ theory with positive cosmological constant   \beq \label{propo} \mathcal{Z}_{CY}(q,M) \propto \mathcal{Z}_{BF,\
 \Lambda}(M), \eeq where $q$ is given by $q=\mbox{exp} (\mathrm i l_p^2 / l_c^2)$ with  $l_c =1/\sqrt{\Lambda}$.

The models considered in this paper are $q$-deformed versions of the EPRL spin foam models. As the latter  are models for quantum gravity with vanishing cosmological constant, 
 it is plausible that  the  $q$-deformed models as models should
 describe some aspects of four-dimensional quantum gravity in Lorentzian or Euclidean de-Sitter space. Their deformation parameters should then be related to the cosmological constant via
 $$
q = \exp (- l_P^2 / l_c^2) \;\;\;\; \mbox{and} \;\;\;\; q = \exp(\mathrm  i l_P^2 / l_c^2),
$$
for, respectively,  the Lorentzian and Euclidean model. Interestingly, this relation leads to a bound on the area spectrum. The area spectrum  for a triangle $t$ of $M$ that is coloured by a representation  $K$ is given by
$$
A(t) = 8 \pi G \hbar \gamma \sqrt{K(K+1)}.
$$
With the relation between  the deformation parameter $q$ and the cosmological constant given above, one obtains a bound on this spectrum in terms of the cosmological length $l_c$ in the regime where $l_P<<l_c$:
\beq
A(t) \leq 32 \pi^2 l_c^2, \;\;\;\; \mbox{and} \;\;\;\; A(t) \leq \gamma 8 \pi^2 l_c^2,
\eeq
for the Lorentzian and Euclidean models, respectively. A detailed study of the asymptotic properties of the models should shed light on this question.

\subsubsection*{Open questions}

It would be interesting to explore in more depth the physical interpretation of the deformation parameter $q$ in the two models. In particular, this includes the question if  the $q$-deformation  can be related rigourously  to the cosmological constant $\Lambda$  along the lines described above.  For this one would need a notion of semi-classical limit and hence a detailed analysis of the asymptotics of the four-simplex amplitude. For the classical EPRL models this has been achieved in \cite{not1,notts3,notts4}  and \cite{fc1,fc2}. Generalising this analysis to the two $q$-deformed models constructed in this paper would require a suitable notion of coherent states, a generalisation of coherent intertwiners \cite{LS} and a careful analysis of their properties. We expect that such as an asymptotic analysis would give rise to the Regge action augmented with a volume term. 

\section*{Acknowledgements}
The research of the authors is funded by the German Research Foundation (DFG)
through the Emmy Noether fellowship ME 3425/1-1.  While completing this paper, both authors were also members
of the Collaborative Research Center 676 ``Particles, Strings and the Early
Universe''.  We are grateful to Philippe Roche and Karim Noui for helpful discussions.

\subsection*{Note added in proof.}
During the proofreading of this paper, we became aware that related results were derived independently by Muxin Han \cite{muxim}.

\end{document}